\declaretheorem[name=Theorem, style=plain, numberwithin=section]{theorem}
\declaretheorem[name=Definition, numberwithin=section]{definition}
\declaretheorem[name=Lemma, sibling=theorem]{lemma}
\declaretheorem[name=Remark, sibling=theorem]{remark}
\newif\ifconf
\begin{document}
\title{PREAMBLE: Private and Efficient Aggregation via Block Sparse Vectors}
\author{Hilal Asi\footnote{Apple} \and Vitaly Feldman\footnotemark[1] \and Hannah Keller\footnote{Aarhus University. Research done while at Apple.} \and Guy N. Rothblum\footnotemark[1] \and Kunal Talwar\footnotemark[1]}

\maketitle
\begin{abstract}

We revisit the problem of secure aggregation of high-dimensional vectors in a two-server system such as Prio. These systems are typically used to aggregate vectors such as gradients in private federated learning, where the aggregate itself is protected via noise addition to ensure differential privacy. Existing approaches require communication scaling with the dimensionality, and thus limit the dimensionality of vectors one can efficiently process in this setup. %

We propose PREAMBLE: {\bf Pr}ivate {\bf E}fficient {\bf A}ggregation {\bf M}echanism via  {\bf BL}ock-sparse {\bf E}uclidean Vectors. PREAMBLE builds on an extension of distributed point functions that enables communication- and computation-efficient aggregation of {\em block-sparse vectors}, which are sparse vectors where the non-zero entries occur in a small number of clusters of consecutive coordinates. We show that these block-sparse DPFs can be combined with random sampling and privacy amplification by sampling results, to allow asymptotically optimal privacy-utility trade-offs for vector aggregation, at a fraction of the communication cost. When coupled with recent advances in numerical privacy accounting, our approach incurs a negligible overhead in noise variance, compared to the Gaussian mechanism used with Prio.

\end{abstract}

\section{Introduction}
\label{sec:intro}

Secure Aggregation is a fundamental primitive in multiparty communication, and underlies several large-scale deployments of federated learning and statistics. Motivated by applications to private federated learning, we study the problem of aggregation of high-dimensional vectors, each of bounded Euclidean norm. We will study this problem in the same trust model as Prio~\cite{Corrigan-GibbsB17}, where at least one of two servers is assumed to be honest. This setup has been deployed at large scale in practice and our goal in this work is to design algorithms for estimating the sum of a large number of high-dimensional vectors, while keeping the device-to-server communication, as well as the client and server computation small.

This problem was studied in the original Prio paper, as well as in several subsequent works~\cite{BBCGI19,Talwar22,AddankiGJOP22,RatheeSWP22,RohrigU23,RothblumOCT24}. In Prio, the client creates additive secret-shares of its vector and sends those to the two servers. One of the shares can be replaced by a short seed, and thus the communication out of the client for sending $\fulldim$-dimensional vector is $\fulldim + O(1)$ field elements. Additionally, such deployments typically require resistance to malicious clients, which can be ensured by sending zero-knowledge proofs showing that the secret-shared vector has bounded norm. Existing protocols~\cite{RothblumOCT24} allow for very efficient proofs that incur negligible communication overhead.%

In recent years, the size of models that are used on device has significantly increased. Recent work has shown that in some contexts, larger models are easier to train in the private federated learning setup~\cite{AzamPFTSL23,ChauhanCTTN24}. While approaches have been developed to fine-tuning models while training a fraction of the model parameters (e.g.~\cite{Hu22}), increasing model sizes often imply that the number of trained parameters is in the range of millions to hundreds of millions. The communication cost is further exacerbated in the secret-shared setting, as one must communicate $\fulldim$ field elements, which are typically 64 or at least 32 bits, even when the gradients themselves may be low-precision. As an example, with a 64-bit field, an eight million-dimensional gradient will require at least 64MB of communication from each client to the servers.

In applications to private federated learning, noise is typically added to the sum of gradients from hundreds of thousands of devices, to provide a provable differential privacy guarantee. Even absent added noise, the aggregate is often inherently noisy in statistical settings. In such setups, computing
the exact aggregate may be overkill, and indeed, previous work in the single-trusted-server setting has proposed reducing the communication cost by techniques such as random projections~\cite{SureshYuKuMc17,VargaftikBaPoGaBeYaMi21,VargaftikBaPoMeItMi22,AsiFNNT23,ChenIKNOX24}. However, random projections increase the sensitivity of the gradient estimate, and hence naively, would require more noise to be added to protect privacy. %

This additional overhead can be reduced if each client picks a different random subset of coordinates, and this subset remain hidden from the adversary. Thus while each client could send a {\em sparse} vector, the sparsity pattern must remain hidden. This constraint prevents any reduction in communication costs when using Prio. A beautiful line of work~\cite{GilboaI14,EC:BoyGilIsh15,CCS:BoyGilIsh16,BonehBCGI21} on {\em Function Secret Sharing} addresses questions of this kind. In particular, {\em Distributed Point Functions} (DPFs) allow for low-communication secret-sharing of sparse vectors in a high-dimensional space. However, this approach is primarily designed to allow the servers to efficiently compute any one coordinate of the aggregate. The natural extension of their approach to aggregating high-dimensional vectors incurs a non-trivial overheads for the parameter settings of interest, where the sparse vectors have tens of thousands of non-zero entries.

In this work, we address this problem of high-dimensional vector aggregation in a two-server setting. We give the first protocol for this problem that has sub-linear communication cost, reasonable %
client and server computation costs, and gives near-optimal trade-offs between utility and (differential) privacy. Our approach builds on the ability to efficiently handle the class of {\em $k$-block sparse} vectors %
(see~\cref{fig:sparsity}). For a parameter $\blocksize$, we group the $\fulldim$ coordinates into $\numblocks = \fulldim/\blocksize$ blocks of $\blocksize$ coordinates each. A $k$-block-sparse vector is one where at most $k$ of these blocks take non-zero values. We make the following contributions:%

\begin{itemize}
\setlength{\itemsep}{2pt}
    \setlength{\parskip}{0pt}
  \item We identify block-sparseness as the ``right'' abstraction, that effectively balances the expressiveness needed for accuracy with the structure needed for efficient cryptographic aggregation protocols. 
  \item We propose an extension of the distributed point function construction of~\cite{CCS:BoyGilIsh16} that can secret-share $k$-block-sparse vectors while communicating $\approx kB$ field elements. For typical parameter settings, our approach that uses blocks is significantly more communication- and computation-efficient, compared to schemes for sparse vectors.
  \item We show how an aggregation scheme for $k$-block-sparse inputs combines with sampling analyses for utility, and privacy-amplification-by-sampling analyses for privacy accounting. Combined with recent advances in numerical privacy accounting, we show that for reasonable settings of parameters, our approach leads to privacy-utility trade-offs comparable to the Gaussian mechanism, while providing significantly smaller communication costs. For instance, in the case of an eight million-dimensional vector with 64 bit field size, our approach reduces the communication from $64MB$ to about $1MB$, while increasing the noise standard deviation by about $10\%$ for $(1,10^{-6})$-DP when aggregating $100K$ vectors.
\end{itemize}

\begin{figure}
\begin{center}
\begin{tikzpicture}[scale=0.4, font=\small]

\fill[gray] (3,3) rectangle (3.5, 3.5);
\draw (5,2) node {(a) 1-sparse} (5,2);

\filldraw[gray] (17,3) rectangle (17.5, 3.5);
\filldraw[gray] (20.5,3) rectangle (21, 3.5);
\filldraw[gray] (23,3) rectangle (23.5, 3.5);
\draw (20,2) node {(b) $k$-sparse} (20,2);

\filldraw[gray] (4,0) rectangle (6,0.5);
\draw[thick] (2,0) -- (2,0.5);
\draw[thick] (4,0) -- (4,0.5);
\draw[thick] (6,0) -- (6,0.5);
\draw[thick] (8,0) -- (8,0.5);
\draw (5,-1) node {(c) $1$-block-sparse} (5,-1);

\filldraw[gray] (17,0) rectangle (19,0.5);
\filldraw[gray] (21,0) rectangle (23,0.5);
\filldraw[gray] (23,0) rectangle (25,0.5);
\draw[thick] (17,0) -- (17,0.5);
\draw[thick] (19,0) -- (19,0.5);
\draw[thick] (21,0) -- (21,0.5);
\draw[thick] (23,0) -- (23,0.5);
\draw (20,-1) node {(d) $k$-block-sparse} (20,-1);

\draw[thick] (0,0) rectangle (10,0.5);
\draw[thick] (15,0) rectangle (25,0.5);
\draw[thick] (0,3) rectangle (10,3.5);
\draw[thick] (15,3) rectangle (25,3.5);

\end{tikzpicture}
\caption{\small Possible non-zero patterns (gray) of $1$-sparse, $k$-sparse, $1$-block-sparse and $k$-block-sparse vectors.}\label{fig:sparsity}
\end{center}
\end{figure}
\subsection*{Overview of Techniques}

Compressing high-dimensional vectors for noisy aggregation is a standard %
consequence of the beautiful advances in randomized sketching algorithms. For example, one can use standard random projection techniques, including efficient versions of the Johnson-Lindenstrauss lemma~\cite{JohnsonL84,AilonC06}, to construct a sparse unbiased estimator for each vector. A random rotation of the vector, followed by subsampling an appropriate number (say $m \approx 50,000$) of coordinates is sufficient to get a good approximation to the aggregate. Indeed this approach has been proposed for vector aggregation in prior works.

This approach has two significant challenges. Firstly, to get a good approximation for practical parameters, the number of non-zero coordinates $m$ is in the tens of thousands. While sparse vectors can be communicated efficiently using distributed point functions (DPF) constructions in the literature, the overheads in terms of client computation, communication, and server computation are fairly significant. Secondly, this reduction to sparse vectors significantly increases the {\em sensitivity} of the final estimate and results in larger privacy noise. Indeed for aggregating vectors of norm $1$ in $\Re^{\fulldim}$, using a $\gamma \fulldim$-sparse vector increases the sensitivity, and thus the required privacy noise, by a multiplicative $1/\gamma$. This leads to an unappealing trade-off between communication and privacy noise.

We show that constraining the sparsity structure to have $k$ non-zero blocks (each of size $B=m/k$) instead of $m$ non-zero coordinates overall can allow for significant performance gains for aggregation protocols in the two-party setting. We further show that this constraint comes at little cost: analyses of privacy amplification by sampling allow us to handle the increased sensitivity with little impact to the privacy-utility tradeoff. We give some details of each of these pieces next.

We show how to efficiently secret-share $k$-block-sparse vectors. Note that any such vector has at most $m$ non-zero coordinates, and thus is a sum of $m$ $1$-sparse vectors. Each of these can be shared using the Distributed Point Function (DPF) construction of~\cite{CCS:BoyGilIsh16}. This approach however leads to a communication cost of $m\lambda\log \fulldim$, where $\lambda$ is the security parameter. Thus e.g. when $\fulldim$ is more than a million and $\lambda=128$, the communication cost is at least 300 bytes per non-zero coordinate of the vector. For $m=50,000$, this amounts to $15MB$ of communication. (This basic scheme also incurs a very significant compute overhead, with the server cost being at least $O(m\fulldim)$, though that can be reduced by using optimizations based on probabilistic batch codes \cite{CCS:BCGI18} to $O(\fulldim+ m\log\fulldim)$ at a small additional overhead in communication cost.) Thus for $m$ being in the tens of thousands, approaches that only exploit sparsity (rather than {\em block}-sparsity) are far from feasible.

\ifconf
\begin{wrapfigure}[20]{L}{0.5\textwidth}
\else
\begin{figure}
\fi
    \begin{center}\vspace{-25pt}
    \includegraphics[width=0.7\linewidth]{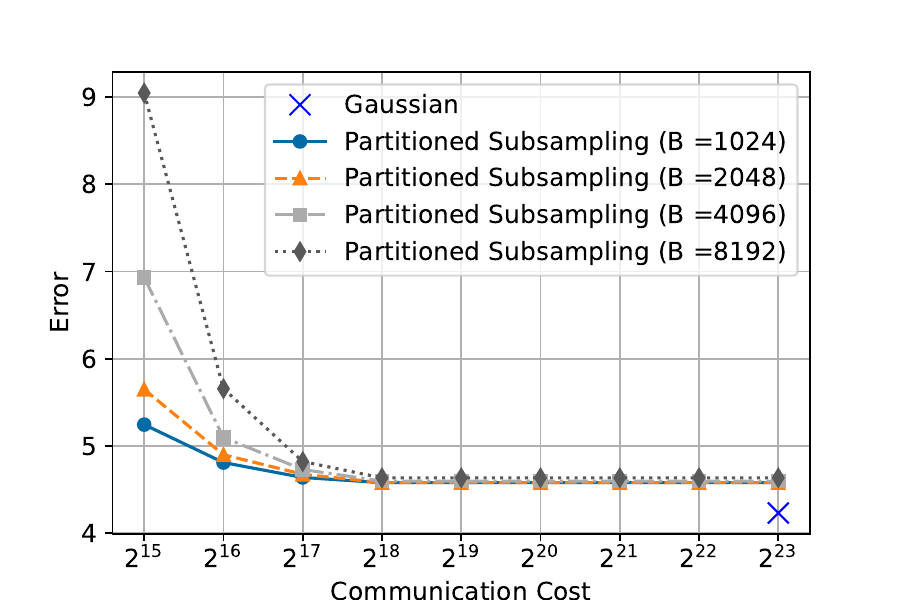}
    \caption{\small{The trade-off between expected squared error and per-client communication, when computing the sum of $n=10^5$ vectors in $\fulldim=2^{23}$ dimensions with $(1.0, 10^{-6})$-DP. The curves show our algorithm using different block sizes $B \in [10^3,10^4]$, and the blue 'x' shows the baseline approach of sending the whole vector. %
    We limit ourselves to algorithms for which the server run time is linear or near-linear in $\fulldim$.}}
    \label{fig:intro_plot}
   \end{center}
\ifconf
\end{wrapfigure}
\else
\end{figure}
\fi
We first observe that a simple modification of the DPF construction can exploit the block structure, reducing the communication cost to $O(m + k\lambda\log\fulldim)$. This also allows a bulk of the PRG evaluations to be in {\em counter mode} which can be more computationally efficient. This approach still requires $O(k\fulldim)$ server-side PRG evaluations. The use of probabilistic batch codes can reduce this computational overhead. We propose a different approach that uses cuckoo hashing within the DPF construction, that reduces the server's computation to $O(\fulldim)$ and reduces by $3\times$ the number of PRG evaluations the server has to do. See Section \ref{sec:related} for a comparison with prior work and with a concurrent and independent work.

We also provide efficient zero-knowledge proofs of validity for our construction of secret-shared $k$-block-sparse vectors. This allows the client to prove that all but $k$ of the blocks are zero vectors. The proofs do not change the asymptotic communication or the client (prover) or server (verifiers)  runtimes. In particular, the client runtime and communication depend only on the sparsity and on $\numblocks$, and not on $\fulldim$. The proof system combines techniques from prior works~\cite{BonehBCGI21} with a novel interactive proof showing that secret shares of PRG 
seeds in a DPF tree only differ in a small number of tree nodes (when the seeds are identical the PRG outputs ``cancel  out'', corresponding to a zero block). Finally, we can use an efficient zero-knowledge proof system from prior work ~\cite{RothblumOCT24} to also prove that  that the Euclidean norm of the vector of non-zero values is small.

As discussed above, naively applying random projection techniques to reduce communication results in an unappealing trade-off between communication and privacy noise. We avoid this overhead by exploiting the fact that the sparsity pattern of vectors is hidden from each server in our construction, as in some recent work~\cite{chen2023privacy,ChenIKNOX24}. This allows us to use {\em privacy amplification by sampling} techniques, at a block-by-block level, coupled with composition. For a large range of parameters, it allows us to reduce the overhead in the privacy noise, and asymptotically recover the bounds one would get if we communicated the full vector. \cref{fig:algsketch} presents an informal outline of our approach. We defer to \cref{sec:sampling} a discussion of different approaches to sampling, and their trade-offs. Coupled with numerical privacy analyses, we get the  reduction in communication cost essentially for free (\cref{fig:intro_plot}). %

 While our approach of subsampling blocks rather than coordinates is motivated here for the two-server setting, some of the benefits extend easily to the single trusted server setting where the cost of sending the index would now get amortized over a large block. The privacy analysis of the subsampling approaches requires some control of the norm of each coordinate, which gets relaxed in our approach to a bound on the $\ell_2$ norm of each block. The latter is a significantly weaker requirement. Our privacy analysis shows that block-based sampling nearly matches the privacy-utility trade-off of the Gaussian mechanism for a large range of block sizes.

\begin{Boxfig}{\small{Informal Description of our approach to Approximate Aggregation via $k$-block sparse vectors}}{algsketch}{Aggregate (Informal)}{{\bf Client Algorithm.} Input:  vector $v \in \Re^\fulldim$. Parameters: dimension $\fulldim$, blocksize $B$, sparsity $k$.}
    \begin{enumerate}
        \item Randomly select $I \subseteq \{1,2,\ldots,\fulldim/\blocksize\}$ with $|I|=k$.
        \item Define a $k$-block sparse vector $w$ which is equal to $(\tfrac{\fulldim}{k\blocksize})\cdot v$ in the blocks indexed by $I$ (i.e. coordinates $(j-1)B+1,\ldots,jB$ for $j \in I$) and $0$ everywhere else. This rescaling ensures that the expected value of $w$ is $v$.
        \item Use $k$-block-sparse DPF construction to communicate $w$ to the server. This needs communication $O(kB+k\lambda \log\fulldim/B)$.
    \end{enumerate}
    {\bf Servers} will decrypt to recover (secret-shares of) each vector $w$. They collaboratively compute the sum and add noise $\mathcal{N}(0, \sigma^2\mathbb{I}_d)$.
\end{Boxfig}

\ifconf
\paragraph{\bf Organization:} We start with preliminaries in \cref{sec:prelims} and describe our sampling approach and privacy analysis in \cref{sec:sampling}. Our $k$-block-sparse DPF construction is sketched in \cref{sec:ksparsedf-overview} (with full details in \cref{sec:ksparsedf}). We report our empirical evaluations in \cref{sec:experiments}. Additional related work, additional preliminaries, and all proofs are deferred to the Supplement.
\else
\subsection*{Organization}
After discussion of additional related work (\cref{sec:related}), we start with some preliminaries in~\cref{sec:prelims}. \cref{sec:ksparsedf} overviews some ideas from our $k$-block-sparse DPF construction, including proofs of validity. \cref{sec:sampling} shows how this algorithm can be used to efficiently and privately estimate the sum of bounded-norm vectors, with sub-linear communication, including further efficiency improvements arising from numerical privacy accounting. We conclude in \cref{sec:conclusions} with further research directions.
\fi

 \subsection{Additional Related Work}
 \label{sec:related}
 As discussed above, Prio was introduced in~\cite{Corrigan-GibbsB17}, and triggered a long line of research on efficient validity proofs for various predicates of interest. Perhaps most related to our work is POPLAR~\cite{BonehBCGI21}, which uses Distributed Point Functions (DPFs) to allow clients to communicate $1$-sparse vectors in a high-dimensional space. They are used in that work to solve heavy hitters over a large alphabet, where the servers together run a protocol to compute the heavy hitters. In their work, the aggregate vectors is never constructed, and hence the DPFs there are optimized for {\em query} access: at any step, the servers want compute a specific entry of the aggregate vector.  In contrast, we are interested in the setting when the whole aggregate vector is needed, and this leads to different trade-offs in our use of DPFs.

We consider not just single-point DPFs, as used in POPLAR, but rather a generalization for a larger number of non-zero evaluations. \cite{CCS:BCGI18} also consider such multi-point functions and improve upon the naive implementation of a $k$-sparse DPF using $k$ single-point DPF instantiations. 
Multi-point function secret sharing has also been optimized using cuckoo hashing~\cite{SP:ACLS18,CCS:SGRR19,EC:dCaPol22}. They use cuckoo hashing to break one vector with $k$ non-zero entries into $k' > k$ smaller vectors with one non-zero entry each, after which standard DPF keys are constructed for those $k'$ vectors. Applying cuckoo hashing in this way incurs a roughly $2 \times$-$3\times$ overhead in the total number of vector entries, dependent on the number of hash functions chosen and the cuckoo hashing memory overhead.  While these prior work didn't focus on block-sparse DPFs, applying their methodlogy to block-sparse functions would result in a similar $2 \times$-$3\times$ overhead in terms of the total number of blocks and thus also the total number of large PRG evaluations (with output length $B$). Our construction applies cuckoo hashing differently and avoids this computational overhead. Finally, we remark that similar cuckoo hashing-based approaches~\cite{CCS:CheLaiRin17,PoPETS:DRRT18,JC:FHNP16,USENIX:PinSchZoh14} have also been used in the context of private set intersection.

Another recent work in this line of research~\cite{C:BBCGI23} uses projections, which may at first seem related to our work. Unlike our work, these arithmetic sketches are designed for the servers to be able to verify certain properties efficiently, without any help from the client. In contrast, our use of sketching is extraneous to the cryptographic protocol, and helps us reduce the client to server communication.

The fact that differential privacy guarantees get amplified when the mechanism is run on a random subsample (that stays hidden) was first shown by Kasiviswanathan et al.~\cite{Kasiviswanathan:2008} and has come to be known as privacy amplification by sampling. Abadi et al.~\cite{CCS:ACGMMT16}
first showed that careful privacy accounting tracking the moments of the privacy loss random variable, and numerical privacy accounting techniques can provide significantly better privacy bounds. In effect, this approach tracks the Renyi DP parameters~\cite{Mironov2019RnyiDP} or Concentrated DP~\cite{DworkR16,Bun:2016} parameters. Subsequent works have further improved numerical accounting techniques for the Gaussian mechanism~\cite{Balle2018ImprovingTG} and for various subsampling methods~\cite{Wang:2021,Balle:2018}. Tighter bounds on composition of mechanisms can be computed by more carefully tracking the distribution of the privacy loss random variable, and a beautiful line of work~\cite{KairouzOVa,DongRS22,MeiserM18,SommerMM19,KoskelaH21,GopiLW21,GhaziKKM22,ZhuDW22}. Numerical accounting for privacy amplification for a specific sampling technique we use has been studied recently in \cite{ChuaGHKKLMSZ25,FeldmanS25}.

The Prio architecture has been deployed at scale for several applications, including by Mozilla for private telemetry measurement~\cite{firefox} and by several parties to enable private measurements of pandemic data in Exposure Notification Private Analytics (ENPA)~\cite{ENPA:2021}. Talwar et al.~\cite{Talwar24} show how Prio can be combined with other primitives to build an aggregation system for differentially private computations.

We remark that the problem of vector aggregation has attracted a lot of attention in different models of differential privacy. Vector aggregation is a crucial primitive for several applications, such as deep learning in the federated setting~\cite{CCS:ACGMMT16,McMahanMRHA17}, frequent itemset mining~\cite{sun2014personalized}, linear regression~\cite{NguyenXYHSS16}, and stochastic optimization~\cite{ChaudhuriMS11,CheuJMP22}. The privacy-accuracy trade-offs for vector aggregation are well-understood for central DP \cite{ODOpaper}, for local DP~\cite{BeimelNO08,ChanSS12,DuchiR19,DuchiJW18,BhowmickDFKR18,AsiFT22}, as well as for shuffle DP~\cite{BalleBGN19,BalleBGN20,GhaziMPV20,GhaziKMPS21}. Several works have addressed the question of reducing the communication cost of private aggregation, in the local model~\cite{chen2020breaking,FeldmanTalwar21,AsiFNNT23}, and under sparsity assumptions~\cite{BassilyS15,FantiPE16,YeB18,AcharyaS19,Zhou0CFS22}. While the shuffle model can allow for accurate vector aggregation~\cite{GhaziKMPS21}, recent work~\cite{AsiFNNTZ24} has shown that for large $\fulldim$, the number of messages per client must be very large, thus motivating an aggregation functionality.

Our use of subsampling to reduce communication in private vector aggregation is closely related to work on aggregation in the single trusted server, secure multi-party aggregation and multi-message shuffling models ~\cite{chen2023privacy,ChenIKNOX24}. Aside from a different trust model, our work crucially relies on blocking to reduce the sparsity. Sparsity constraints also make regular Poisson subsampling less suitable for our application and necessitate the use of sampling schemes that require a more involved privacy analysis.%

\paragraph{Very recent works.} After a preliminary version of our work was made public, we were made aware of an independent and  concurrent work~\cite{BoyleGHIY25} that studies the problem of communicating $m$-sparse vectors in the two-server setting. They compare three different schemes for this task and empirically compare them. Their ``big-state'' DMPF is equivalent to our naive scheme and their probabilistic batch codes (PBC) construction uses cuckoo hashing in a black-box way, similarly to the prior works discussed above.
They also propose a new scheme based on Oblivious Key-value stores (OKVS), which obtains constant-factor improvements in the server runtime compared to the other two schemes for some range of sparsity. To keep the server-side computation low, the OKVS-based schemes (which allow a range of tradeoff) incur at least a $2 \times$ communication overhead.

Beyond these differences in various performance measures, one of our main contributions is identifying block-sparsity as a property that lends itself both to significant savings in DPF constructions and to efficient privacy-preserving aggregation. \cite{BoyleGHIY25} don't focus on block-sparse DPFs, but their approaches can be applied towards block-sparse constructions. The cuckoo-hashing based constructions would incur a $3\times$ overhead in the number of large PRG evaluations (as discussed above). The OKVS-based schemes incur the above $2 \times$ communication overhead. %

Another recent work \cite{EPRINT:KKEP24} also provides an efficient multi-point DPF construction (they also don't focus on block-sparsity). Their scheme requires the client to solve linear systems with $k$ equations and at least $(k + \lambda)$ unknowns over a field of size $2^{\lambda}$. So even if the blocks are small, for large $k$ the client work is larger than in our scheme.

\section{Preliminaries}\label{sec:prelims}

\begin{definition}{Function Secret Sharing~\cite{EC:BoyGilIsh15}}
    Let $\mathcal{F}=\{f:I\rightarrow\mathbf{G}\}$ be a class of functions with input domain $I$ and output group $\mathbb{G}$, and let $\lambda\in\mathbb{N}$ be a security parameter. A 2-party function secret sharing (FSS) scheme is a pair of algorithms with the following syntax:
    \begin{itemize}
        \item $Gen(1^\lambda,f)$ is a probabilistic, polynomial-time key generation algorithm, which on input $1^\lambda$ and a description of a function $f$ outputs a tuple of keys $(k_0,k_1)$. 
        \item $Eval(i,k_i,x)$ is a polynomial-time evaluation algorithm, which on input server index $i\in\{0,1\}$, key $k_i$, and input $x\in I$, outputs a group element $y\in\mathbb{G}$.
    \end{itemize}    
    Given some allowable leakage function $Leak:\{0,1\}^*\rightarrow\{0,1\}^*$ and a parameter $\gamma \in [0,1]$, we require the following two properties:
    \begin{itemize}
        \item Correctness: For any $f\in\mathcal{F}$ and any $x\in I$, we have that \[Pr[\sum_{b\in\{0,1\}}Eval(b,k_b,x)=f(x)] \geq 1 - \gamma.\]
        \item Security: For any $b\in\{0,1\}$, there exists a ppt simulator such that for any polynomial-size function $f\in\mathcal{F}$:
        \[ \{k_b|(k_0,k_1)\leftarrow Gen(1^\lambda,f)\} \sim \{k_b \leftarrow Sim_b(1^\lambda,Leak_b(f))\}  \]
    \end{itemize}
\end{definition}
Note that we have defined a relaxed notion of correctness here, where we allow a small probability $\gamma$ of incorrect evaluation. While $\gamma$ will be $0$ for some of our constructions, the most efficient version of our protocol will have a $\gamma$ that is polynomially small in the sparsity $k$. 

We define 1-sparse and $k$-block-sparse distributed point functions, which are instantiations of FSS for specific families of sparse functions:

\begin{definition}[Distributed Point Function (DPF)]\label{def:dpf}
    A point function $f_{\alpha,\beta}$ for $\alpha\in\{0,1\}^d$ and $\beta\in\mathbb{G}$ is defined to be the function $f:\{0,1\}^{d}\rightarrow\mathbb{G}$ such that $f(\alpha)=\beta$ and $f(x)=0$ for $x\neq \alpha$. A DPF is an FSS for the family of all point functions.
\end{definition}

\begin{definition}[$k$-block-sparse DPF]\label{def:kblockdpf}
    A $k$-block-sparse function $f_{\alpha,\beta}$ with block size $B$ for $\alpha=\{\alpha^0,\dots,\alpha^{k-1}\}$, where $\alpha^i\in\{0,1\}^d$ and $\beta=\{\beta^0,\dots,\beta^{k-1}\}$ and $\beta^i=\{\beta^i_0,\dots,\beta^i_{\blocksize-1}\}\in\mathbb{G}^{B}$ is defined to be the function $f:\{0,1\}^{d+\log \blocksize}\rightarrow\mathbb{G}$ such that $f(\alpha^i||j)=\beta^i_j$ and $f(x)=0$ for $x\neq \alpha^i||j$ with $j\in[\blocksize]$ and $i\in[k]$. A $k$-block-sparse DPF is an FSS for the family of all $k$-block-sparse functions.
\end{definition}

\paragraph{Cuckoo Hashing. }
Cuckoo Hashing~\cite{PF01} is an algorithm for building hash tables with worst case constant lookup time. The hash table depends on two (or more) hash functions,  and each item is placed in location specified by one of the hash values. When inserting $k$ items from a universe $U$ into a cuckoo hash table of size $\tilde{k}$, the hash functions map $U$ to $[\tilde{k}]$. When the hash functions are truly random, as long as $k$ is a constant fraction of $\tilde{k}$, there is a way to assign each item to one of its hash locations while avoiding any collision: %

\begin{theorem}{Cuckoo Hashing~\cite{DK12}}
    Suppose that $c\in\{0,1\}$ is fixed. The probability that a cuckoo hash of $k=\lfloor (1-c)\tilde{k}\rfloor$ data points into two tables of size $\tilde{k}$ succeeds is equal to:
    \[
    1-\frac{(2c^2-5c+5)(1-c)^3}{12(2-c)^2c^3}\frac{1}{\tilde{k}} +\mathcal{O}\left(\frac{1}{\tilde{k}^2}\right)
    \]
\end{theorem}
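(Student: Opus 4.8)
The statement is quoted from~\cite{DK12}; to sketch a proof I would follow the analytic-combinatorics route that underlies that paper. The first step is the standard reduction to random graphs. To a run of cuckoo hashing one associates the \emph{cuckoo graph} $G$ on the bipartite vertex set $[\tilde k]\sqcup[\tilde k]$, one part for each table, in which data point $i$ contributes the edge joining its two hash cells. Since the $k$ data points choose their (ordered) pairs of cells uniformly and independently, $G$ is a uniformly random edge-labelled bipartite multigraph with $k$ edges, i.e.\ one of $\tilde k^{2k}$ equally likely objects. The classical characterization (going back to~\cite{PF01} and sharpened in the bipartite-random-graph literature cited in~\cite{DK12}) is that a valid placement of all $k$ points exists exactly when every connected component of $G$ has at most one cycle, equivalently has at least as many vertices as edges; and when such a placement exists the insertion procedure finds it. Hence the success probability equals $N_{\mathrm{good}}(\tilde k,k)/\tilde k^{2k}$, where $N_{\mathrm{good}}$ counts bipartite multigraphs all of whose components are trees or unicyclic.

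Next I would encode $N_{\mathrm{good}}$ by generating functions. Introduce variables $z,\bar z$ marking vertices in the two parts and $w$ marking edges, and let $\mathcal T(z,\bar z;w)$ and $\mathcal U(z,\bar z;w)$ be the (bivariate exponential) generating functions of connected bipartite multigraphs that are, respectively, trees and unicyclic. Because a ``good'' graph is a set of tree and unicyclic components, its generating function is $\exp(\mathcal T+\mathcal U)$, and $N_{\mathrm{good}}(\tilde k,k)=\tilde k!\,\tilde k!\,[z^{\tilde k}\bar z^{\tilde k}w^{k}]\exp(\mathcal T+\mathcal U)$. The tree part is governed by the usual system $\tau=z\,w\,e^{\bar\tau}$, $\bar\tau=\bar z\,w\,e^{\tau}$, and $\mathcal U$ is an explicit $\log$/rational expression in $\tau,\bar\tau$ obtained by ``closing a path into a cycle''; both are standard in the enumeration of bipartite graphs.

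Then comes the asymptotics, which is the heart of the argument. Fix $c\in(0,1)$ and set $k=\lfloor(1-c)\tilde k\rfloor$, so the edge density $k/\tilde k\to 1-c$ is strictly below the cuckoo threshold $1$; consequently the saddle point that governs the triple coefficient extraction $[z^{\tilde k}\bar z^{\tilde k}w^{k}]$ stays in the analytic region of $\tau,\bar\tau$ (the singular ``critical'' behaviour only appears at density $1$). A saddle-point estimate, together with a transfer theorem to capture the subdominant term, yields $N_{\mathrm{good}}(\tilde k,k)/\tilde k^{2k}=1-\frac{C(c)}{\tilde k}+\mathcal O(\tilde k^{-2})$. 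The cleanest way to identify $C(c)$ is a Poisson-approximation argument on the \emph{minimal obstructions}: the smallest substructure forcing failure is a component of excess $+1$ (one more edge than vertices), whose kernel is a theta-graph or a dumbbell (figure-eight); counting such components as kernel plus attached bipartite trees, and evaluating the resulting generating function at the density-$(1-c)$ saddle, gives expected count $C(c)/\tilde k+\mathcal O(\tilde k^{-2})$, and a Chen--Stein / second-moment estimate shows this count is asymptotically Poisson. Since the probability of two or more minimal obstructions, or of any excess-$\ge2$ defect, is $\mathcal O(\tilde k^{-2})$, the success probability is $e^{-C(c)/\tilde k}+\mathcal O(\tilde k^{-2})=1-\frac{C(c)}{\tilde k}+\mathcal O(\tilde k^{-2})$.

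The main obstacle is extracting the exact rational function $C(c)=\frac{(2c^2-5c+5)(1-c)^3}{12(2-c)^2c^3}$: this requires the full bookkeeping of the excess-$+1$ bipartite components together with the evaluation of the tree and unicyclic generating functions at the density-$(1-c)$ saddle, which is where the factors $(2-c)$, $c^3$, and the quadratic $2c^2-5c+5$ emerge. A secondary difficulty is making the error term uniform: one must show that components of size $\Omega(\log\tilde k)$ contribute negligibly (their total weight is exponentially small below threshold) and that the next-order corrections coming from the second moment of the obstruction count and from larger defects are genuinely $\mathcal O(\tilde k^{-2})$ --- precisely the refined singularity analysis carried out in~\cite{DK12}.
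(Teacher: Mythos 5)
The paper does not prove this theorem; it is quoted verbatim from~\cite{DK12} and used as a black box, so there is no in-paper argument to compare against. Your sketch is a faithful reconstruction of the route taken in~\cite{DK12} (and in the earlier work it builds on): the reduction from cuckoo-hashing success to the event that every component of the bipartite cuckoo multigraph on $[\tilde k]\sqcup[\tilde k]$ with $k$ labelled edges has excess at most $0$ is exactly right, as is the enumeration of ``good'' graphs via $\exp(\mathcal T+\mathcal U)$ with the bipartite tree system $\tau=zwe^{\bar\tau}$, $\bar\tau=\bar zwe^{\tau}$, the saddle-point extraction of $[z^{\tilde k}\bar z^{\tilde k}w^{k}]$ at sub-threshold density $1-c$, and the identification of the leading $1/\tilde k$ defect with the expected number of excess-$1$ components (bipartite theta/dumbbell kernels with attached trees) via a Poisson/second-moment argument. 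You also correctly flag the two genuinely hard parts: extracting the exact rational $C(c)$, and proving the uniform $\mathcal O(\tilde k^{-2})$ error, both of which require the full machinery in~\cite{DK12} and cannot be waved through. One small note: the statement as transcribed says ``$c\in\{0,1\}$'', which is a typo for $c\in(0,1)$; your sketch implicitly and correctly assumes the latter (density strictly below the threshold), which is essential for the saddle point to stay in the analytic region.
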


Note that if $c\approx 0.32$, this simplifies to about $1-1/\tilde{k}-\mathcal{O}(1/\tilde{k}^2)$. Therefore, this choice of $c$ leads to a failure probability of $\Tilde{\mathcal{O}}(1/\tilde{k})$. Variants of cuckoo hashing where more than 2 hash functions are used can improve the space efficiency of the data structure. E.g. using 4 hash functions allows for a an efficiency close to $0.97$ with probability approaching 1~\cite{FoutoulakisP12,FriezeM12}.

We will be working with vectors $v \in \Re^D$, and in the linear algebraic parts of the paper, we view them as $v_1,\ldots,v_\fulldim$. As is standard, the cryptographic parts of the paper will view the vectors as coming from a finite field $\mathbb{G}=\mathbb{F}_q$; for a suitably large $q$, the quantized versions of $n$ real vectors can be added without rollover so that we get an estimate of the sum of quantized vectors. 

We will use $\blocksize$ to represent the block size and $\Delta = \fulldim/\blocksize$ will be the number of blocks. We will assume $\Delta$ is a power of $2$ and $d$ will denote $\log_2 \Delta$. $\lambda$ will denote our security parameter. In the cryptographic part of the paper we will view a vector as a function from $\{0,1\}^d \times [B] \rightarrow \mathbb{G}$ where $[B]=\{0,1,\ldots,B-1\}$.

We recall the definition of $(\eps,\delta)$-indistinguishability and differential privacy.
\begin{definition}
    Let $\eps>0, \delta \in [0,1)$, and let $Y$ and $Y'$ be two random variables. We say that $Y$ and $Y'$ are $(\eps, \delta)$-indistinguishable %
    if for any measurable set $S$, it is the case that
    \begin{align*}
        \Pr[Y' \in S] &\leq e^{\eps} \Pr[Y \in S] + \delta \\
        \mbox{ and }         \Pr[Y \in S] &\leq e^{\eps} \Pr[Y' \in S] + \delta.
    \end{align*}
\end{definition}

\begin{definition}[Differential Privacy~\cite{Dwork:2006}]
    Let $\mathcal{A} : \mathcal{D}^* \rightarrow \mathcal{R}$ be a randomized algorithm that maps a dataset $X \in \mathcal{D}^*$ to a range $\mathcal{R}$. We say two datasets $X,X'$ are neighboring if $X$ can be obtained from $X'$ by adding or deleting one element. We say that $\mathcal{A}$ is $(\eps,\delta)$-differentially private if for any pair of neighboring datasets $X$ and $X'$, the distributions $\mathcal{A}(X)$ and $\mathcal{A}(X')$ are $(\eps, \delta)$-indistinguishable. 
\end{definition}
We will use the following standard result.
\begin{theorem} [Gaussian Mechanism~\cite{ODOpaper,DworkR14}]
\label{thm:gaussian}
Let $\eps,\delta \in (0,1)$. Let $\mathcal{A} : (\Re^\fulldim)^* \rightarrow \Re^\fulldim$ be the mechanism that for a sequence of vectors $v_1,\ldots,v_n \in \Re^\fulldim$ outputs $\sum_i v_i + \mathcal{N}(0, \sigma^2\mathbb{I}_\fulldim)$. If for all $i\in [n]$, the input $v_i$ is restricted to $\|v_i\|_2 \leq s$ and $(\sigma/s)^2 \geq 2 \log (1.25/\delta) /\eps^2$, then $\mathcal{A}$ is $(\eps,\delta)$-differentially private. We refer to $\sigma$ as the scale of the mechanism. In this context $s$ is the sensitivity of the sum to adding/deleting an element. 
\end{theorem}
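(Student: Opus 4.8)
The plan is to follow the classical analysis of the Gaussian mechanism, reducing the $\fulldim$-dimensional claim to a one-dimensional Gaussian tail bound. Fix neighboring datasets and assume without loss of generality that one is obtained from the other by deleting a single element $v$ with $\|v\|_2 \le s$; then the two output distributions are $\mathcal{N}(\mu, \sigma^2\mathbb{I}_\fulldim)$ and $\mathcal{N}(\mu - v, \sigma^2\mathbb{I}_\fulldim)$ for a fixed $\mu \in \Re^\fulldim$, differing only by a translation by $v$. Because the spherical Gaussian is rotationally invariant, I would apply an orthogonal change of coordinates taking $v$ to $(s', 0, \ldots, 0)$ with $s' := \|v\|_2 \le s$; a short computation then shows the density ratio at an output $y$ is $\exp\left(\frac{2 s' y_1 - s'^2}{2\sigma^2}\right)$, depending only on $y_1$. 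Hence it suffices to establish $(\eps,\delta)$-indistinguishability of $\mathcal{N}(0,\sigma^2)$ and $\mathcal{N}(s',\sigma^2)$ on $\Re$ (the case $s' = 0$ being trivial).

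Next I would invoke the standard reduction from indistinguishability to a tail bound on the privacy-loss random variable: if $Y, Y'$ have densities $p, p'$ and $\Pr_{y\sim Y}[\ln(p(y)/p'(y)) \ge \eps] \le \delta$, then $\Pr[Y \in S] \le e^\eps \Pr[Y' \in S] + \delta$ for every measurable $S$ --- proved by splitting $S$ into its intersection with the ``bad set'' $\{y : \ln(p(y)/p'(y)) \ge \eps\}$, on which the first term is controlled by $\delta$, and its complement, on which $p(y) < e^\eps p'(y)$ pointwise. It therefore suffices to check this tail bound for $Y = \mathcal{N}(0,\sigma^2)$, $Y' = \mathcal{N}(s',\sigma^2)$; the reverse inequality $\Pr[Y'\in S] \le e^\eps\Pr[Y\in S] + \delta$ follows by the same argument after the reflection $y\mapsto s'-y$, which swaps the two Gaussians.

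For the tail bound, the privacy loss at $y$ is $\ln\frac{p_{\mathcal{N}(0,\sigma^2)}(y)}{p_{\mathcal{N}(s',\sigma^2)}(y)} = \frac{(y-s')^2 - y^2}{2\sigma^2} = \frac{s'^2 - 2s'y}{2\sigma^2}$, so the event $\{\text{privacy loss} \ge \eps\}$ is the half-line $\{y \le \frac{s'}{2} - \frac{\sigma^2\eps}{s'}\}$; writing $t(s') := \frac{\sigma\eps}{s'} - \frac{s'}{2\sigma}$, its probability under $y\sim\mathcal{N}(0,\sigma^2)$ equals $\Pr[\mathcal{N}(0,1)\le -t(s')] = \Pr[\mathcal{N}(0,1)\ge t(s')]$. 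Two monotonicity observations pin down the worst case: $t(s')$ is decreasing in $s'$, so we may set $s' = s$; and $t(s)$ is increasing in $\sigma$, so it suffices to handle the boundary value $\sigma = cs/\eps$ with $c := \sqrt{2\ln(1.25/\delta)}$, for which $t(s) = c - \eps/(2c)$. It then remains to verify $\Pr[\mathcal{N}(0,1) \ge c - \eps/(2c)] \le \delta$; plugging into the Gaussian tail estimate $\Pr[\mathcal{N}(0,1)\ge r]\le \frac{1}{r\sqrt{2\pi}}e^{-r^2/2}$ with $r = c-\eps/(2c)$ and using $r^2 = c^2 - \eps + \eps^2/(4c^2) \ge c^2 - \eps$ together with $\eps < 1$ reduces the claim to a concrete numerical inequality in $\delta$ that holds because of the $1.25$ constant in the hypothesis.

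I expect the last step --- choosing the constants so that the crude Gaussian tail bound actually delivers $\delta$ rather than a constant multiple of it --- to be the only genuinely delicate part; this is exactly where the $1.25$ factor and the restriction $\eps < 1$ are used, and it requires a careful but elementary estimate of $\frac{1}{r\sqrt{2\pi}}e^{-r^2/2}$ at $r = c - \eps/(2c)$. Everything preceding it --- the rotational reduction to one dimension, the indistinguishability-to-tail-bound lemma, and the two monotonicity reductions --- is routine. Since the statement is used only as a black box in this paper, one could equally well cite it verbatim from the references already listed.
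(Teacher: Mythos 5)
The paper does not prove this theorem: it is stated with a citation to the literature (the notation ``Gaussian Mechanism~\cite{ODOpaper,DworkR14}'' in the theorem header) and is used purely as a black box, so there is no in-paper proof to compare against. Your reconstruction correctly follows the classical argument of Dwork and Roth: reduce to one dimension by rotational invariance, pass from $(\eps,\delta)$-indistinguishability to a tail bound on the privacy-loss random variable, observe the privacy loss is linear in $y_1$, and close with a Gaussian tail estimate. The monotonicity reductions in $s'$ and in $\sigma$ are both correct, and handling the reverse direction of indistinguishability via the reflection $y \mapsto s' - y$ is the right (and slightly cleaner) way to do it --- the textbook proof instead bounds $\Pr[|L| > \eps]$ two-sidedly, which costs a factor of two that the $1.25$ constant must then absorb.

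One point you should make explicit if you actually write out the final step: the tail bound $\Pr[\mathcal{N}(0,1)\ge r]\le \frac{1}{r\sqrt{2\pi}}e^{-r^2/2}$ is only valid for $r>0$, so you need $t(s) = c - \eps/(2c) > 0$ at the boundary, i.e.\ $2c^2 > \eps$, i.e.\ $4\ln(1.25/\delta) > \eps$. Since $\eps<1$ this holds whenever $\delta < 1.25\,e^{-1/4}\approx 0.973$, which covers every case of interest, but it is not automatic from $\eps,\delta\in(0,1)$ alone and the classical proof implicitly relies on it; a fully careful write-up would either note this restriction or use a two-sided tail bound that degrades gracefully. Beyond that, the only remaining work is the ``concrete numerical inequality'' you flag at the end, and your identification of it as the one delicate step is accurate. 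Given that the paper treats the result as cited, you are also right that a verbatim citation would be equally acceptable here.
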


\section{Private aggregation via $k$-block-sparse vectors}
\label{sec:sampling}
In this section we propose two instantiations of our sampling-based sparsification, and describe the formal privacy and utility guarantees of the resulting aggregation algorithms.
In both schemes each user is given a vector $v \in \Re^{\fulldim}$ which consists of $\Delta= \fulldim/B$ blocks, each of size $B$. We refer to the value of $v$ on block $i \in [\Delta]$ by $v_i \in \Re^{\blocksize}$. Our subsampling schemes are parametrized by an upper bound on the $\ell_2$ norm of each block $L$ and the number of blocks $k$ to be sent by each user. We will also assume, for simplicity, that $k$ divides $\Delta$. The bound $L$ is somewhat stronger than a total bound on the $\ell_2$ norm of the input that is typically assumed in mean estimation. A number of standard techniques are known for converting an $\ell_2$ norm bounded vector to an $\ell_\infty$-bounded vector such as Kashin representation~\cite{Lyubarskii:2010}. We show that techniques from ~\cite{AsiFNNT23} can be used to convert a vector of $\ell_2$ norm 1 to a vector in which each block has norm $\sqrt{B/D}$ while incurring expected squared error on the order of $1/B$. Note that this result relies crucially on the fact that ensuring that block norms are upper-bounded is easier than ensuring that each coordinate norm is upper-bounded. We formally prove that the expected error due to truncation falls as $\tilde{O}(\frac 1 B)$ in \cref{app:truncation_analysis}, and evaluate this impact empirically in ~\cref{sec:experiments}.

\ifconf
Our subsampling schemes differ in how the $k$ blocks are subsampled.
In {\bf Partitioned Subsampling}, we partition the blocks into $k$ groups of $\Delta/k$ consecutive blocks, and pick one block out of each group, randomly and independently. In {\bf Truncated Poisson Subsampling}, we select each block with probability $q$, to get in expectation $q\Delta$ blocks. If the number of blocks that end up getting subsampled is larger than $k$, we keep at random $k$ of them. This gives at most $k$ non-zero blocks, which can be communicated using our $k$-block-sparse DPF. Both schemes give about the same expected utility, and privacy bounds that match asymptotically. The partitioned subsampling results in a more structured $k$-block-sparse vector, which is a concatenation of $k$ $1$-block-sparse vectors, which are simpler to communicate. The truncated Poisson subsampling has more randomness, and can be analyzed numerically using a larger set of tools, though it may have slightly higher communication cost. As we discuss in~\cref{sec:experiments}, each of these may be preferable over the other for some range of parameters. A formal description of these schemes is given in \cref{fig:kwiseonesparse} and \cref{fig:truncatedpoisson} in the Appendix.%
\else
Our subsampling schemes differ in how the $k$ blocks are subsampled.  In \cref{fig:kwiseonesparse}, we partition the blocks into $k$ groups, and pick one block out of each group, randomly and independently. In \cref{fig:truncatedpoisson}, we select each block with probability $q$, to get in expectation $q\Delta$ blocks. If the number of blocks that end up getting subsampled is larger than $k$, we keep at random $k$ of them. This gives at most $k$ non-zero blocks, which can be communicated using our $k$-block-sparse DPF. Both schemes give about the same expected utility, and privacy bounds that match asymptotically. The partitioned subsampling results in a more structured $k$-block-sparse vector, which is a concatenation of $k$ $1$-block-sparse vectors, which are simpler to communicate. The truncated Poisson subsampling has more randomness, and can be analyzed numerically using a larger set of tools, though it may have slightly higher communication cost. As we discuss in~\cref{sec:experiments}, each of these may be preferable over the other for some range of parameters.%
\fi

\ifconf\else
\begin{Boxfig}{$k$-wise 1-sparse sampling scheme}{kwiseonesparse}{SampledVector (Partitioned Subsampling)}{Input: vector $v \in \Re^\fulldim$. Parameters: dimension $\fulldim$, blocksize $B$, sparsity $k$, $\Delta = \fulldim/\blocksize$, blockwise $\ell_2$ bound $L$.}
\begin{enumerate}
\item Split the block indices into $k$ consecutive subsets $S_j = \{(j-1)\Delta/k+1,\ldots, j\Delta/k\}$ for $j\in [k]$.
\item Select an index $i_j$ randomly and uniformly from $S_j$ and define $I = \{i_j\}_{j\in[k]}$
\item Define the subsampled (and clipped) $v^I$ as 
$$v^I_i = \begin{cases} 
\frac{\Delta}{k} \cdot \mbox{clip}_L(v_i) & \text{if } i \in I \\
0 & \text{otherwise}
\end{cases}, 
$$
where $\mbox{clip}_L(z)$ is defined as $z$ if $\|z\|_2 \leq L$ and $\frac{L}{\|z\|_2} \cdot z$, otherwise. 
\item Output $w = v^I$.
\end{enumerate}
\end{Boxfig}

\begin{Boxfig}{$k$-sparse sampling scheme}{truncatedpoisson}{SampledVector (Truncated Poisson Subsampling) }{Input: vector $v \in \Re^\fulldim$. Parameters: dimension $\fulldim$, blocksize $B$, sparsity $k$, sampling probability $q$, $\Delta = \fulldim/\blocksize$, blockwise $\ell_2$ bound $L$.}
\begin{enumerate}
\item Select a subset $I_0$ by picking each coordinate in $\{1,\ldots,\Delta\}$ independently with probability $q$.
\item If $|I_0| > k$, let $I$ be random subset of $I_0$ of size $k$. Else set $I=I_0$.
\item Let $\kappa = \E[|I|] = \E[\min(\texttt{Bin}(\Delta,q), k)]$ under this sampling process.
\item Define the subsampled (and clipped) $v^I$ as 
$$v^I_i = \begin{cases} 
\frac{\Delta}{\kappa} \cdot \mbox{clip}_L(v_i) & \text{if } i \in I \\
0 & \text{otherwise}
\end{cases}. $$
\item Output $w = v^I$.
\end{enumerate}
\end{Boxfig}
\fi

\begin{theorem}[Utility of the subsampling schemes]
\label{thm:utility}
Let $v^1,\ldots, v^n$ be a collection of vectors such that each $v^j=v^j_1,\ldots,v^j_\Delta \in \Re^\fulldim$ and $\|v^j_i\|_2 \leq L$ for all $j\in [n]$ and $i\in[\Delta]$. Let $w^j$ denote the (randomized) report of either of our subsampling algorithms for each $j \in [n]$ and let $W = \sum_{j\in [n]} w^j + N(\bar{0}, \sigma^2 I_\fulldim)$ be their noisy aggregate. Then
$$\mathbf{E}[\|W - \sum_{j\in [n]} v^j\|_2^2] \leq  n L^2 \frac{\Delta^2}{k} + \sigma^2 \cdot \fulldim .$$
\end{theorem}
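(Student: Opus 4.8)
The plan is to bound the error via a bias-variance decomposition, separating the contribution of the subsampling noise from the contribution of the added Gaussian noise, and then treating the two subsampling schemes essentially uniformly since both are designed so that $\mathbf{E}[w^j] = \sum_i \mathrm{clip}_L(v^j_i)$ (which equals $v^j$ under the hypothesis $\|v^j_i\|_2 \le L$, since clipping is then the identity). First I would write $W - \sum_j v^j = \sum_j (w^j - v^j) + N(\bar 0, \sigma^2 I_\fulldim)$, and use independence of the Gaussian from everything else, plus independence of the $w^j$ across $j$ (each client samples its own randomness), to get
\begin{align*}
\mathbf{E}\Big[\Big\|W - \sum_j v^j\Big\|_2^2\Big] &= \sum_{j\in[n]} \mathbf{E}\big[\|w^j - v^j\|_2^2\big] + \mathbf{E}\big[\|N(\bar 0,\sigma^2 I_\fulldim)\|_2^2\big] \\
&= \sum_{j\in[n]} \mathbf{E}\big[\|w^j - v^j\|_2^2\big] + \sigma^2 \fulldim,
\end{align*}
where the cross terms vanish because $\mathbf{E}[w^j - v^j] = \bar 0$ for each $j$ and the summands are independent and zero-mean.

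Next I would bound $\mathbf{E}[\|w^j - v^j\|_2^2]$ for a single client; fix $j$ and drop the superscript. Since $\mathbf{E}[w] = v$, this is just $\mathbf{E}[\|w\|_2^2] - \|v\|_2^2 \le \mathbf{E}[\|w\|_2^2]$. For the partitioned scheme, $w$ is supported on the $k$ sampled blocks $i_1,\dots,i_k$ (one per group $S_\ell$, $|S_\ell| = \Delta/k$), with $w_{i_\ell} = \frac{\Delta}{k}\,\mathrm{clip}_L(v_{i_\ell})$, so
\[
\mathbf{E}[\|w\|_2^2] = \sum_{\ell\in[k]} \Big(\tfrac{\Delta}{k}\Big)^2 \mathbf{E}\big[\|\mathrm{clip}_L(v_{i_\ell})\|_2^2\big] \le \sum_{\ell\in[k]} \Big(\tfrac{\Delta}{k}\Big)^2 \cdot \tfrac{k}{\Delta}\sum_{i\in S_\ell} \|v_i\|_2^2 = \tfrac{\Delta}{k}\sum_{i\in[\Delta]} \|v_i\|_2^2 \le \tfrac{\Delta}{k}\cdot \Delta L^2,
\]
using $\|\mathrm{clip}_L(v_i)\|_2 \le \|v_i\|_2 \le L$ in the last step and uniformity of $i_\ell$ over $S_\ell$ in the second. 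Summing over the $n$ clients gives exactly the $n L^2 \Delta^2/k$ term. For the truncated Poisson scheme the argument is the same shape: condition on $I$, note each block $i$ appears with the same marginal probability $\kappa/\Delta$ by symmetry of the construction (this symmetry is why the rescaling factor is $\Delta/\kappa$ and why $\mathbf{E}[w]=v$), so $\mathbf{E}[\|w\|_2^2] = (\Delta/\kappa)^2 \sum_i \Pr[i\in I]\,\|\mathrm{clip}_L(v_i)\|_2^2 \le (\Delta/\kappa)^2 (\kappa/\Delta)\,\Delta L^2 = (\Delta^2/\kappa) L^2 \le (\Delta^2/k)L^2$ since $\kappa \le k$.

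The main obstacle, and the only place any real care is needed, is verifying the two structural facts about the Poisson scheme that I used as black boxes: that every block has the same inclusion probability $\kappa/\Delta$ (so that $\mathbf{E}[w] = v$ with the stated normalization $\Delta/\kappa$), and hence that the cross terms in the first display genuinely vanish. The partitioned scheme makes both facts transparent, but for truncated Poisson one must observe that the two-stage process — include each of the $\Delta$ blocks i.i.d.\ with probability $q$, then if more than $k$ survive subsample down to a uniformly random $k$-subset — is exchangeable in the block index, so $\Pr[i\in I]$ is independent of $i$ and equals $\E[|I|]/\Delta = \kappa/\Delta$ by linearity of expectation; this is exactly what the definition of $\kappa$ in the box encodes. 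Everything else is the routine decomposition above, and I would present it once, pointing out the single-line modification needed to cover both schemes.
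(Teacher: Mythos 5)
Your decomposition and the handling of the partitioned scheme track the paper's proof closely: both use $\mathbf{E}[w^j]=v^j$ to split the squared error into a sum of per-client sampling variances plus $\sigma^2\fulldim$, then bound the variance blockwise. Your treatment of the partitioned case is correct and is essentially the same computation the paper records (the paper keeps the extra $(1-\tfrac{k}{\Delta})$ factor from the $p(1-p)$ variance, whereas you discard the $-\|v\|_2^2$ term; both give the same final bound).

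However, there is a concrete error in the truncated-Poisson step. You compute $\mathbf{E}[\|w\|_2^2]\le(\Delta^2/\kappa)L^2$ correctly, but then conclude ``$(\Delta^2/\kappa)L^2\le(\Delta^2/k)L^2$ since $\kappa\le k$.'' This inequality is reversed: $\kappa\le k$ gives $1/\kappa\ge 1/k$, hence $\Delta^2/\kappa\ge\Delta^2/k$. In fact, for truncated Poisson the per-client variance is governed by $\Delta^2/\kappa$, which is \emph{larger} than $\Delta^2/k$ whenever truncation occurs with positive probability (i.e.\ whenever $q<1$). The paper sidesteps this by asserting a block is ``non-zero with probability $k/\Delta$'' for both schemes, which is strictly only true for the partitioned scheme; you were more careful to identify the marginal as $\kappa/\Delta$, and your own calculation shows the theorem's bound as stated holds exactly only for the partitioned scheme (or for truncated Poisson up to a factor $k/\kappa$). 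To make your truncated-Poisson argument rigorous you should either state the bound as $nL^2\Delta^2/\kappa+\sigma^2\fulldim$ for that scheme, or note that it matches the stated $\Delta^2/k$ bound only in the regime where $\kappa\approx k$ (i.e.\ $q$ small enough that truncation is rare). Otherwise everything else in your proposal is correct.
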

\ifconf\else
\begin{proof}
    First, note that $\mathbf{E}[w^j] = v^j$ for all $j \in [n]$.
    Therefore we have 
    \begin{align*}
    \mathbf{E}[\|W - \sum_{j\in [n]} v^j\|_2^2] 
        & = \mathbf{E}\left[\|\sum_{j\in [n]} w^j - v^j + N(\bar{0}, \sigma^2 I_\fulldim)\|_2^2\right] \\
        & = \sum_{j=1}^n \mathbf{E}\left[\|w^j - v^j\|_2^2\right] + \sigma^2 \fulldim \\
        & = n \mathbf{E}\left[\|w^1 - v^1\|_2^2\right] + \sigma^2 \fulldim \\
        & \le n \Delta \tfrac{k}{\Delta}(1-\tfrac{k}{\Delta}) L^2 (\frac{\Delta}{k})^2 + \sigma^2 \fulldim\\
        & \le n  \cdot L^2 (\Delta^2/k) + \sigma^2 \fulldim.
    \end{align*}
    Here we have used the fact that the variance of the $w^1$ decomposes across $\Delta$ blocks, where each block contributed $p(1-p)$ times the square of its value when non-zero (which is $L(\Delta/k)$) and $p=\tfrac k \Delta$ is the probability of a block being non-zero.
\end{proof}
\fi

Theorem~\ref{thm:utility}\ifconf(proved in \cref{app:missing_proofs})\else\fi\  provides guidance on how to set the smallest communication cost so that the sub-sampling error is negligible compared to the privacy error. Indeed, assume for simplicity that our vectors lie in the $\ell_\infty$-ball $v^j \in \left[\frac{-1}{\sqrt{\fulldim}},\frac{1}{\sqrt{\fulldim}}\right]^\fulldim$. This implies that the norm of each block is upper bounded by $L = \sqrt{B/\fulldim}$. Noting that $\Delta = \fulldim/B$, the (upper bound above on the) error of the algorithm is $n\Delta/k + \sigma^2 \fulldim $. This implies that we should set $ k B\ge c \cdot n / \sigma^2$ for some constant $c > 0$. In other words, the number of coordinates that are non-zero must be set to $cn/\sigma^2$, which is independent on $\fulldim$, as well as of the blocksize. Thus block-based subsampling does not impact the sampling noise, and any setting of $k$ and $B$ with the product $kB \geq cn/\sigma^2$ would suffice. 
    
We also note that the proof is oblivious to the subsampling method and only uses the fact that marginally, each coordinate has the right expectation, and is non-zero with probability $k/\Delta$. Thus it applies to a range of subsampling methods.

We now analyze the privacy of our subsampling methods using the standard notion of differential privacy \cite{Dwork:2006} with respect to deletion of user data. In the context of aggregation, we can also think of this adjacency notion as replacing the user's data with the all-0 vector.
We state the asymptotic privacy guarantees for our partitioned subsampling method below (the proof can be found in Appendix \ref{sec:privacy-proof}). Similar guarantees hold for the Truncated Poisson subsampling with an appropriate choice of parameters and can be derived from the known analyses \cite{CCS:ACGMMT16,steinke2022composition} together with the fact that the truncation operation does not degrade the privacy guarantees \cite{FeldmanS25}. 
\ifconf
\begin{theorem}[Privacy of partitioned subsampling]
\label{thm:asymptotic--partitioned-privacy}
	Let $v^1,\ldots, v^n$ be a collection of vectors such that each $v^j=v^j_1,\ldots,v^j_\Delta \in \Re^\fulldim$ and $\|v^j_i\|_2 \leq L$ for all $j\in [n]$ and $i\in[\Delta]$. Let $w^j$ denote the (randomized) report using Partitioned Subsampling, for each $j \in [n]$ and let $A$ be an algorithm that outputs $W = \sum_{j\in [n]} w^j + N(\bar{0}, \sigma^2 I_\fulldim)$. Then there exists a constant $c$ such that for every $\delta >0$, if $\frac{\sigma}{L} \geq c \cdot \max \left\{\frac{ \Delta \sqrt{\log(\Delta/\delta)}}{k} , \sqrt{\frac{\Delta}{k}} \log(\Delta/\delta), \sqrt{\Delta \log(1/\delta)} \right\} ,$ then $A$ is $(\epsilon,\delta)$-differentially private for 
	$ \eps = O\left(\frac{L\sqrt{\Delta} \sqrt{\log(1/\delta)}}{\sigma  }\right)$. 
\end{theorem}
\else
\begin{theorem}[Privacy of partitioned subsampling]
\label{thm:asymptotic--partitioned-privacy}
	Let $v^1,\ldots, v^n$ be a collection of vectors such that each $v^j=v^j_1,\ldots,v^j_\Delta \in \Re^\fulldim$ and $\|v^j_i\|_2 \leq L$ for all $j\in [n]$ and $i\in[\Delta]$. Let $w^j$ denote the (randomized) report using Partitioned Subsampling, for each $j \in [n]$ and let $A$ be an algorithm that outputs $W = \sum_{j\in [n]} w^j + N(\bar{0}, \sigma^2 I_\fulldim)$. Then there exists a constant $c$ such that for every $\delta >0$, if $$\frac{\sigma}{L} \geq c \cdot \max \left\{\frac{ \Delta \sqrt{\log(\Delta/\delta)}}{k} , \sqrt{\frac{\Delta}{k}} \log(\Delta/\delta), \sqrt{\Delta \log(1/\delta)} \right\} ,$$ then $A$ is $(\epsilon,\delta)$-differentially private for 
	$$ \eps = O\left(\frac{L\sqrt{\Delta} \sqrt{\log(1/\delta)}}{\sigma  }\right).$$ 
\end{theorem}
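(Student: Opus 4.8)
The plan is to analyze the privacy of Partitioned Subsampling (Figure~\ref{fig:kwiseonesparse}) by reducing it to a composition of $k$ independent subsampled Gaussian mechanisms and then invoking known amplification-by-subsampling and advanced composition bounds. The key structural observation is that the report $w^j$ is a concatenation of $k$ pieces, where the $\ell$-th piece is supported on the block-group $S_\ell$ (of size $\Delta/k$), one block is chosen uniformly at random from $S_\ell$, and its (clipped) value is rescaled by $\Delta/k$. Because the noise $N(\bar 0,\sigma^2 I_\fulldim)$ is independent across coordinates, the output $W$ splits as $k$ independent outputs $W^{(\ell)} = \sum_j (w^j)^{(\ell)} + N(\bar 0, \sigma^2 I_{\fulldim/k})$, one per group. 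Deleting user $j$ changes only one coordinate-block in each group (when that user's resampled index is fixed), so it suffices to give a privacy bound for each $W^{(\ell)}$ and then compose over the $k$ groups.

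For a single group $\ell$, I would analyze $W^{(\ell)}$ as follows. Conditioned on the sampled index $i_\ell \in S_\ell$, user $j$ contributes $\frac{\Delta}{k}\,\mathrm{clip}_L(v^j_{i_\ell})$, a vector of norm at most $L\Delta/k$, to block $i_\ell$ of that group. Marginally over the uniform choice of $i_\ell$, this is exactly a uniform-subsampling of one out of $\Delta/k$ possible locations with the per-record sensitivity $s := L\Delta/k$ against a Gaussian of scale $\sigma$. Thus each $W^{(\ell)}$ is the output of a subsampled Gaussian mechanism with sampling rate $p := k/\Delta$ (i.e., one in $\Delta/k$) and noise multiplier $\sigma/s = \sigma k/(L\Delta)$. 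By the standard privacy-amplification-by-subsampling analysis (e.g., via Rényi DP / moments accountant, \cite{CCS:ACGMMT16,Mironov2019RnyiDP,Wang:2021}) and then composing the $k$ groups, one obtains that the overall mechanism is $(\eps,\delta)$-DP with
$$ \eps = O\!\left( p \sqrt{k \log(1/\delta)} \cdot \frac{s}{\sigma} \right) = O\!\left( \frac{k}{\Delta}\sqrt{k\log(1/\delta)}\cdot\frac{L\Delta}{k\sigma} \right) = O\!\left( \frac{L\sqrt{k\log(1/\delta)}}{\sigma} \right), $$
provided the subsampled Gaussian is in its "amplification regime" — which is precisely what the lower-bound conditions on $\sigma/L$ in the theorem statement are designed to guarantee. (Note: the target $\eps$ in the statement has $\sqrt{\Delta}$, not $\sqrt{k}$; I would double-check this, but I suspect the intended bound after composing the per-group guarantees, together with the $\Delta/k$ blocks per group, yields the $\sqrt{\Delta}$ factor; alternatively the statement's bound is a valid-but-loose upper bound since $\Delta \ge k$.) Here I use that $\mathrm{clip}_L$ only reduces sensitivity and that the rescaling factor $\Delta/k$ is exactly absorbed into $s$.

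The three conditions on $\sigma/L$ are what make the amplification argument valid, and deriving each one is the technical heart. The first, $\sigma/L \gtrsim \Delta\sqrt{\log(\Delta/k\cdot 1/\delta)}/k$, ensures the per-group noise multiplier $\sigma/s = \sigma k/(L\Delta)$ is at least $\Omega(\sqrt{\log(1/\delta')})$ so that the subsampled-Gaussian $(\eps',\delta')$ guarantee with $\eps' = O(1)$ holds with $\delta' = \delta/\Delta$ (accounting for a union bound over the $\Delta/k$ possible locations within a group, and over $k$ groups). The second, $\sigma/L \gtrsim \sqrt{\Delta/k}\log(\Delta/\delta)$, controls the contribution of the subsampling "$\delta$ tail" across composition — i.e., it ensures $\eps' \le 1$ in the Rényi-to-$(\eps,\delta)$ conversion after $k$-fold composition so that the $k\cdot(\delta/\Delta)$ error terms sum correctly. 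The third, $\sigma/L \gtrsim \sqrt{\Delta\log(1/\delta)}$, is simply the requirement that the final $\eps = O(L\sqrt{\Delta\log(1/\delta)}/\sigma) \le O(1)$, putting us in the low-$\eps$ regime where the amplification bound is tight. I expect the main obstacle to be bookkeeping the interaction between the three regimes cleanly: specifically, translating the Rényi-DP composition of $k$ subsampled Gaussians (each with its own small $\delta'$) into a single clean $(\eps,\delta)$ statement, and verifying that the stated three-way max on $\sigma/L$ is exactly the set of conditions under which the Rényi divergence of the subsampled Gaussian is well-approximated by that of the un-subsampled Gaussian scaled by $p^2$ (the regime where privacy amplification by subsampling gives the $p$ factor rather than a weaker bound). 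I would lean on \cite{Mironov2019RnyiDP,Wang:2021,steinke2022composition} for the precise subsampled-Gaussian RDP inequalities and then do a careful but routine conversion.
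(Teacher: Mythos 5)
Your decomposition into $k$ independent per-group mechanisms is exactly what the paper does, and the sensitivity bookkeeping ($s = L\Delta/k$ per group after rescaling) is also right. The gap is in the per-group analysis. Each group is a ``one-out-of-$m$'' random allocation with $m = \Delta/k$: the user's (clipped, rescaled) block is \emph{always} included, just placed in a uniformly random one of the $m$ slots of that group. You analyze this as a \emph{single} round of Poisson-subsampled Gaussian with rate $p = k/\Delta$. That is not a correct reduction: under one round of Poisson subsampling at rate $1/m$, the user's data is touched only with probability $1/m$ (and with probability $1-1/m$ incurs \emph{no} privacy loss), whereas under random allocation the user's data is touched with probability $1$. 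The correct reduction, which the paper takes from Feldman--Shenfeld~\cite{FeldmanS25}, upper-bounds one-out-of-$m$ allocation by $m$ rounds of Poisson subsampling at rate $\approx 1/m$. Composing $k$ groups then gives $km = \Delta$ total Poisson rounds, which is where the paper's $\sqrt{\Delta}$ comes from, via Lemma~\ref{lem:poisson:privacy} (Abadi et al.) applied with $T = \Delta$, $\eta \le 2k/\Delta$, $s = L\Delta/k$. Your computation uses $T = k$ (one round per group) and therefore lands on $\sqrt{k}$.

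You flagged the $\sqrt{k}$ vs.\ $\sqrt{\Delta}$ discrepancy yourself, but your proposed resolution (``the statement's $\sqrt{\Delta}$ bound is a valid-but-loose upper bound'') is the wrong way around: $\sqrt{\Delta}$ is not slack in the statement, it is what the mechanism actually achieves under this analysis, and your $\sqrt{k}$ bound is not obtainable by this route. As a sanity check, take $k=1$: your bound would give $\eps = O(L\sqrt{\log(1/\delta)}/\sigma)$ independently of $\Delta$, i.e.\ a full factor-of-$\Delta$ amplification over the unamplified Gaussian mechanism (whose sensitivity here is $L\Delta$). But the user's block is certainly present somewhere in the output, so the amplification one gets from hiding only its \emph{location} is $\sqrt{\Delta}$-ish, not $\Delta$; the $\sqrt{\Delta}$ bound is the right one. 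Correspondingly, your heuristic reading of the three $\sigma/L$ conditions also needs to be re-derived: in the paper the first two conditions control the $\gamma$ correction term that appears when passing from random allocation to Poisson (ensuring $\eta \le 2k/\Delta$), and the third is the applicability condition of Lemma~\ref{lem:poisson:privacy} with $T=\Delta$; they are not the generic ``amplification regime'' requirements for a $k$-fold composition of single-round subsampled Gaussians. Concretely, the missing ingredient in your argument is the allocation-to-Poisson comparison theorem of \cite{FeldmanS25}, which is the crux that determines the correct round count.
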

\fi

Note that this implies that when $L  = \sqrt{1/\Delta}$ and we set $kB$ to be $n/\sigma^2$, the resulting $\eps$ is $O(\sqrt{\log(1/ \delta)}/\sigma)$, which asymptotically matches the bound for the Gaussian mechanism. In other words, the privacy cost of our algorithm is close to that of the Gaussian mechanism, when $kB \geq n/\sigma^2$ and $k \geq \sqrt{\Delta\log \frac 1 \delta}/\sigma$. For a fixed $kB$, this constraint translates to an upper bound on the block size. %

This asymptotic analysis demonstrates the importance of hiding the sparsity pattern. Specifically, without hiding the pattern we cannot appeal to privacy amplification by subsampling and need to rely on the sensitivity of the aggregated value. The sensitivity is equal to $\sqrt{k} \cdot \frac{L\Delta}{k}=\frac{L\Delta}{\sqrt{k}}$. By the properties of the Gaussian noise addition (Thm.~\ref{thm:gaussian}), we obtain that the algorithm is $\left(O\left(\frac{L\Delta\sqrt{\log(1/\delta)}}{\sigma \sqrt{k}}\right),\delta\right)$-DP. This bound is worse by a factor of $\sqrt{\frac{\Delta}{k}}$ than the bound we get in Theorem \ref{thm:asymptotic--partitioned-privacy}. A similar gain was obtained for private aggregation of Poisson subsampled vectors in \cite{chen2023privacy}.

\ifconf\else
\paragraph{Communicating $1$-sparse vectors:} A common application of secure aggregation systems is to aggregate vectors that are $1$-sparse (often known as $1$-hot vectors) or $k$-sparse for a small $k$. Directly using DPFs for these vectors requires $O(\fulldim)$ PRG re-seedings and thus can be expensive. Noting that  $k$-sparse vector is also $k$-block sparse, one can directly use PREAMBLE to reduce the server computation cost at a modest increase in communication cost. Alternately, in settings where we want to add noise in a distributed setting, {\sc RAPPOR}~\cite{ErlingssonPK14} and its lower-communication variants such as {\sc PI-RAPPOR}~\cite{FeldmanTalwar21} and ProjectiveGeometryResponse~\cite{FeldmanNNT22} can be used along with Prio.  Since $1$-hot vectors become vectors  in $\left\{\frac{1}{\sqrt{\fulldim}}, \frac{-1}{\sqrt{\fulldim}}\right\}^\fulldim$ after a Hadamard transform, one can view these vectors as Euclidean vectors in $\Re^\fulldim$ and use PREAMBLE to efficiently communicate them.
ProjectiveGeometryResponse is particularly well-suited for this setup even without sampling, as the resulting message space is $O(\fulldim)$-dimensional, and a linear transformation of the input space. Thus one can aggregate in ``message space'': each message is a $1$-hot vector in message space, and we can add up these vectors using PREAMBLE. The linear transform to go back to data space is a simple post-processing and the privacy guarantee here can use privacy amplification by shuffling. Since we avoid sampling in this approach, it can scale to larger $n$ without incurring any additional utility overhead.
\fi
\ifconf
\paragraph{\bf Numerical Privacy Analysis}
In practice, numerical privacy analysis give much tighter privacy bounds. We can analyze the two approaches described above. For partitioned subsampling, we use the recent work on Privacy Amplification by Random Allocation~\cite{FeldmanS25} to analyze the privacy cost. The authors show that the Renyi DP parameters of the one-out-of-$m$ version of the Gaussian mechanism can be bounded using numerical methods. Composing this across the $k$ groups gives us the Renyi DP parameters, and hence the overall privacy cost of the full mechanism.

For the approach based on truncated Poisson subsampling, it is shown in~\cite{FeldmanS25} that the privacy cost is no larger than that of the Poisson subsampling version without the truncation. The Poisson subsampling can then be analyzed using the PRV accountant of ~\cite{GopiLW21}. While the PRV accountant usually gives better bounds than one can get from RDP-based accounting, we suffer a multiplicative overhead as the scaling factor $\kappa$ in Truncated Poisson is smaller than $k$. For the numerical analysis, we optimize over $q$ to control the overall variance one gets from this process. Note that when $k$ is large and $q \approx \frac{k-O(\sqrt{k})}{\Delta}$, one would expect truncation to be rare, and thus $\kappa$ to be close to $q\Delta$ and thus to $k$. 

One can also study a different subsampling process where $I$ is a uniformly random subset of size $k$. 
We may expect better privacy bounds to hold for this version, intuitively as there is more randomness compared to partitioned subsampling. Feldman and Shenfeld~\cite{FeldmanS25} show that the privacy bound of this variant is no larger than that of partitioned subsampling. We conjecture that the privacy cost of this version is closer to (untruncated) Poisson with sampling rate $q=k/\Delta$. Since the latter can be more precisely accounted for using the PRV accountant, we expect careful numerical accounting of this version to do better than the RDP-based bounds for partitioned subsampling.
\else
\subsection{Numerical Privacy Analysis}
\label{sec:numerical}
As discussed above, under appropriate assumptions on the block size, we can prove asymptotic privacy guarantees for our approach that matches that of the Gaussian mechanism itself. In this section, we evaluate the privacy-utility trade-off of our approach using numerical techniques to compute the privacy cost.

We will analyze the two approaches described above. For both schemes, the noise due to subsampling can be upper bounded using~\cref{thm:utility}. For partitioned subsampling, we use the recent work on Privacy Amplification by Random Allocation~\cite{FeldmanS25} to analyze the privacy cost. The authors show that the R\'enyi DP parameters of the one-out-of-$m$ version of the Gaussian mechanism can be bounded using numerical methods. Composing this across the $k$ groups gives us the Renyi DP parameters, and hence the overall privacy cost of the full mechanism.

For the approach based on truncated Poisson subsampling, it is shown in~\cite{FeldmanS25} that the privacy cost is no larger than that of the Poisson subsampling version without the truncation. The Poisson subsampling can then be analyzed using the PRV accountant of ~\cite{GopiLW21}. While the PRV accountant usually gives better bounds than one can get from RDP-based accounting, we suffer a multiplicative overhead as the scaling factor $\kappa$ in Truncated Poisson is smaller than $k$. For the numerical analysis, we optimize over $q$ to control the overall variance one gets from this process. Note that when $k$ is large and $q \approx \frac{k-O(\sqrt{k})}{\Delta}$, one would expect truncation to be rare, and thus $\kappa$ to be close to $q\Delta$ and thus to $k$. 

One can also study a different subsampling process where $I$ is a uniformly random subset of size $k$. 
We may expect better privacy bounds to hold for this version, intuitively as there is more randomness compared to partitioned subsampling. Feldman and Shenfeld~\cite{FeldmanS25} show that the privacy bound of this variant is no larger than that of partitioned subsampling. We conjecture that the privacy cost of this version is closer to (untruncated) Poisson with sampling rate $q=k/\Delta$. Since the latter can be more precisely accounted for using the PRV accountant, we expect careful numerical accounting of this version to do better than the RDP-based bounds for partitioned subsampling.
\fi
\ifconf\else
\paragraph{Ensuring block-wise norm bound:}
We now formally show that one can reduce the problem of computing means of $\ell_2$ bounded vectors to our setting of block-wise bounded norm. Without loss of generality, we can assume that each input vector has $\ell_2$ norm 1. Our main reduction is based on the techniques developed by Asi {\em et al.}~\cite{AsiFNNT23} in the context of communication-efficient algorithms for mean estimation with local differential privacy. Their randomizer for mean estimation relies on a randomized dimensionality reduction followed by an optimal differentially private randomizer in lower dimension referred to as {\tt PrivUnit}. {\tt PrivUnit} requires a vector of unit length as an input, whereas randomized dimensionality reductions they use result in vectors of varying lengths. Asi {\em et al.}~ apply scaling to ensure that the norm condition is satisfied and developed several techniques for the analysis of the error resulting from this  step. We observe that their dimensionality reductions can be used just as (randomized) linear maps (in the same dimension) with each block of $B$ coordinates in the image corresponding to projection of the original input into $B$ dimensions. Thus we can apply clipping/scaling to ensure that block norms are upper bounded and then analyze the resulting error in essentially the same way as in \cite{AsiFNNT23}. We defer full proofs to \cref{app:truncation_analysis}.
Our first application of this approach shows that a random rotation with simple block norm clipping to $\sqrt{B/D}$ achieves expected squared error of $1/B$.

\begin{theorem}
For a vector $v=v_1,\ldots,v_\Delta\in \Re^D$, where $v_i\in \Re^B$ let $\mathtt{blkscale}_B(v)$ denote the vector $u=u_1,\ldots,u_\Delta$, such that for every $i\in [\Delta]$, $u_i=\sqrt{B/D} \cdot \frac{v_i}{\|v_i\|}$. Let $U\in \Re^{D\times D}$ be a randomly and uniformly chosen rotation matrix. Then 
for every $v\in \Re^D$ such that $\|v\|_2=1$, $$\E_U\left[\left\|U^\top \mathtt{blkscale}_B(Uv) - v\right\|_2^2 \right]= O\left(\frac{1}{B}\right) .$$
\end{theorem}
While this method is simple to describe and analyze, it is relatively inefficient as it requires $D^2$ multiplications. We also show that a significantly more efficient scheme from \cite{AsiFNNT23} based on Subsampled Randomized Hadamard Transform (SHRT) can also be easily adapted to our setting. Specifically, let $W_H=SHT$ denote the following distribution over random matrices: $H \in \Re^{D\times D}$ is the Hadamard matrix, $S \in \Re^{D\times D}$ is a random permutation matrix, and $T \in \Re^{D\times D}$ is a diagonal matrix where $T_{i,i}$ are independent samples from the Rademacher distribution (that is, uniform over $\pm1$). An important (and well-known) property of this family of matrices is that multiplication by $W_H$ and $W_H^\top$ can be performed in time $O(D\log D)$.
\begin{theorem}
Let $W_H\in \Re^{D\times D}$ be a randomly and uniformly chosen $SHT$ matrix as described above. Then 
for every $v\in \Re^D$ such that $\|v\|_2=1$, $$\E_{W_H}\left[\left\|W_H^\top \mathtt{blkscale}_B(W_Hv) - v\right\|_2^2 \right]= O\left(\frac{\log^2 D}{B}\right) .$$ Further, multiplication by $W_H$ and $W_H^\top$ can be performed in time $O(D\log D)$.
\end{theorem}
\fi

\section{$k$-block-sparse-DPF Construction: Main Ideas}
\label{sec:ksparsedf-overview}
We sketch here the main ideas of our DPF construction, deferring full details to \cref{sec:ksparsedf}.
Our construction builds on the tree-based DPF construction from~\cite{CCS:BoyGilIsh16} for sharing a 1-sparse function $f : \{0,1\}^d \rightarrow \{0,1\}$. $f$ outputs` 0 on all but at most one input $\alpha \in \{0,1\}^d$. At a very high level, in their construction the client shares with each server a seed to a pseudo-random generator. Each server uses their seed to expand out an entire tree with $2^d$ leaves, where each node in the tree contains the seed to a PRG and some additional ``control bits''. The client also sends a (public) ``correction word'' for each layer of the tree. The control bits and the correction word are used to ensure that in each layer, the strings in each node are secret shares of the zero string, except for the single node in that layer that is on the path from the root to the leaf $\alpha$ corresponding to the non-zero output of $f$.

In our construction of $k$-sparse DPFs, there are $k$ non-zero nodes in each layer: the nodes that are on the paths from the root to the $k$ non-zero leaves. A naive suggestion could be to also use $k$ correction words for each layer, but this increases the server's computation by a $k$-fold multiplicative overhead (in a nutshell, each server would need to apply a correction procedure for each node and for each correction word). Instead, we use cuckoo hashing and a collection of slightly more than $k$ correction words per layer, where each node in the layer only needs to apply a correction procedure to two correction words that are relevant to it (the relevant correction words are determined by two hash functions that are public). This reduces the server's computation to 2 corrections per node (the correction procedure is just an XOR: it is quite lightweight). We can also use 3 or 4 hash functions to reduce the number of correction words per layer to be very close to $k$, see Remark~\ref{remark:numfuncs}.

To handle {\em block-sparse} functions we modify the tree: each leaf now corresponds to an entire block (and all but $k$ of them will be zero). The servers expand the seed corresponding to each leaf into an entire block, and the correction words for the last layer ensure that these expansions are secret shares of the correct output (thus the correction words for the last layer are of length $B$). This avoids the high cost of repeating the bit-output construction $\blocksize$ times. Rather than paying a $\blocksize$-multiplicative overhead in the key size, we pay the cost of a single bit-output construction, plus $O(k \cdot \blocksize)$ for the correction bits in the last layer. We also only have only a single PRG evaluation per block/leaf (with a large output). This approach reduces the number of large PRG evaluations by a factor of 2-3 compared to techniques from prior work that used cuckoo hashing to construct sparse DPFs (see Section \ref{sec:related} and Remark \ref{remark:cuckoo}). As noted above, if we use 3 or 4 hash functions for cuckoo hashing, then the number of correction words for the last layer is very close to $k$ and the communication complexity approaches $k \cdot B$.

\paragraph{Zero-knowledge proofs of validity.} We construct an efficient proof-system that allows a client to prove that it shared a valid block-sparse DPF. The proof is divided into two components:
\begin{enumerate}
\item $k$ correction-bit sparse. The client proves that at most $k$ of the (secret shared) correction bits corresponding to leaves in the tree are non-zero. 

These are just bits so here we can use an efficient proof systems of \cite{CCS:BoyGilIsh16} for sharing standard DPFs (the client also needs to send 1-sparse DPFs whose sum is the vector of control bits).

\item $k$ block-sparse. Given that at most $k$ of the correction bits for the leaves are non-zero, the client proves that there are at most $k$ non-zero blocks in the output.

Consider the final layer of the DPF tree: in a zero block, the two PRG seeds held by the servers are identical (shares of the zero string), whereas in a non-zero block, they are different. Rather than expanding the seeds to $B$ group elements (as in the vanilla construction above), we add another $\lambda$ bits to the output, and we also add $\lambda$ corresponding bits to each correction word. We refer to these as the check-bits of the PRG outputs / correction words, and we refer to the original outputs as the payload bits. In the zero blocks the check-bits of the outputs should be identical: subtracting them should results in a zero vector. In each non-zero block, the check-bits of the (appropriate) correction word are chosen so that the correction procedure will result in an all-zero check-bit string. Thus, in our proof system, the servers verify that, in each block, the secret-shared check-bits are indeed all zero. This can be done quite efficiently. 

\end{enumerate}

The additional cost for the proof (on top of the construction above) is $O(k \cdot d \cdot \lambda)$ communication, $(k \cdot \poly(d,\lambda))$ client work, $(k \cdot 2^d \cdot \poly(d,\lambda))$ server work for each server. The proof system is sound against a malicious client, but we assume semi-honest behavior by the servers.

\section{Experimental Evaluation}
\label{sec:experiments}
In this section, we give empirical evidence demonstrating that PREAMBLE is accurate, private, and communication efficient.

\paragraph{\bf Blocking improves communication cost:}

Figure~\ref{fig:comm_cost} shows the required communication, or DPF key size, for $\lambda=128$, a group of size $2^{64}$ in the final tree layer. Communication depends on the sparsity $k$, block size $\blocksize$, and data dimension $D$. We vary $\blocksize$ and hold $k\blocksize=2^{18}$ constant. %
We also include a baseline communication comparison, which is the minimum communication required to communicate $k\blocksize$ group elements of size $\log |\mathbb{G}|=\lambda$ and the $\fulldim/\blocksize$ indices of the non-zero blocks to a single trusted server. Our plots are for $4$-way cuckoo hashing, where the cuckoo-hashing overhead is about 1.03. Using fewer hashes would increase the key size, but lead to more efficient server computations.

\paragraph{Blocking reduces Truncation Error}
We analyze empirically the ease of ensuring a block-wise norm bound. Theoretically, a norm bound of $O(\frac{1}{\sqrt{D}})$ on each entry can be ensured if one transforms to a $O(D)$-dimensional space; the hidden constants in the two $O(\cdot)$ notations are related, where making one smaller makes the other larger. In practice, a simple approach often used is to apply a random rotation to the vector, followed by truncating any entry that is larger than $\frac{c}{\sqrt{D}}$, for an appropriate constant $c$. For moderate values of $c$, the {\em truncation error} (i.e. the norm of the induced error) is small for a random rotation. Imposing the weaker condition that each block has $\ell_2$ norm at most $\frac{c}{\sqrt{D/B}}$ results in a lower truncation error. In \cref{fig:truncation-error}, we plot the truncation error as a function of $c$, when we apply a random rotation, for different values of the block size $B$. It is easy to see that $c<1$ will result in non-trivial truncation error for any block size. Our plots show that even moderately large $B$ allow us to take $c$ very close to $1$ for a negligible truncation error.

\begin{figure}[H]
    \centering
    \begin{subfigure}{0.29\textwidth}
        \centering
        \includegraphics[width=\textwidth]{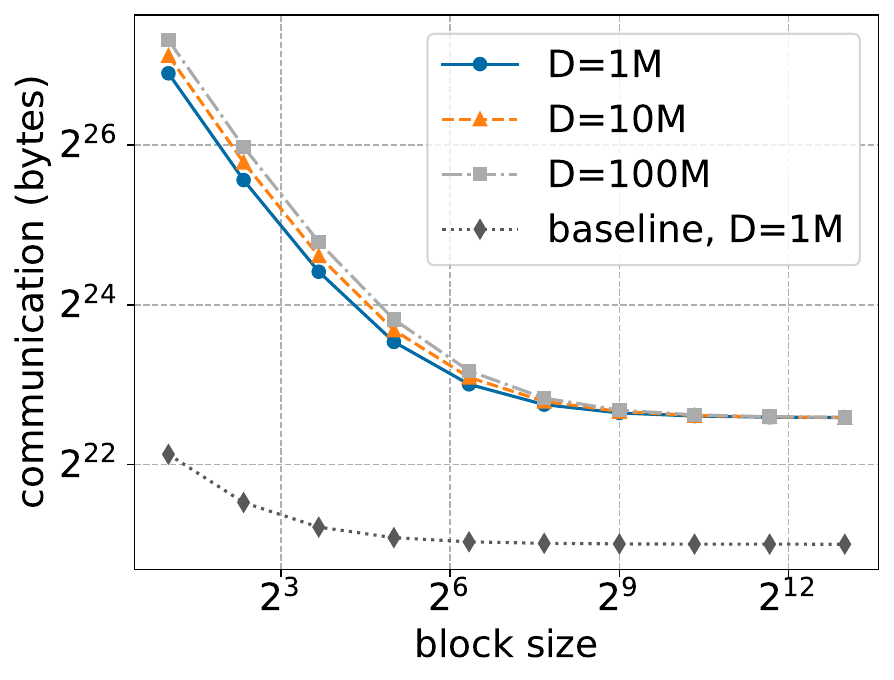}
    \caption{}
    \label{fig:comm_cost}
    \end{subfigure}
    \begin{subfigure}{0.30\textwidth}
        \centering
        \includegraphics[width=\textwidth]{figures/truncation_error_1M.pdf}
        \caption{}
        \label{fig:truncation-error}
    \end{subfigure}
        \begin{subfigure}{0.36\textwidth}
        \centering
        \includegraphics[width=\textwidth]{figures/comparison_eps_1_d_8388608.pdf}
        \caption{}
        \label{fig:error_v_comm}
    \end{subfigure}
   
    \caption{\small (Left) Communication vs. Block size for our $k$-sparse DPF construction, compared to a single-trusted-server baseline. (Middle)
    The trade-off between the truncation error $\|Trunc_c(Gv) - Gv\|_2$ and constant $c$, where $v \in \Re^\fulldim$ is arbitrary, and $G$ is a random rotation matrix. The plots show the error for various block sizes, for  $\fulldim = 2^{20}$.  (Right) The trade-off between the standard deviation of the error and per-client communication $C = k B$, when computing the sum of $n=10^5$ vectors with dimension $\fulldim = 2^{23}$, with $(1.0, 10^{-6})$-DP. The blue 'x' shows the baseline approach of sending the whole vector.}
    \label{fig:blocking_helps}
\end{figure}

\paragraph{\bf Blocking is compatible with Privacy:}

We next evaluate the privacy-utility trade-off of our approach, compared to not using any sampling. For our baseline Gaussian mechanism, we use the Analytic Gaussian mechanism analysis from~\cite{Wang:2021}. For these numerical results, we assume that $L$ is fixed to $\sqrt{B/D}$. For each of the approaches, we compute the total variance of the error in the sum, which includes the privacy error that results from the numerical privacy analysis, and the sampling error as bounded by~\cref{thm:utility}. For the communication cost, we simply plot $kB$ as it is a good proxy for the actual communication cost for a large range of parameters. Based on the evaluation above, we consider values of $B$ that are in the range $(2^{10}, 2^{14})$.

\cref{fig:error_v_comm} shows the trade-off between communication cost and the standard deviation of the error for our algorithm, using partitioned subsampling, as well as the Gaussian baseline. As is clear from the plots, our approach allows us to significantly reduce the communication costs, at the price of a minor increase in the error. This holds for a range of $\fulldim$ from 1M to 8M, and for a range of $\eps$ values. (additional plots in the Appendix).

As we decrease the communication $kB$, the error in \cref{fig:error_v_comm} essentially stays constant until a point, and then rapidly increases. This is largely due to the fact that for large block sizes, the norm of each block is larger so that the required lower bound on $\sigma$ to ensure privacy amplification by subsampling is larger. While the plots are derived from the numerical analysis which is tighter, intuition for this can be derived from the condition $k > \sqrt{\Delta\log 1/\delta}/\sigma$, or equivalently $\sigma > \sqrt{\Delta\log 1/\delta}/k$ in \cref{thm:asymptotic--partitioned-privacy}. Thus for the case of small $kB$, one would prefer smaller block sizes. Our plots show that block sizes in the range $[2^{10},2^{13}]$ provide low error across a range of parameters. Recall that in \cref{fig:comm_cost}, we saw that block sizes above $2^{7}$ are sufficient to get most of the communication benefits.

Next we turn to experiments simulating the use of PREAMBLE in private model training. In \cref{fig:sigma_dpsgd}, we plot the overhead in the per-batch noise standard deviation vs. the block size, if we were to analyse DPSGD~\cite{CCS:ACGMMT16} with the Gaussian mechanism for each batch replaced by partitioned subsampling. Due to the more complex accounting that uses general Renyi subsampling, the increase here is larger.
Finally, we show some end-to-end experiments for private learning (\cref{fig:mnist} and \cref{fig:cifar}). We report results for using DP-SGD (with momentum) on MNIST~\cite{lecun2010mnist} and CIFAR-10~\cite{Krizhevsky09learningmultiple}. For MNIST, we use the model from~\cite{AsiFNNT23} which has $69050$ trainable parameters. For CIFAR, we train a simple two-layer neural network with $66954$ parameters on CLIP~\cite{clip} embeddings. 
We give additional details of the setup, including all hyperparameters in \cref{app:experiment_details}. Our results show that PREAMBLE allows for significant reduction in communication while incurring a small increase in accuracy. Indeed, for both datasets, PREAMBLE with the chosen parameters allows to communicate around $2\cdot 10^4$ parameters, compared to roughly $6 \cdot 10^4$ parameters using the Gaussian mechanism. Additional experiments are deferred to the supplement.

\begin{figure}
    \centering
    \begin{subfigure}{0.32\textwidth}
        \centering
        \includegraphics[width=\textwidth]{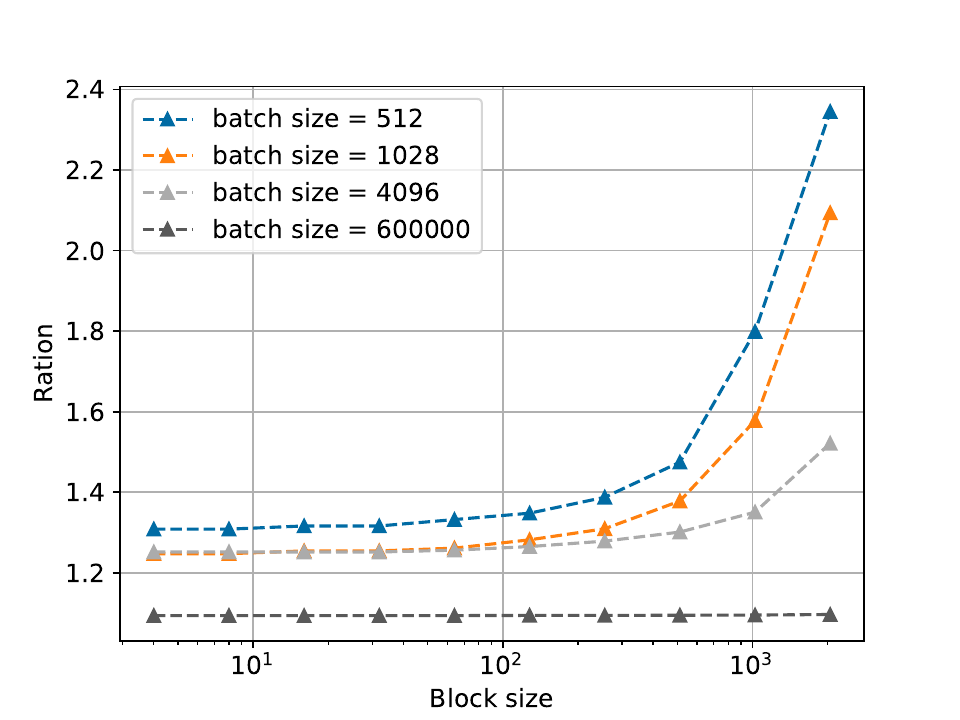}
        \caption{}
        \label{fig:sigma_dpsgd}
    \end{subfigure}
    \begin{subfigure}{0.32\textwidth}
        \centering
        \includegraphics[width=\textwidth]{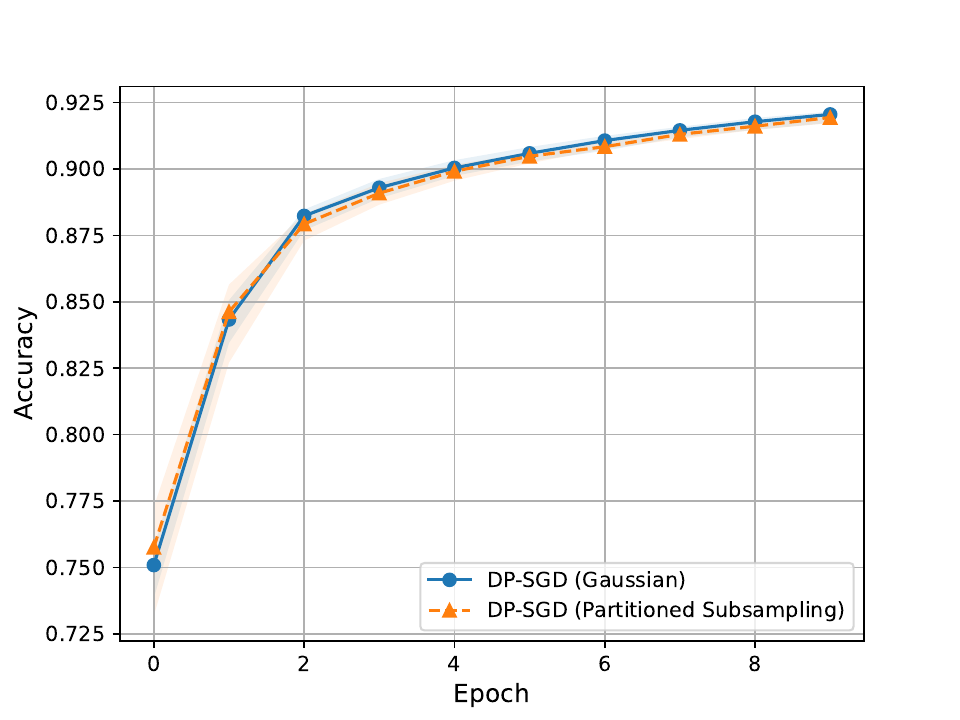}
        \caption{}
        \label{fig:mnist}
    \end{subfigure}
    \begin{subfigure}{0.32\textwidth}
        \centering
        \includegraphics[width=\textwidth]{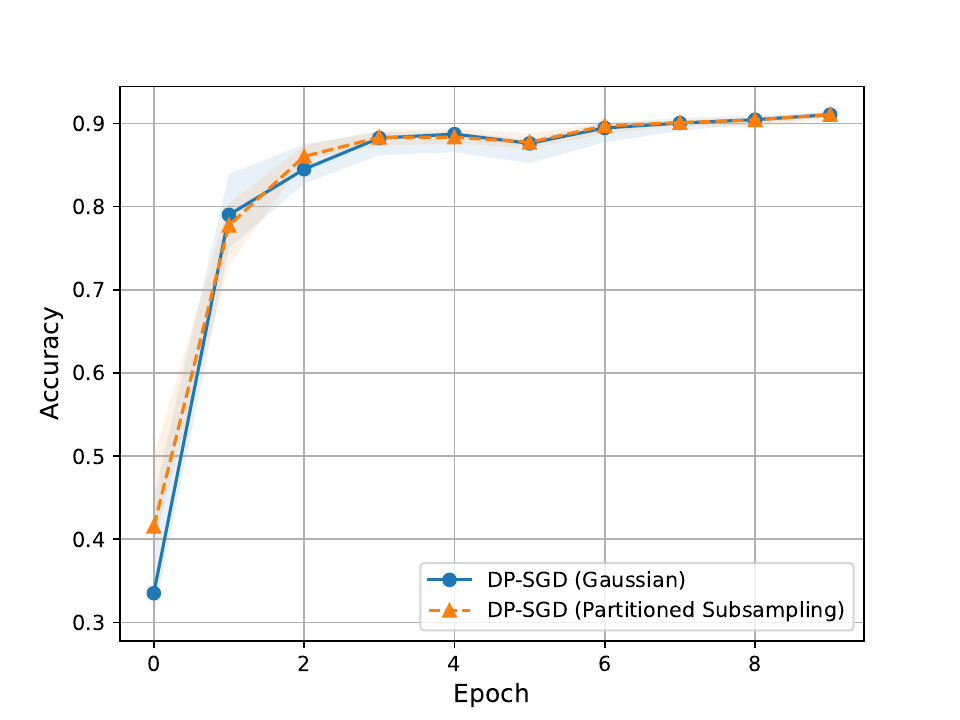}
        \caption{}
        \label{fig:cifar}
    \end{subfigure}
    \caption{\small
    Comparison between PREAMBLE and the Gaussian mechanism for private model training. (Left) Ratio of the per-batch noise standard deviation of PREAMBLE over the noise of the Gaussian mechanism. (Middle) Accuracy for PREAMBLE and Gaussian mechanism on MNIST with 90\% confidence intervals. (Right) Accuracy for PREAMBLE and Gaussian mechanism on CIFAR10 with 90\% confidence intervals.  }
    \label{fig:plot-privacy-tradeoff}
\end{figure}

\section{Conclusions}
\label{sec:conclusions}
In this work, we have described PREAMBLE, an efficient algorithm for communicating high-dimensional Euclidean vectors in the Prio model. Our construction reduces this problem to aggregating $k$-block-sparse vectors, using random sampling, and privacy amplification-by-sampling type analyses to allow private aggregation with a small overhead in accuracy. We showed how to efficiently communicate such vectors, and construct zero-knowledge proofs to validate a bound on the Euclidean norm and $k$-block-sparsity of these vectors. Our algorithms require client communication proportional to the sparsity $kB$ of these vectors, and our client computation also scales only with $kB$ for parameters of interest. Our construction allows the servers to reconstruct each contribution using $O(\fulldim)$ field operations and PRG evaluations in counter mode. 

We leave open some natural research directions. %
Our numerical privacy analyses are close to tight but still have gaps. We conjecture that the $k$-out-of-$\Delta$ sampling approach should admit better privacy analyses than the partitioning-based approach, and it should be no worse than Poisson sampling.
Our approach based on Cuckoo hashing with two hash functions incurs a constant factor communication overhead, and has a $O(\frac 1 k)$ failure probability. While the overhead can be reduced using more hash functions, the failure probability %
remains $k^{-c}$ for a small constant $c$~\cite{KirschMW08}. While for our application to approximate aggregation, this has little impact, it would be natural to design a version of our scheme that has a negligible failure probability without increasing the server compute cost.

\bibliographystyle{alpha}
\bibliography{extrarefs,abbrev3,crypto,refs}
\appendix
\section{Proof of Theorem \ref{thm:asymptotic--partitioned-privacy}}
\label{sec:privacy-proof}
We will  need the following asymptotic bound on the privacy of Poisson subsampling of the Gaussian noise addition.
\begin{lemma}[\cite{CCS:ACGMMT16}]
	\label{lem:poisson:privacy}
	Let $A_1,\ldots,A_T: (\Re^{B})^n\rightarrow \Re^{B}$ be a sequence of Gaussian noise addition algorithms with sensitivity $s$ and noise scale $\sigma$. Let $P_\eta(A_1,\ldots,A_T)$ be the Poisson subsampling scheme in which each user's data is used in each step with probability $\eta$ randomly and independently (of other users and steps). Then, there exist a constant $c$ such that for every $\delta >0$, if $\sigma/s \geq c \eta \sqrt{T\log(1/\delta)}$ then $P_\eta(A_1,\ldots,A_T)$  satisfies $(\eps,\delta)$-DP for $\eps=O(\eta \frac{s}{\sigma} \sqrt{T\log(1/\delta)}$.
\end{lemma}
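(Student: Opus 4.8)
This is the moments-accountant bound of Abadi et al.~\cite{CCS:ACGMMT16} specialized to Gaussian noise, and I would reprove it along their lines. The plan is to track, for a mechanism $M$, the quantity
\[
\alpha_M(\lambda)\;=\;\sup_{D\sim D'}\ \log\ \E_{o\sim M(D)}\!\left[\left(\tfrac{p_{M(D)}(o)}{p_{M(D')}(o)}\right)^{\lambda}\right],
\]
the log of the $\lambda$-th moment of the privacy-loss random variable, over positive integers $\lambda$ (this is $\lambda$ times the order-$(\lambda+1)$ R\'enyi divergence). Three standard facts drive the argument: (i) \emph{post-processing}, $\alpha_{f\circ M}(\lambda)\le\alpha_M(\lambda)$ for any randomized $f$ independent of the data; (ii) \emph{composition}, $\alpha_{M_1\circ\cdots\circ M_T}(\lambda)\le\sum_{t=1}^{T}\alpha_{M_t}(\lambda)$ for an adaptive sequence; and (iii) \emph{conversion}, if $\alpha_M(\lambda)\le a$ then, by Markov applied to $e^{\lambda Z}$, the privacy loss exceeds $\eps$ with probability at most $e^{a-\lambda\eps}$, so $M$ is $(\eps,\delta)$-DP whenever some admissible $\lambda$ satisfies $e^{\alpha_M(\lambda)-\lambda\eps}\le\delta$ (both orderings of the neighboring pair being covered by the $\sup$). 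All three are immediate from the definitions.

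The technical core is a bound on $\alpha$ for one Poisson-subsampled Gaussian step: there are absolute constants $c_0,c_1$ such that, for the step mechanism (Gaussian noise of scale $\sigma$ added to a sum of $s$-bounded contributions, each included independently with probability $\eta$),
\[
\alpha_{M_t}(\lambda)\ \le\ c_0\,\frac{\eta^2\lambda^2 s^2}{\sigma^2}\qquad\text{for every integer }1\le\lambda\le c_1\,\frac{\sigma^2}{s^2}.
\]
This is Lemma~3 of~\cite{CCS:ACGMMT16}. To prove it: by (i), adding the (data-independent) contributions of the users not involved in the neighboring change is post-processing, so it suffices to bound $\alpha$ for the pure subsampled Gaussian, comparing, after rotating the single differing contribution $v$ (with $\|v\|_2\le s$) to a coordinate axis and using the product form of the Gaussian density, the one-dimensional laws $\mu_0=\mathcal{N}(0,\sigma^2)$ and $\mu_1=(1-\eta)\mathcal{N}(0,\sigma^2)+\eta\,\mathcal{N}(s,\sigma^2)$. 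One then estimates $\E_{\mu_1}[(\mu_1/\mu_0)^{\lambda}]$ and $\E_{\mu_0}[(\mu_0/\mu_1)^{\lambda}]$ by expanding the $\lambda$-th power with the binomial theorem and bounding each term with elementary Gaussian-moment computations; the $\eta^2$ term dominates and the higher powers of $\eta$ get absorbed into $c_0$ precisely for $\lambda$ in the stated range.

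Given the per-step bound, (ii) gives $\alpha_{P_\eta}(\lambda)\le Tc_0\eta^2\lambda^2 s^2/\sigma^2$ for all admissible $\lambda$, and by (iii) $P_\eta$ is $(\eps,\delta)$-DP as soon as $Tc_0\eta^2\lambda^2 s^2/\sigma^2-\lambda\eps\le\log\delta$ for some integer $\lambda\in[1,c_1\sigma^2/s^2]$. The left side is an upward parabola in $\lambda$ minimized near $\lambda^\star\approx\eps\sigma^2/(2Tc_0\eta^2 s^2)$, with minimum value $-\eps^2\sigma^2/(4Tc_0\eta^2 s^2)$; requiring this to be at most $\log\delta$ gives exactly $\eps=\Theta\!\big(\eta\tfrac{s}{\sigma}\sqrt{T\log(1/\delta)}\big)$, and substituting this $\eps$ back yields $\lambda^\star=\Theta\!\big(\tfrac{\sigma}{\eta s}\sqrt{\log(1/\delta)/T}\big)$. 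The hypothesis $\sigma/s\ge c\eta\sqrt{T\log(1/\delta)}$ is what guarantees, for $c$ a large enough absolute constant, that this $\lambda^\star$ is $\ge 1$ (so it may be rounded up to a positive integer at a constant-factor cost) and lies below the admissible ceiling $c_1\sigma^2/s^2$, and that the resulting $\eps$ is itself bounded by an absolute constant, consistent with the conclusion $\eps=O(\eta\tfrac{s}{\sigma}\sqrt{T\log(1/\delta)})$; in the degenerate sub-regime $\eta^2T=o(1)$ one instead takes $\lambda$ at the ceiling $c_1\sigma^2/s^2$, which gives the same conclusion up to constants.

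\textbf{Main obstacle.} The single-step moment estimate is the crux: establishing $\alpha_{M_t}(\lambda)=O(\eta^2\lambda^2 s^2/\sigma^2)$ \emph{uniformly} over $\lambda\le c_1\sigma^2/s^2$ requires the careful control of the moment generating function of the subsampled-Gaussian privacy loss sketched above (both $\mu_1/\mu_0$ and $\mu_0/\mu_1$, and the $\lambda$-dependent range). Everything downstream --- post-processing, composition, the Markov conversion, and the optimization over $\lambda$ --- is routine bookkeeping. One could alternatively black-box the per-step bound by invoking off-the-shelf R\'enyi-DP bounds for the subsampled Gaussian mechanism (e.g.~\cite{Wang:2021}) together with the standard R\'enyi-to-$(\eps,\delta)$ conversion, trading a self-contained argument for a citation of the key inequality.
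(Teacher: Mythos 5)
The paper cites this lemma from Abadi et al.~\cite{CCS:ACGMMT16} without giving a proof, so there is no internal proof to compare against. Your proposal is a faithful reconstruction of the moments-accountant argument from that cited source, which is the original proof: defining the log-moment $\alpha_M(\lambda)$ of the privacy-loss random variable, proving the per-step bound $\alpha_{M_t}(\lambda)=O(\eta^2\lambda^2 s^2/\sigma^2)$ for the Poisson-subsampled Gaussian over a bounded range of moment orders (Lemma~3 there), composing additively over $T$ steps, and converting via a Markov argument while optimizing over $\lambda$. The structure and the choice of $\lambda^\star$ are exactly right.

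One genuine issue with the tail end of the argument. Abadi et al.'s Theorem~1 carries the additional side condition $\eps < c_1 \eta^2 T$, which is precisely what ensures the optimizing $\lambda^\star$ falls below the admissibility ceiling of the per-step moment bound; the lemma as quoted here drops that condition. Your proposed patch for when it fails, namely ``take $\lambda$ at the ceiling $c_1\sigma^2/s^2$, which gives the same conclusion up to constants,'' does not actually recover the stated $\eps$. Plugging $\lambda = c_1\sigma^2/s^2$ into $T c_0 \eta^2 \lambda^2 s^2/\sigma^2 - \lambda\eps \le \log\delta$ forces
\begin{align*}
\eps \;\ge\; T c_0 c_1 \eta^2 \;+\; \frac{s^2\log(1/\delta)}{c_1\sigma^2},
\end{align*}
and the second term exceeds $\eta\,\tfrac{s}{\sigma}\sqrt{T\log(1/\delta)}$ by the factor $\tfrac{s\sqrt{\log(1/\delta)}}{c_1\sigma\eta\sqrt{T}}$, which is unbounded as $\lambda^\star$ grows past the ceiling. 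So in the sub-regime $\eta^2 T = o(1)$ the optimization step as written does not yield the claimed $\eps$; the side condition $\eps < c_1 \eta^2 T$ (equivalently, $\sigma/s \gtrsim \sqrt{\log(1/\delta)}/(\eta\sqrt{T})$) should be stated as part of the hypotheses, just as in Abadi et al. In the parameter regime in which this lemma is invoked in the present paper the side condition holds, so the omission is harmless in context, but your degenerate-case step is not a correct substitute for it.
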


\begin{proof}[of \cref{thm:asymptotic--partitioned-privacy}]
We first observe that we can analyze algorithm as an independent composition of $A_1,\ldots,A_k$, with the instance $A_j$ outputting the coordinates of $W$ in the set $S_j = \{(j-1)\Delta/k+1,\ldots, j\Delta/k\}$ for $j\in [k]$. For convenience of notation we will analyze $A_1$ (with the rest being identical). Observe that the output of $A_1$ is $W_1,\ldots,W_{\Delta/k}$. Now, by the definition of the sampling scheme any user's data is summed in exactly one (uniformly chosen) of $W_1,\ldots,W_{\Delta/k}$.
Each of  these algorithms is Gaussian noise addition with sensitivity $L \Delta/k$ and scale $\sigma$. This implies that we can apply results for privacy amplification by allocation for Gaussian noise from \cite{FeldmanS25} to analyze this algorithm. For the analytic results we use an upper bound on the privacy parameters of random allocation in terms of  Poisson subsampling for Gaussian noise. Specifically,  for every $\eps$,  $k$-wise composition of 1 out of  $\Delta/k$ random allocation for Gaussian noise addition algorithms satisfies $(\eps, \delta_P + \Delta \delta_0 + \delta')$-DP, where $(\eps,\delta_P)$ are the privacy parameters of $\Delta$-step Poisson subsampling scheme with rate $\eta = \frac{k}{\Delta (1-\gamma)}$ for $\gamma = (e^{\eps_0}-e^{\eps_0})\sqrt{\frac{k}{2\Delta} \ln \left(\frac{k}{\delta'} \right)}$.  
Here $(\eps_0,\delta_0)$ are the privacy parameters of each Gaussian noise addition. Note that the sensitivity of the aggregate in each block is $L\Delta/k$. Therefore, by setting $\delta_0=\delta/(3\Delta)$ we get $\eps_0 = \frac{L \Delta \sqrt{2\ln(15\Delta/(4\delta))}}{k\sigma}$ (Thm.~\ref{thm:gaussian}). We set $\delta'=\delta/3$ and note that by the first part of our assumption on $\sigma/L$, $\eps_0 \leq 1$ and therefore $e^{\eps_0}-e^{\eps_0} \leq 3 \eps_0$. Now the second part of our assumption of $\sigma/L$ implies that 
\begin{align*}
\gamma &\leq 3 \eps_0 \sqrt{\frac{k}{2\Delta} \ln \left(\frac{3k}{\delta} \right)}  \\&
\leq  \frac{3 L \Delta \sqrt{2\ln(15\Delta/(4\delta))}}{k\sigma} \sqrt{\frac{k}{2\Delta} \ln \left(\frac{3k}{\delta} \right)} \\
&\leq \frac{3 L \sqrt{\Delta} \ln(15\Delta/(4\delta))}{\sqrt{k}\sigma} \leq 1/2.
\end{align*}

 This implies that $\eta \leq \frac{2k}{\Delta}$. Now, by Lemma \ref{lem:poisson:privacy},  the $\Delta$-step Poisson subsampling scheme with subsampling rate $\eta$ is $(\eps,\delta/3)$-DP for $$\eps = O\left(\frac{L\sqrt{\Delta} \sqrt{\log(1/\delta)}}{\sigma  }\right).$$ Here we note that the conditions of the lemma translate to 
 $$\frac{\sigma k}{L \Delta} \geq c \frac{k}{\Delta} \sqrt{\Delta \log (1/\delta)}$$ or equivalently, $\frac{\sigma }{L} \geq c \sqrt{\Delta \log (1/\delta)}$ (which is ensured by the third part of our assumption). Finally, noting that $\Delta \delta_0 + \delta' \leq 2\delta/3$, we get the claimed bound.
\end{proof} 

\section{Ensuring block-wise norm bound}
\newcommand{\clipop}{\mathtt{blkclip}}

\label{app:truncation_analysis}
We now formally show that one can reduce the problem of computing means of $\ell_2$ bounded vectors to our setting of block-wise bounded norm. Without loss of generality, we can assume that each input vector has $\ell_2$ norm 1. Our main reduction is based on the techniques developed by Asi {\em et al.}~\cite{AsiFNNT23} in the context of communication-efficient algorithms for mean estimation with local differential privacy. Their randomizer for mean estimation relies on a randomized dimensionality reduction followed by an optimal differentially private randomizer in lower dimension referred to as {\tt PrivUnit}. {\tt PrivUnit} requires a vector of unit length as an input, whereas the randomized dimensionality reductions they use result in vectors of varying lengths. Asi {\em et al.}~ apply scaling to ensure that the norm condition is satisfied and develop several techniques for the analysis of the error resulting from this  step. We observe that their dimensionality reductions can be used just as (randomized) linear maps (in the same dimension) with each block of $B$ coordinates in the image corresponding to the projection of the original input into $B$ dimensions. Thus we can apply clipping/scaling to ensure that block norms are upper bounded and then analyze the resulting error in essentially the same way as in \cite{AsiFNNT23}.
Our first application of this approach shows that a random rotation with simple block norm clipping to $\sqrt{B/D}$ achieves expected squared error of $1/B$.

\begin{theorem}
\label{thm:truncated-rotation-error}
For a vector $v=v_1,\ldots,v_\Delta\in \Re^D$, where $v_i\in \Re^B$ let $\clipop_B(v)$ denote the vector $u=u_1,\ldots,u_\Delta$, such that for every $i\in [\Delta]$, $u_i= \mbox{clip}_{\sqrt{B/D}}(v_i)$. Let $U\in \Re^{D\times D}$ be a randomly and uniformly chosen rotation matrix. Then 
for every $v\in \Re^D$ such that $\|v\|_2\leq 1$, $$\E_U\left[\left\| U^\top \clipop_B(Uv) - v\right\|_2^2\right]= O\left(\frac{1}{B}\right) .$$
\end{theorem}
Our proof of this result relies on the following lemma from ~\cite{AsiFNNT23}.
\begin{lemma}
\label{lemma:unit-proj}
Let $x$ be a random unit vector on the unit ball of $\Re^D$ and $z$ be the projection of $x$ onto the last $B$ coordinates. We have
\[
    \left|\E[\|z\|_2]-\sqrt{B/D}\right| = O\left(\frac{1}{\sqrt{BD}}\right)
\]
\end{lemma}
\begin{proof}[Proof of Theorem \ref{thm:truncated-rotation-error}]
We first note that the squared error scales quadratically with the norm of $v$ and therefore it is sufficient to prove the theorem for unit norm $v$. For a given $U$, Let $w = Uv$ and $w_1,\ldots,w_\Delta$ be the blocks of size $B$ in $w$. Observe that when $U$ is a randomly chosen rotation matrix, $w$ is a random and uniform unit vector. Naturally, the uniform distribution over random unit vectors is not affected by permuting coordinates and therefore for every $i\in [\Delta]$ we can apply Lemma~\ref{lemma:unit-proj} to get
\[
    \left|\E_U[\|w_i\|_2]-\sqrt{B/D}\right| = O\left(\frac{1}{\sqrt{BD}}\right) .
\]
In addition, by the same symmetry, we have that
\[
    \E_U[\|w_i\|^2] = \frac{B}{D} .
\]
Combining these two results, we have
\begin{align*}
\E_U\left[\left\| U^\top \clipop_B(Uv) - v\right\|_2^2\right] &= \E_U\left[\left\| UU^\top \clipop_B(Uv) - Uv\right\|_2^2\right]\\
    &=\E_U\left[\left\| \clipop_B(w) - w\right\|_2^2\right]\\
    &= \sum_{i\in[\Delta]} \E_U\left[\left\| \mbox{clip}_{\sqrt{B/D}}(w_i) - w_i\right\|_2^2\right]\\
    &\leq \sum_{i\in[\Delta]} \E_U\left[\left(\|w_i\|_2 - \sqrt{B/D}\right)_2^2\right]\\
    &= \sum_{i\in[\Delta]} \E_U\left[\left(2\frac{B}{D} - 2\sqrt{B/D} \|w_i\|_2 \right)\right]\\
    &= 2 \sqrt{B/D} \cdot \sum_{i\in[\Delta]} \E_U\left[\left(\sqrt{B/D}  - \|w_i\|_2 \right)\right]\\
    &= O\left(\frac{D}{B} \cdot \sqrt{B/D} \cdot \frac{1}{\sqrt{BD}}\right) = O(1/B) \ .
\end{align*}
\end{proof}

While this method is simple to describe and analyze, it is relatively inefficient as it requires $D^2$ multiplications. We also show that a significantly more efficient scheme from \cite{AsiFNNT23} based on Subsampled Randomized Hadamard Transform (SHRT) can also be easily adapted to our setting at the expense of somewhat worse $\tilde O(1/\sqrt{B})$ expected squared error. Specifically, let $W=SHT$ denote the following distribution over random matrices: $H \in \Re^{D\times D}$ is the Hadamard matrix, $S \in \Re^{D\times D}$ is a random permutation matrix, and $T \in \Re^{D\times D}$ is a diagonal matrix where $T_{i,i}$ are independent samples from the Rademacher distribution (that is, uniform over $\pm1$). An important (and well-known) property of this family of matrices is that multiplication by $W$ and $W^\top$ can be performed in time $O(D\log D)$ \cite{AilonC06}.
\begin{theorem}
\label{thm:shrt-trunc-error}
Let $W\in \Re^{D\times D}$ be a randomly and uniformly chosen $SHT$ matrix as described above. Then 
for every $v\in \Re^D$ such that $\|v\|_2\leq 1$, $$\E_{W} \left[\left\|W^\top \clipop_B(Wv) - v\right\|_2^2\right] = O\left(\frac{\log^2 D}{B}\right) .$$ 
Further, multiplication by $W$ and $W^\top$ can be performed in time $O(D\log D)$.
\end{theorem}
To prove this result, we first establish some relevant properties of SHRT.
\begin{lemma}[\cite{AsiFNNT23}]\label{lem:shrt-conc}
Suppose $W_B = S_BHT$ is obtained with $S_B$ being a $B\times D$ uniform sampling matrix without replacement, $H$ being Hadamard matrix and $T$ being a Rademacher diagonal matrix as above. Then for some constant $C>0$, for any fixed $u\in\Re^D$ of unit Euclidean norm and $\delta\in(0,1)$,
 $$
    \Pr_{W_B}\left[\left| \|W_B u\|_2^2 - \sqrt{\frac BD} \right| > C\sqrt{\log^2(B/\delta)/D}\right] < \delta .
    $$
In particular, choosing $\delta = 1/D$ implies that for some constant $C_1>0$,
 $$
    \left| \E_{W_B}\left[ \|W_B u\|\right]  - \sqrt{\frac BD} \right| \leq C_1 \sqrt{\log^2(D)/D} .
    $$
\end{lemma}

\begin{proof}[Proof of Theorem \ref{thm:shrt-trunc-error}]
As in the proof of Theorem \ref{thm:truncated-rotation-error}, we restrict our attention to the case $\|v\|=1$ and let
 $w = w_1,\ldots,w_\Delta=  Wv$. We note that matrix $S$ being a random uniform permutation implies that every $B$-block of coordinates in $Wv$ corresponds to picking $B$ coordinates of $HT$ randomly and uniformly without replacement. Therefore, we can apply Lemma~\ref{lem:shrt-conc} to obtain that for every $i\in [\Delta]$:
\[
    \left|\E_W[\|w_i\|_2]-\sqrt{B/D}\right| = O\left(\frac{\log(D)}{\sqrt{D}}\right) .
\]
In addition, by the permutation symmetry of the distribution of $S$, we have that
\[
    \E_W[\|w_i\|^2] = \frac{B}{D} .
\]
Now, following the same steps as in the proof of Theorem \ref{thm:truncated-rotation-error}, we have
\begin{align*}
\E_W\left[\left\| W^\top \clipop_B(Wv) - v\right\|_2^2\right] &\leq 2 \sqrt{B/D} \cdot \sum_{i\in[\Delta]} \E_W\left[\left(\sqrt{B/D}  - \|w_i\|_2 \right)\right]\\
    &= O\left(\frac{D}{B} \cdot \sqrt{B/D} \cdot \frac{\log(D)}{\sqrt{D}} \right) = O\left(\frac{\log(D)}{\sqrt{B}}\right) \ .
\end{align*}
\end{proof}

\section{$k$-block-sparse-DPF Construction Details}
\label{sec:ksparsedf}

\subsection{Secret-Sharing $k$-block-sparse Vectors}

We first review the tree-based DPF construction from~\cite{CCS:BoyGilIsh16}, both for single bit outputs and for group element outputs (using our notation). We then describe how we build on their techniques to construct efficient $k$-block-sparse DPFs. 

\paragraph{Tree-based DPF of ~\cite{CCS:BoyGilIsh16}.} The original tree-based DPF construction is formulated for one-sparse vectors without blocks. Let us suppose that the client wants to send a 1-sparse function $f : \{0,1\}^d \rightarrow \{0,1\}$ with an input domain size of $\fulldim=2^d$, where the output is non zero only on input $\alpha\in\{0,1\}^d$. Let us define $f_\layerind  : \{0,1\}^\layerind  \rightarrow \{0,1\}$ as the function that computes the sum $f_\layerind (x) = \sum_{y \in \{0,1\}^{d-\layerind }} f(x || y)$ where $||$ denotes concatenation. Note that $f_d =f$ and each $f_\layerind $ is $1$-sparse. Let $\alpha_\layerind $ be the input that produces a non-zero output for $f_\layerind $.

In the tree-based construction an invariant holds at every layer $\layerind $ in the tree.
Servers 1 and 2 hold functions $s_\layerind , t_\layerind  : \{0,1\}^\layerind  \rightarrow \{0,1\}^\lambda$ whose vectors of outputs are secret shares of $r_\layerind  \cdot e_{\alpha_\layerind }$, where $r_\layerind $ is a (pseudo) random $\lambda$-bit string and $e_{\alpha_\layerind }$ is the basis vector with value 1 at position $\alpha_\layerind $ and 0 elsewhere. In other words, $s_\layerind (x) - t_\layerind (x)$ is zero for $x \neq \alpha_\layerind $ and is a pseudorandom value $r_i$ for $x=\alpha_\layerind $. The servers also hold functions $u_\layerind , v_\layerind  : \{0,1\}^\layerind  \rightarrow \{0,1\}$, whose vectors of outputs are secret shares of $e_\layerind $. The client knows all secret-shared values.

When $\layerind =0$, defining these functions is simple: a client with input $x$ sets $s_0, t_0$ to return a constant $\lambda$-bit string chosen at random, and sets $r_0 = s_0(x)-t_0(x)$. It also sets $u_0$ to return a random bit and $v_0(x)= 1 - u_0(x)$.

For the inductive step, we do the following:
\begin{itemize}
    \item Each server provisionally expands out their seeds using a PRG $G:\{0,1\}^\lambda\rightarrow\{0,1\}^{2(\lambda+1)}$. We can think of the first half of the PRG's output as the left and the second half as the right child in the tree. They parse for 
    $x \in \{0,1\}^\layerind $:%
    \begin{align*}
        s'_{\layerind +1}(x ||0) || u'_{\layerind +1}(x|| 0) || s'_{\layerind +1}(x ||1) || u'_{\layerind +1}(x|| 1) &= G(s_\layerind (x))\\
        t'_{\layerind +1}(x ||0) || v'_{\layerind +1}(x|| 0) || t'_{\layerind +1}(x ||1) || v'_{\layerind +1}(x|| 1) &= G(t_\layerind (x))
    \end{align*}

    \item Let $\bar{b}_\layerind $ be such that $\alpha_{\layerind +1} \neq \alpha_\layerind  || \bar{b}_\layerind $, so that this is the term that needs to be corrected to zero. The client computes the correction according to:
    \begin{align*}%
        c_{\layerind +1} = s'_{\layerind +1}(\alpha_\layerind  || \bar{b}_\layerind ) - t'_{\layerind +1}(\alpha_\layerind  || \bar{b}_\layerind )
    \end{align*}
    and sends it to both servers. The servers then set, for each $x \in \{0,1\}^{\layerind }$, and $b\in \{0,1\}$:
    \begin{align*}
        s_{\layerind +1}(x || b) = s'_{\layerind +1}(x || b) - u_\layerind (x) c_{\layerind +1},\\
        t_{\layerind +1}(x || b) = t'_{\layerind +1}(x || b) - v_\layerind (x) c_{\layerind +1}.
    \end{align*}
    It is then easy to check that for $x \neq \alpha_\layerind $, the equality $u_\layerind (x)=v_\layerind (x)$ implies that $s_{\layerind +1}(x || b) = t_{\layerind +1}(x || b)$. Moreover for $y = \alpha_\layerind  || \bar{b}_\layerind $, we have
    \begin{align*}
        s_{\layerind +1}(y) - t_{\layerind +1}(y) &= s'_{\layerind +1}(y) - u_\layerind (\alpha_\layerind ) c_{\layerind +1}
                                - t'_{\layerind +1}(y) + v_\layerind (\alpha_\layerind ) c_{\layerind +1}\\
                                &= s'_{\layerind +1}(y) - t'_{\layerind +1}(y) - (u_\layerind (\alpha_\layerind ) -v_\layerind (\alpha_\layerind )) c_{\layerind +1}\\
                                &= s'_{\layerind +1}(y) - t'_{\layerind +1}(y) - 1 \cdot c_{\layerind +1}\\
                                &= 0.                                
    \end{align*}
    Here the last step follows by definition of $c_{\layerind +1}$.
    \item Finally, we need to correct the bit components. For this purpose, we compute two bit corrections. Recall that $u_{\layerind +1}(y) = v_{\layerind +1}(y)$ for each $y \neq \alpha_\layerind  || b$ for $b\in\{0,1\}$. 
    We compute $m_{\layerind +1}(0) = u_{\layerind +1}(\alpha_\layerind  || 0) - v_{\layerind +1}(\alpha_\layerind  || 0) + \bar{b}_\layerind$ and similarly $m_{\layerind +1}(1) = u_{\layerind +1}(\alpha_\layerind  || 1) - v_{\layerind +1}(\alpha_\layerind  || 1) + (1-\bar{b}_\layerind )$, and send both of these to the two servers. Similarly to above, the servers now do, for each $x \in \{0,1\}^\layerind $ and each $b \in \{0,1\}$:
    \begin{align*}
        u_{\layerind +1}(x || b) &=  u'_{\layerind +1}(x || b) - u_\layerind (x)m_{\layerind +1}(b)\\
        v_{\layerind +1}(x || b) &=  v'_{\layerind +1}(x || b) - v_\layerind (x)m_{\layerind +1}(b)
    \end{align*}
    The correctness proof is identical to the one for the $s-t$ case.
\end{itemize}

Note that the multiplications above all have one of the arguments being bits so this is point-wise multiplication. We have now verified that the invariant holds for $(\layerind +1)$. 

\hannah{We refer to the tuple $(c_i,m_i(0),m_i(1))$ as a correction word, where $c_i$ is the seed correction and $m_i(0),m_i(1)$ are the correction bits. Intuitively, since $u_{\layerind }(y) = v_{\layerind }(y)$ for each $y \neq \alpha_\layerind  $, each correction word is only applied to one expanded seed in each level. For all other expanded seeds, the correction word is either never applied (if $u_{\layerind }(y) = v_{\layerind }(y)=0$) or applied twice (if $u_{\layerind }(y) = v_{\layerind }(y)=1$), in which case these two applications cancel out.}

\paragraph{Non-zero output from group $\mathbb{G}$.} 
\cite{CCS:BoyGilIsh16} define a variant of their construction where the output of the DPF on input $\alpha$ outputs not 1, but rather a group element $\beta\in\mathbb{G}$. Given a $convert(\cdot)$ function, which converts a $\lambda$-bit string to a group element in $\mathbb{G}$, the changes to the construction are minimal. Since the invariant holds, $s_{n}(\alpha) - t_{n}(\alpha)\neq 0$ and $s_{n}(y) - t_{n}(y)= 0$ for all $y\neq\alpha$. Since the DPF according to Definition~\ref{def:dpf} should output $\beta$ 
on input $\alpha$, an additional correction word $c_{d+1}$, which consists only of a seed correction, is constructed such that either $convert(s_{d}(\alpha)) - convert(t_{d}(\alpha))+c_{d+1}=\beta$ or $convert(s_{d}(\alpha)) - convert(t_{d}(\alpha))-c_{d+1}=\beta$, depending on whether $u_{d}(\alpha)$ or $v_{d}(\alpha)$ is one.

\paragraph{The block-sparse case with block size $\blocksize>2$.}
We adapt the DPF construction of \cite{CCS:BoyGilIsh16} from point functions to block-sparse functions, where the output on any number of input values from a single block will be a non-zero group element. More formally, a block-sparse function $f_{\alpha,\beta}$ with block size $B$ for $\alpha\in\{0,1\}^d$ and $\beta=\{\beta_0,\dots,\beta_{\blocksize-1}\}\in\mathbb{G}^{B}$ is defined to be the function $f:\{0,1\}^{d+\log \blocksize}\rightarrow\mathbb{G}$ such that $f(\alpha||j)=\beta_j$ and $f(x)=0$ for $x\neq \alpha||j$ with $j\in[\blocksize]$.

To formulate a block-sparse DPF based on tree-based DPF construction, we can use a different PRG for the final tree layer compared to the previous tree layers, such that the output is $B\log|\mathbb{G}|$ bits rather than $2(\lambda+1)$ bits in the original construction. We call this $G'$.
\begin{align*}
        s'_{d+1}(x ||0) || \dots || s'_{d+1}(x ||\blocksize-1) &= G'(s_{d}(x))\\
        t'_{d+1}(x ||0) || \dots ||  t'_{d+1}(x || \blocksize-1) &= G'(t_d(x))
\end{align*}

The correction word can be constructed analogously to the original construction, and we make use of a $convert(\cdot)$ function, which maps a $\log|\mathbb{G}|$-length bit string to an element in $\mathbb{G}$. For any input $x=\alpha||j$ for any $j\in[\blocksize]$, we set $c_{d+1,j}$ such that $convert(s'_{d+1}(x)) - convert(t'_{d+1}(x))+c_{d+1,j}=\beta_j$ or $convert(s'_{d+1}(x)) - convert(t'_{d+1}(x))-c_{d+1,j}=\beta_j$, depending on whether $u_d(\alpha)$ or $v_d(\alpha)$ is one.

\hannah{This construction avoids the high cost of using the original DPF construction $\blocksize$ times, both in terms of computation and communication. In particular, the original construction would involve $\blocksize$ DPF keys of size $O(\lambda(d+\log \blocksize))$
each, while this construction yields a single DPF key of size $O(\lambda d+\blocksize)$. %
DPF key generation with the original construction will involve $O(d+\log \blocksize)$ PRG evaluations for each of the $\blocksize$ keys, while our optimization involves $O(d)$ PRG evaluations, as well as one larger PRG evaluation, where the size of this PRG output scales with $B$. In the evaluation step, when the entire tree is evaluated, naively using the original DPF construction $\blocksize$ times would result in $\blocksize\cdot 2^{d+\log \blocksize}$ PRG evaluations, while our optimization involves only $O(2^d)$ smaller and $O(2^d)$ larger PRG evaluations.
}

\paragraph{The $k$-block-sparse case}

We now describe the idea of a construction for $k$-block-sparse DPFs, as specified in Definition~\ref{def:kblockdpf}. Instead of a single index $\alpha_\layerind $ at each level, we have a set $\{\alpha^0_\layerind ,\dots,\alpha^{\Tilde{k}-1}_\layerind \}$, where $\Tilde{k}$ corresponds to the number of distinct $\layerind $-bit prefixes in $\alpha$, at most $k$. We will begin by formulating a change to the construction that allows us to share $k$-block-sparse vectors, which is a naive extension to the block-sparse version of the DPF construction of \cite{CCS:BoyGilIsh16}. Later, we introduce an optimization to reduce the number of correction words applied to each node. 

The invariant on all tree layers except the lowest one is essentially the same as before, except that there are now $k$ separate $u$ and $v$ functions per correction word at each layer, and the client will send $k$ correction words per layer. \hannah{The PRG at the upper level now outputs $2(k-1)$ additional bits, and the bit components of the expanded and corrected seeds are secret shares of an indicator, which specifies which correction word, if any, will be applied next.  As in the original construction, we maintain the goal that each correction word is applied to only at most one expanded seed in that layer. In particular, the correction word with index $\ell\in[\Tilde{k}]$ at level $i\in[d]$ will be applied to the expanded seed at position $\alpha_i^\ell\in[2^i]$. In the lowest layer, we can formulate a correction word, which is interpreted as a group element, by applying an idea analogous to that of the block-sparse case.}

\hannah{For the goal of generating DPF keys for a $k$-block-sparse vector, this construction avoids the overhead of generating $k\blocksize$ DPF keys from the original construction. We inherit all advantages of using blocks of size $B$ from the block-sparse construction and obtain further savings by allowing $k$ non-zero blocks. We can compare the the costs of naively using $k$ instantiations of a block-sparse DPF to those of a single $k$-block-sparse DPF instantiation. Asymptotically, the two approaches require the same amount of communication, with identical DPF key sizes, and the DPF key generation requires the same number of PRG evaluations. The savings for the construction come in the form of computation savings for server evaluation, where the number of PRG evaluations decreases by a factor of $k$, since servers must now evaluate a single tree, rather than $k$ trees when instantiating a 1-block-sparse DPF $k$ times. However, since each server applies up to $k$ correction words at each level, the total server computation still depends linearly on $k$.  This yields the following result

\begin{theorem}\label{thrm:construction_basic}
    Let $G:\{0,1\}^\lambda\rightarrow\{0,1\}^{2(\lambda+2)}$ and $G':\{0,1\}^\lambda\rightarrow\{0,1\}^{\blocksize \lceil\log |\mathbb{G}|\rceil}$ be pseudorandom generators. Then there is a scheme $(Gen,Eval)$ that defines a $k$-block-sparse DPF for the family of $k$-block-sparse functions $f'_{\alpha,\beta}:\{0,1\}^{d+\log \blocksize}\rightarrow\mathbb{G}$ with correctness error $0$. The key size is $kd(\lambda+4) + k\blocksize\lceil\log|\mathbb{G}|\rceil$. In $Gen$ the number of invocations of $G$ is at most $kd$, and the number of invocations of $G'$ is at most $k$. In $Eval$ the number of invocations of $G$ is at most $d$, and  $G'$ is invoked once. Evaluating the full vector requires $2^d$ invocations of $G$ and $2^d$ invocations of $G'$, and $O(k\fulldim)$ additional operations.
\end{theorem}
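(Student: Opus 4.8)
The plan is to write out the pair $(Gen,Eval)$ obtained by composing the three ingredients already described in \cref{sec:ksparsedf} — the tree-based DPF of~\cite{CCS:BoyGilIsh16}, its block-sparse variant that expands the final layer with the wide PRG $G'$, and the multi-branch generalization — and then to check correctness, the efficiency counts, and DPF security in turn. Write the target blocks as $\alpha^0,\dots,\alpha^{k-1}\in\{0,1\}^d$ and let $S_\ell=\{\alpha^i_\ell:i\in[k]\}$ be the set of distinct $\ell$-bit prefixes, so $|S_0|=1$ and $|S_d|\le k$. The invariant carried at every internal layer $\ell$ is the $k$-fold version of the one in \cref{sec:ksparsedf}: the two servers hold $s_\ell,t_\ell:\{0,1\}^\ell\to\{0,1\}^\lambda$ with $s_\ell(x)-t_\ell(x)=0$ off $S_\ell$ and a nonzero (pseudorandom) string on $S_\ell$, together with, for each branch $i$, control-bit functions $u^i_\ell,v^i_\ell$ whose difference is the indicator $e_{\alpha^i_\ell}$. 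Layer $0$ is sampled directly by the client. The inductive step expands each active seed with one call to $G$ (parsed per child into a $\lambda$-bit seed plus the promised control bits); for each $x\in S_\ell$ whose two children do not both survive into $S_{\ell+1}$ — equivalently, all branches through $x$ descend to the same child — the client broadcasts one correction word $(c_{\ell+1},m_{\ell+1}(0),m_{\ell+1}(1))$ and routes it through the control-bit function of a branch sitting at $x$, exactly as in \cref{sec:ksparsedf}. A node whose branches descend to \emph{both} children yields two surviving children and \emph{zero} correction words, so the number of correction words, and of $G$-expansions, is at most $|S_\ell|\le k$ per layer, hence at most $kd$ over the tree. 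At the last layer the client expands each of the $\le k$ seeds reaching a target leaf with one call to $G'$ and broadcasts one group-element correction word of $B$ entries so that $convert(\cdot)$ of the corrected share difference at $\alpha^i\|j$ equals $\beta^i_j$, with the sign selected by $u^i_d(\alpha^i)$ versus $v^i_d(\alpha^i)$ — the block-sparse trick of \cref{sec:ksparsedf} run once per branch, costing at most $k$ calls to $G'$.

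\emph{Correctness and counting.} Correctness is an induction on $\ell$: the base case is immediate, and the inductive step is precisely the cancellation computation of \cref{sec:ksparsedf} carried out in parallel for the $k$ branches — off $S_{\ell+1}$ the control-bit equality makes every applied correction word cancel (it is applied zero or two times), while on each $\alpha^i_{\ell+1}$ its designated correction word is applied exactly once, killing the dead sibling and leaving the surviving child's difference nonzero; the leaf step then pins the value to $\beta^i_j$. Crucially, these cancellations are purely algebraic — the correction words are solved for so that they hold for whatever strings the PRGs actually output, and off-target nodes stay identically zero because equal seeds produce equal $G$-outputs — so correctness holds with certainty over the coins of $Gen$, i.e.\ $\gamma=0$ (in contrast to the later cuckoo-hashing variant). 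The key of each server is an initial $\lambda$-bit seed with initial control bits ($O(k\lambda)$ bits, lower order), the $\le kd$ internal correction words of $\lambda+4$ bits each, and the $\le k$ leaf correction words of $B\lceil\log|\mathbb{G}|\rceil$ bits each, totalling $kd(\lambda+4)+kB\lceil\log|\mathbb{G}|\rceil$. The $Eval$ counts follow by walking one root-to-leaf path ($d$ calls to $G$, then one to $G'$), and the $Gen$ counts by the per-layer bound above. Full-vector evaluation expands all $\approx 2^d$ internal nodes with $G$ and all $2^d$ leaf blocks with $G'$, applying correction words as it goes; bounding those applications crudely by $O(k)$ per internal node and $O(kB)$ per leaf block (and using $B\cdot 2^d=\fulldim$) gives the claimed $O(k\fulldim)$ extra field operations.

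\emph{Security.} The simulator $Sim_b$ outputs a uniformly random initial seed and control bits, uniformly random internal correction words, and uniformly random leaf correction words. Indistinguishability is the usual hybrid argument over the $G$- and $G'$-invocations applied to seeds that server $b$ never sees: replacing those pseudorandom outputs by truly uniform strings one invocation at a time, every correction word — including the leaf ones, which mask $\beta$ by a $convert(\cdot)$ image unknown to server $b$ — becomes uniform and independent of $(\alpha,\beta)$, so the real key is driven to the simulator's distribution; each hybrid costs the PRG distinguishing advantage and there are only $O(2^d)$ of them (polynomially many), so the reduction goes through for the security parameter $\lambda$. The allowable leakage is just the parameters $(d,B,k,|\mathbb{G}|)$.

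\emph{Main obstacle.} I expect the delicate point to be the multi-branch bookkeeping in the construction: maintaining a consistent assignment of control-bit functions to branches across layers, handling layers where several branches share a prefix (so some branch-slots are temporarily inactive), and confirming that ``at most $k$ correction words per layer, $kd$ in total'' survives the worst case where branches split only near the leaves. Once the invariant is stated precisely the correctness step is the routine cancellation algebra of \cref{sec:ksparsedf}, and the counting and the hybrid security argument are standard.
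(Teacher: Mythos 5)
Your overall structure — the $k$-branch tree invariant, routing each correction word through a per-branch control-bit pair, the leaf-level $G'$ expansion, the $\sum_\ell |S_\ell|\le kd$ counting, and the hybrid security argument — is the right shape, and it matches the paper's (informal) description preceding the theorem.

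However, the step ``a node whose branches descend to \emph{both} children yields two surviving children and \emph{zero} correction words'' is wrong, and it is exactly where the multi-branch bookkeeping you flag as ``the delicate point'' actually bites. Suppose $x\in S_\ell$ has branches descending to both $x\|0$ and $x\|1$. After the PRG expansion at $x$, the servers' shares of the children's seeds \emph{and} of all $k$ control-bit pairs at $x\|0$ and $x\|1$ differ by pseudorandom strings (because $s_\ell(x)\neq t_\ell(x)$). You need the corrected control-bit differences $u^i_{\ell+1}(x\|b)-v^i_{\ell+1}(x\|b)$ to equal the target indicators $[\,x\|b = \alpha^i_{\ell+1}\,]$ for every branch $i$ — both for the branch(es) through $x$, which must be set to $1$ at the correct child, and for every other branch, which must be zeroed out. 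Without a correction word routed through $x$, there is nothing to enforce this: the only corrections that have nonzero effect on $x$'s children are precisely those whose control-bit difference is nonzero at $x$, and if you issue none for $x$, the invariant breaks at level $\ell+1$ and correctness fails. (The seed-correction component can indeed be skipped or set to junk — both child seeds should stay nonzero in difference — but the bit corrections cannot be omitted. This is exactly what the paper's Figure~\ref{fig:GenNext} does in the cuckoo variant: it sets $c$ to a \emph{random} value when both children survive but still computes the bit corrections $m,p$ for both children.) The fix is cheap: one correction word per distinct $(\ell)$-bit prefix, whether it has one or two surviving children. Your total of at most $|S_\ell|\le k$ correction words per layer stands, but your stated criterion for when to send one does not.

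A related loose thread: your correction word has only two bit-correction entries $(m(0),m(1))$, but your own invariant carries $k$ control-bit pairs per node, so in general a correction word applied at $x$ must fix $2k$ control bits (all pairs, both children), not $2$. As written, the $kd(\lambda+4)$ accounting is therefore not something your construction achieves; the honest count for your invariant is $kd(\lambda + O(k))$ per layer (or $\lambda+O(\log k)$ with an integer-indexed indicator), unless you introduce some additional mechanism — such as the cuckoo-hashing device in Theorem~\ref{thrm:construction} — that caps the number of control-bit pairs per node at a constant. You should either restate the key size to match your invariant or explain how two bit corrections per correction word suffice.
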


$O(2^{d+\log\blocksize})$ PRG calls for each of the $k$ DPF keys, while using our $k$-block-sparse DPF key generation requires $O(2^d)$ small and $O()$ larger PRG calls.}

\paragraph{Cuckoo Hashing.}

We next show how we can reduce the $k$-fold multiplicative overhead in the servers' computation using cuckoo hashing. The idea is to only have $w$ control bits (rather than) $k$ for each node in the tree as follows. In practice, we can set the constant $w$ to be between 2 and 5.

At every layer $i$ in the tree, there are $k$ non-zero nodes, which have indices $\{ \alpha_i^{\ell} \}_{\ell \in [k]}$. We use $\tilde{k}$ correction words per layer. Each tree node in the $i$-th layer is assigned $w$ correction words. These correction words are selected using $w$ hash functions (per layer $i$), where each hash function maps the $2^{ i}$ tree nodes to the set of $\tilde{k}$ correction words. The hash functions are public and known to both servers (they are chosen independently of the values of the DPF). The client assigns each non-zero node to one of the $w$ correction words specified (for that node) by the hash functions. This assignment does depend on the non-zero indices of the DPF and must not be known to the servers. Cuckoo hashing~\cite{PF01} shows that for any set of $k$ non-zero nodes (specified by the values  $\{ \alpha_i^{\ell} \}$), except with probability $\Tilde{\mathcal{O}}(\frac{1}{k})$ over the choice of the hash functions, the client can choose the assignment so that there are no ``collisions'': no two non-zero nodes are mapped to the same correction word. In the case of failure, which occurs with probability at most $\Tilde{\mathcal{O}}(\frac{1}{k})$, the client will generate a outputs keys corresponding to the zero vector, which can trivially be realized by picking an arbitrary assignment. This does not affect the security of the construction as the failure of cuckoo hashing is not revealed. It does however mean that the correct vector is sent with probability $1-O(\frac 1 k)$, rather than $1$. For statistical applications, this small failure probability has little impact.  %

The correction words are now constructed and applied as usual, except for the fact that each correction word now has only $w$ correction bits instead of k on each side and that one of only $w$ correction words is applied per expanded seed/node. The bit components of an expanded and corrected seed still correspond to an indicator specifying which single correction word is applied at the next layer, as before; however, it is no longer up to one of all $k$ possible correction words that will be applied, but rather one of the $w$ possible correction words specified by the $w$ hash functions. 

For simplicity, we present the formal construction by setting $w=2$. The details can be found in Figures~\ref{fig:genConst} and~\ref{fig:evalConst}, using a helper function for DPF key generation in Figure~\ref{fig:GenNext} to specify the constructions at each of the upper tree layers. Note that we formulate evaluation in Figure~\ref{fig:evalConst} for a single path in the tree for simplicity; to reconstruct the entire vector instead of just one entry, all nodes in the tree can be evaluated using the same approach. In that case, the number of invocations of $G$ is at most $2^d$, and the number of invocations of $G'$ is at most $2^d$. %

\begin{theorem}\label{thrm:construction}
    Let $G:\{0,1\}^\lambda\rightarrow\{0,1\}^{2(\lambda+2)}$ and $G':\{0,1\}^\lambda\rightarrow\{0,1\}^{\blocksize \lceil\log |\mathbb{G}|\rceil}$ be pseudorandom generators. Also suppose $\{\hashes_i\}_{i\in [d+1]}:[2^{i-1}]\rightarrow [ck]^2$ describes a set of random hash functions. Then the scheme $(Gen,Eval)$ from Figures~\ref{fig:genConst} and~\ref{fig:evalConst} is a $k$-block-sparse DPF for the family of $k$-block-sparse functions $f'_{\alpha,\beta}:\{0,1\}^{d+\log \blocksize}\rightarrow\mathbb{G}$ with correctness error $\tilde{O}(\frac 1 k)$. The key size is $3kd(\lambda+4) + 3k\blocksize\lceil\log|\mathbb{G}|\rceil$. In $Gen$ the number of invocations of $G$ is at most $kd$, and the number of invocations of $G'$ is at most $k$. In $Eval$ the number of invocations of $G$ is at most $d$, and  $G'$ is invoked once. Evaluating the full vector requires $2^d$ invocations of $G$ and $2^d$ invocations of $G'$, and $O(\fulldim)$ additional operations.
\end{theorem}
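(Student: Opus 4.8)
The plan is to establish the four assertions of \cref{thrm:construction} in turn: perfect correctness conditioned on a ``good event'', a bound of $\tilde{O}(1/k)$ on the probability that this event fails, simulation security, and finally the key-size and operation counts. The heart of the argument is the correctness analysis, which extends the layer-by-layer invariant of the tree-based DPF of~\cite{CCS:BoyGilIsh16} from one active ``thread'' to (at most) $k$ of them, with cuckoo hashing controlling the interaction between threads.

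For correctness I would first condition on the event $E$ that all $d+1$ cuckoo-hash instances admit a collision-free assignment of the active nodes, and show that on $E$ the scheme is perfectly correct by induction on the layer $i$. The invariant to carry is the natural generalization of the one in the excerpt: for each of the (at most $k$) distinct $i$-bit prefixes $\alpha_i^\ell$ of the nonzero block-indices, the servers' seed functions agree, $s_i(x)=t_i(x)$, off the active prefixes and differ by a pseudorandom $\lambda$-bit string on them, while the control-bit functions reconstruct to an indicator that tells each active node which one of its two cuckoo-assigned correction words to apply at the next layer (and reconstruct to agreement at inactive nodes). The base case $i=0$ is immediate from the client's initialization. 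For the inductive step both servers expand with $G$; at an inactive node the control bits agree, so any correction word is applied identically by both servers and cancels in $s_{i+1}-t_{i+1}$ (using that corrections enter as $s'-u_i\cdot CW$ and $t'-v_i\cdot CW$ with $u_i=v_i$ there); at an active node $\alpha_i^\ell$ exactly one server applies the single correction word $CW_{i+1}[\pi_i(\ell)]$ picked out by the client's collision-free assignment $\pi_i$, which the client set precisely so as to zero the off-path child's seed difference, leaving the on-path child with a fresh pseudorandom difference. Injectivity of $\pi_i$ --- which is exactly what cuckoo-hashing success provides --- guarantees that no second active node competes for that correction word; the control-bit corrections are handled symmetrically. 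The leaf layer $d+1$ applies one further round of cuckoo-hashed correction words, each a full block in $\mathbb{G}^B$, turning the pseudorandom leaf seed-differences into the prescribed $\beta^\ell$ exactly as in the single-block construction of the excerpt. To bound $\Pr[\neg E]$, each hash instance receives at most $k$ items into $ck=3k$ buckets via two random hash functions, so by the cuckoo-hashing theorem (and the fact that inserting fewer than $k$ items only lowers the failure probability) it fails with probability $\tilde{O}(1/k)$; a union bound over the $d+1=O(\log\fulldim)$ layers gives $\Pr[\neg E]=\tilde{O}(1/k)$. Since the client emits keys for the all-zero function when $E$ fails, the overall correctness error is $\tilde{O}(1/k)$.

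Security follows the standard simulation template for tree DPFs. A single server's key is a uniformly random initial seed share, its initial control-bit shares, and the public list of correction words; the leakage is $(d,B,k)$, all public. I would argue indistinguishability from a simulator's output by a hybrid sequence that replaces, layer by layer, the PRG images $G(\cdot)$ and $G'(\cdot)$ on the seeds the given server never sees by truly uniform strings; after this substitution every correction word is a uniform string masked by values outside the adversary's view, hence simulatable, and the client's cuckoo assignment enters only through secret-shared control bits and so remains hidden (the cuckoo-failure event is likewise never revealed). For the counts: there are $d$ internal layers, each carrying $ck$ correction words of $\lambda$ seed bits plus $4$ control bits (two per side), and one leaf layer carrying $ck$ blocks of $B\lceil\log|\mathbb{G}|\rceil$ bits, so with $c=3$ the key size is $3kd(\lambda+4)+3kB\lceil\log|\mathbb{G}|\rceil$ up to the lower-order initial shares. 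In $Gen$ the client only needs to expand the $\le k$ active nodes at each layer ($\le kd$ calls to $G$) plus $\le k$ calls to $G'$ at the leaves; a single-path $Eval$ makes one $G$ call per layer and one $G'$ call; a full-vector $Eval$ touches $\sum_{i<d}2^i<2^d$ internal nodes and all $2^d$ leaves, giving $2^d$ calls to $G$ and $2^d$ to $G'$, plus the cost of applying correction words --- and here is where cuckoo hashing pays off: each node now carries only two candidate words and applies $O(1)$ of them, so the leaf layer does $O(B)$ group operations per leaf and $O(2^d B)=O(\fulldim)$ in total, versus the $O(kB)$ per leaf (hence $O(k\fulldim)$) forced by the $k$-dimensional indicator of \cref{thrm:construction_basic}.

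The step I expect to be the real obstacle is making the inductive correctness argument watertight in the presence of cuckoo hashing: one must track a correction word that is shared, via the two public hash functions, by a single active ``owner'' node and possibly many other nodes, and verify that it simultaneously (i) corrects the owner's off-path child, (ii) cancels at every other node where it is nominally applied, and (iii) is never spuriously triggered elsewhere --- all consistently with how the two-bit control indicator is propagated and re-corrected from one layer to the next. Pinning down that collision-freeness of the cuckoo assignment is precisely what decouples the $k$ threads, and getting the control-bit bookkeeping exactly right, is the substantive part; the probabilistic bound (union bound over layers, monotonicity of cuckoo failure in the number of items) and the arithmetic of the counts are comparatively routine.
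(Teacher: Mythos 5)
Your proposal is correct and follows essentially the same route as the paper's proof: the same layer-by-layer invariant (seeds agree off active prefixes, and the two control-bit pairs secret-share an indicator of which of two cuckoo-assigned correction words to apply), conditioning correctness on collision-free cuckoo assignments with a union bound over the $d+1$ layers absorbed into $\tilde{O}(1/k)$, a per-correction-word hybrid argument for security reducing to PRG indistinguishability, and the same accounting for key size and PRG/field-operation counts. The one place your write-up is slightly coarser than the paper is the inductive step where the off-path child $\alpha_i^\ell\|\bar b_i^\ell$ happens to itself be an active prefix at level $i+1$ (so that correction word must be set to a fresh random string rather than a zeroing value); you correctly flag the control-bit bookkeeping as the substantive part, and this is exactly the case requiring care.
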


\begin{remark}\label{remark:numfuncs}
    Note that cuckoo hashing can yield different trade-offs from those in Theorem~\ref{thrm:construction} if more than 2 hash functions are used. For $w=2$, the total number of required correction words to achieve a low failure probability is approximately $3k$. If $w=4$, the total number of required correction words to achieve a low failure probability can be reduced to be only about $1.03k$~\cite{FoutoulakisP12,FriezeM12}, leading to a key size of $1.03kd(\lambda+4) + 1.03k\blocksize\lceil\log|\mathbb{G}|\rceil$. Increasing $w$ decreases the total number of correction words, and therefore the total key size and required communication, by a constant factor, at the cost of increasing the total number of field operations per node by a constant factor. The number of PRG evaluations does not depend on $w$.
\end{remark}

\begin{remark}\label{remark:cuckoo}
    Recall from related work the application of cuckoo hashing to multi-point function secret sharing~\cite{SP:ACLS18,PoPETS:DRRT18,CCS:SGRR19,EC:dCaPol22}, where cuckoo hashing is applied directly to the $k$-sparse $2^d$-dimensional secret vector, constructing $k$ smaller vectors with a single entry each before generating DPF keys. In this approach, the total number of vector entries, and therefore also the number of PRG evaluations, is $2\cdot2^d$ or $3\cdot2^d$ when 2 or 3 hash functions are used for cuckoo hashing, as suggested by ~\cite{PoPETS:DRRT18}. Because our application of cuckoo hashing is at the level of correction words within the DPF construction, it avoids this overhead and requires only $2^d$ vector entries. 
\end{remark}

\textit{Proof of Theorem~\ref{thrm:construction}.} We prove both correctness and security of the scheme.
    
\paragraph{Correctness.} We describe and argue correctness of our optimized $k$-block-sparse DPF construction in a way that is analogous to our arguments for the original construction of~\cite{CCS:BoyGilIsh16}.     The invariant for the $k$-sparse case is that in layer $i$ of the tree construction, the $\Tilde{k}$ nodes corresponding to $\alpha_i$ are non-zero, and all others are zero. In addition, we maintain the invariant that exactly one of the two bit components on the non-zero path is 1, and the other is 0. More formally, we would like that if $x\notin\alpha_i$, $s_i(x) = t_i(x)$, $u_{\layerind }(x)=v_{\layerind }(x)$, and $q_{\layerind }(x)=r_{\layerind }(x)$. Furthermore, we would like that for $x\in\alpha_i$, exactly one of $u_{\layerind }(x)=v_{\layerind }(x)$, and $q_{\layerind }(x)=r_{\layerind }(x)$ should hold.

The function $\hashes_i:[2^{i-1}]\rightarrow[\cuckoo k]^2$ maps one $\alpha^\ell_i$ in tree layer $i$ to two correction words. We use cuckoo hashing to determine which of these two correction words will be applied for $\alpha^\ell_i$, defining function $g_i:[2^{i-1}]\rightarrow\{0,1\}$. Due to cuckoo hashing, we know that this mapping exists for any $k$-block-sparse function given all $\hashes_i$ with probability $1-\mathcal{O}(\frac{1}{k})$. 
In the upper levels, the PRG $G:\{0,1\}^\lambda\rightarrow\{0,1\}^{2\lambda+4}$ output is parsed as follows:
\begin{align*}
    s'_{\layerind +1}(x ||0) || u'_{\layerind +1}(x|| 0) || q'_{\layerind +1}(x|| 0) || s'_{\layerind +1}(x ||1) || u'_{\layerind +1}(x|| 1) || q'_{\layerind +1}(x|| 1) &= G(s_\layerind (x))\\
    t'_{\layerind +1}(x ||0) || v'_{\layerind +1}(x|| 0) || r'_{\layerind +1}(x|| 0) || t'_{\layerind +1}(x ||1) || v'_{\layerind +1}(x|| 1) || r'_{\layerind +1}(x|| 1) &= G(t_\layerind (x))
\end{align*}

In the original construction, the seed portion of the correction word was set in such a way as to set to zero the node corresponding to $\alpha_i||\bar{b}_\layerind $, where $\bar{b}_\layerind $ is defined such that $\alpha_{\layerind +1} \neq \alpha_\layerind  || \bar{b}_\layerind $. Let us now analogously define $\bar{b}_\layerind^\ell$, defined such that $\alpha_{\layerind +1}^\ell \neq \alpha_\layerind^\ell  || \bar{b}_\layerind^\ell $. It is possible that there exist indices $\ell\neq\ell'\in[k]$ such that $\alpha^\ell_i||\bar{b}_\layerind^\ell=\alpha^{\ell'}_{i+1}$, in which case we do not want to set the seed portion of the node corresponding to $\alpha^\ell_i||\bar{b}_\layerind^\ell$ to 0. For such an $\ell$, we set the seed portion of the corresponding correction word to a random bit-string instead. For other $\ell$, we set the corresponding seed correction, specified by the $g_i(\alpha_i^\ell)$th output of $\hashes_i(\alpha_i^\ell)$, as expected:
\begin{align*}
    c_{\hashes_{i+1}(\alpha_i^\ell)[g_i(\alpha_i^\ell)]} = s'_{i+1}(\alpha^\ell_i||\bar{b}_\layerind^\ell) + t'_{i+1}(\alpha^\ell_i||\bar{b}_\layerind^\ell)
\end{align*}
The corrected seed components are then for each $x\in\{0,1\}^i$ and $b\in\{0,1\}$:
\begin{align*}
    s_{i+1}(x||b) = s'_{i+1}(x||b) - u_i(x)c_{\hashes_{i+1}(x)[0]} - q_i(x)c_{\hashes_{i+1}(x)[1]} \\
    t_{i+1}(x||b) = t'_{i+1}(x||b) - v_i(x)c_{\hashes_{i+1}(x)[0]} - r_i(x)c_{\hashes_{i+1}(x)[1]}
\end{align*}

It is then easy to check that for $x \neq \alpha_\layerind^\ell $ for all $\ell\in[k]$, the equalities $u_\layerind (x)=v_\layerind (x)$ and $q_\layerind (x)=r_\layerind (x)$ imply that $s_{\layerind +1}(x || b) = t_{\layerind +1}(x || b)$. Moreover for $y = \alpha_\layerind^\ell  || \bar{b}_\layerind^\ell $, as long as $y\notin \alpha_{i+1}$, we have
\begin{align*}
    s_{\layerind +1}(y) - t_{\layerind +1}(y) &= s'_{i+1}(y) - u_i(\alpha^\ell_i)c_{\hashes_{i+1}(\alpha_i^\ell)[0]} - q_i(\alpha^\ell_i)c_{\hashes_{i+1}(\alpha_i^\ell)[1]} \\
    &~~~- (t'_{i+1}(y) - v_i(\alpha^\ell_i)c_{\hashes_{i+1}(\alpha_i^\ell)[0]} - r_i(\alpha^\ell_i)c_{\hashes_{i+1}(\alpha_i^\ell)[1]})\\
    &= s'_{i+1}(y) - t'_{i+1}(y) - (u_i(\alpha^\ell_i) -v_i(\alpha^\ell_i)) c_{\hashes_{i+1}(\alpha_i^\ell)[0]} \\
    &~~~-  (q_i(\alpha^\ell_i) -r_i(\alpha^\ell_i)) c_{\hashes_{i+1}(\alpha_i^\ell)[1]} \\
    &= s'_{i+1}(y) - t'_{i+1}(y) - 1 c_{\hashes_{i+1}(\alpha_i^\ell)[g_{i+1}(\alpha_i^\ell)]}\\
    &= 0.                                
\end{align*}
Here the last step follows by definition of $c_{\hashes_{i+1}(\alpha_i^\ell)[g_{i+1}(\alpha_i^\ell)]}$.

Finally, we need to correct the new bit components. For this purpose, we compute two bit corrections. Note that $u_{\layerind +1}(y) = v_{\layerind +1}(y)$ and $q_{\layerind +1}(y) = r_{\layerind +1}(y)$ for each $y\notin\alpha_{\layerind+1}$. On the other hand, when $y \in \alpha_{i+1}$, we would like either $u_{\layerind +1}(y) = v_{\layerind +1}(y)$ and $q_{\layerind +1}(y) \neq r_{\layerind +1}(y)$ or $u_{\layerind +1}(y) \neq v_{\layerind +1}(y)$ and $q_{\layerind +1}(y) = r_{\layerind +1}(y)$, depending on $g_{i+1}(y)$. We set the bit corrections to:
\begin{align*}
    & m_{\hashes_{i+1}(\alpha_i^\ell)[g_{i+1}(\alpha_i^\ell)]}(0) = u'_{\layerind +1}(\alpha_\layerind^\ell  || 0) - v'_{\layerind +1}(\alpha_\layerind^\ell  || 0) + (g_{i+2}(\alpha_i^\ell||0) + 1)(\alpha^\ell_i||0\in\alpha_{i+1}) \\
    & m_{\hashes_{i+1}(\alpha_i^\ell)[g_{i+1}(\alpha_i^\ell)]}(1) = u'_{\layerind +1}(\alpha_\layerind^\ell  || 1) - v'_{\layerind +1}(\alpha_\layerind^\ell || 1) + (g_{i+2}(\alpha_i^\ell||1) + 1)(\alpha^\ell_i||1\in\alpha_{i+1})  \\
    & p_{\hashes_{i+1}(\alpha_i^\ell)[g_{i+1}(\alpha_i^\ell)]}(0) = q'_{\layerind +1}(\alpha_\layerind^\ell  || 0) - r'_{\layerind +1}(\alpha_\layerind^\ell  || 0) + g_{i+2}(\alpha_i^\ell||0) (\alpha^\ell_i||0\in\alpha_{i+1})  \\
    & p_{\hashes_{i+1}(\alpha_i^\ell)[g_{i+1}(\alpha_i^\ell)]}(1) = q'_{\layerind +1}(\alpha_\layerind^\ell  || 1) - r'_{\layerind +1}(\alpha_\layerind^\ell || 1) + g_{i+2}(\alpha_i^\ell||1) (\alpha^\ell_i||1\in\alpha_{i+1})
\end{align*}

To apply the bit correction, the following is computed:
\begin{align*}
    & u_{\layerind +1}(x || b) =  u'_{\layerind +1}(x || b) - u_\layerind (x)m_{\hashes_{i+1}(x)[0]}(b)- q_\layerind (x)m_{\hashes_{i+1}(x)[1]}(b)\\
    & v_{\layerind +1}(x || b) =  v'_{\layerind +1}(x || b) - v_\layerind (x)m_{\hashes_{i+1}(x)[0]}(b)- r_\layerind (x)m_{\hashes_{i+1}(x)[1]}(b) \\
    & q_{\layerind +1}(x || b) =  q'_{\layerind +1}(x || b) - u_\layerind (x)p_{\hashes_{i+1}(x)[0]}(b)- q_\layerind (x)p_{\hashes_{i+1}(x)[1]}(b)\\
    & r_{\layerind +1}(x || b) =  r'_{\layerind +1}(x || b) - v_\layerind (x)p_{\hashes_{i+1}(x)[0]}(b)- r_\layerind (x)p_{\hashes_{i+1}(x)[1]}(b) 
\end{align*}
The correctness proof is identical to the one for the $s-t$ case when $x||b\notin\alpha_{i+1}$. Otherwise, when $x||b\in\alpha_{i+1}$, we show that 

\begin{align*}
     u_{\layerind +1}(x || b) - v_{\layerind +1}(x || b) & =  u'_{\layerind +1}(x || b) - u_\layerind (x)m_{\hashes_{i+1}(x)[0]}(b)- q_\layerind (x)m_{\hashes_{i+1}(x)[1]}(b)  \\
    &~~~ - (v'_{\layerind +1}(x || b) - v_\layerind (x)m_{\hashes_{i+1}(x)[0]}(b)- r_\layerind (x)m_{\hashes_{i+1}(x)[1]}(b)) \\
    & = u'_{\layerind +1}(x || b) - v'_{\layerind +1}(x || b) - (u_\layerind (x) - v_\layerind (x))m_{\hashes_{i+1}(x)[0]}(b) \\
    &~~~ - (q_\layerind (x)  r_\layerind (x))m_{\hashes_{i+1}(x)[1]}(b)  \\
    & = u'_{\layerind +1}(x || b) - v'_{\layerind +1}(x || b) - m_{\hashes_{i+1}(x)[g_{i+1}(x)]}(b) \\
    & = g_{i+2}(\alpha_i^\ell||b) + 1
\end{align*}

Using analogous arguments, $q_{\layerind +1}(x || b) - r_{\layerind +1}(x || b) = g_{i+2}(\alpha_i^\ell||b)$. Therefore, exactly one of either $u_{\layerind +1}(x || b) = v_{\layerind +1}(x || b)$ or $q_{\layerind +1}(y) = r_{\layerind +1}(y)$ can hold, and the required invariant is maintained.

In the lowest layer, the PRG is evaluated for each $\ell\in[k]$:
\begin{align*}
        & s'_{d+1}(\alpha^\ell ||0) ||\dots || s'_{d+1}(\alpha^\ell ||\blocksize-1)   = G'(s_{d}(\alpha^\ell))\\
        & t'_{d+1}(\alpha^\ell ||0) ||\dots || t'_{d+1}(\alpha^\ell || \blocksize-1)  = G'(t_d(\alpha^\ell))
\end{align*}

In a next step, we call the $convert(\cdot)$ function on each component of these outputs. The goal is to formulate one correction word for each of the $k$ non-zero blocks.

More formally, we would like to choose correction words $c_{\hashes_{d+1}(\alpha^\ell)[g_{d+1}(\alpha^\ell)]}(j)$ for $j\in[B]$ such that ${s}_{d+1}(\alpha^\ell||j)-{t}_{d+1}(\alpha^\ell||j)=\beta_j^\ell$:
\begin{align*}
        {s}_{d+1}(\alpha^\ell||j) &= s'_{d+1}(\alpha^\ell||j) +  u_{d}(\alpha^\ell) c_{\hashes_{d+1}(\alpha^\ell)[0]}(j) + q_{d}(\alpha^\ell) c_{\hashes_{d+1}(\alpha^\ell)[1]}(j)\\
        {t}_{d+1}(\alpha^\ell||j) &= t'_{d+1}(\alpha^\ell||j) +  v_{d}(\alpha^\ell) c_{\hashes_{d+1}(\alpha^\ell)[1]}(j) +  r_{d}(\alpha^\ell) c_{\hashes_{d+1}(\alpha^\ell)[1]}(j).
\end{align*}

Note that for $x \not\in \alpha$, $s'_{d+1}(x||j)=t'_{d+1}(x||j)$, $u_{d}(x)=v_{d}(x)$, and $q_{d}(x)=r_{d}(x)$, so ${s}_{d+1}(x||j) = {t}_{d+1}(x||j)$ for all $j\in[\blocksize]$, regardless of the correction words.

    For $\alpha^\ell$, if $g_{d+1}(\alpha^\ell)=0$, then $u_{d}(\alpha^\ell) \neq v_{d}(\alpha^\ell)$ and $q_{d}(\alpha^\ell) = r_{d}(\alpha^\ell)$. Otherwise, $q_{d}(\alpha^\ell) \neq r_{d}(\alpha^\ell)$ and $u_{d}(\alpha^\ell) = v_{d}(\alpha^\ell)$. Exactly one of ${s}_{d+1}(x||j)$ and ${t}_{d+1}(x||j)$ is independent of $c_{\hashes_{d+1}(\alpha^\ell)[g_{d+1}(\alpha^\ell)]}$. By subtracting over $\mathbb{G}$, we can find the target value of the other one and choose the correction word accordingly.

    In a bit more detail: the client computes the sequence \\ $(s'_{d+1}(\alpha^\ell||j)-t'_{d+1}(\alpha^\ell||j))_{j\in [B]} \in \mathbb{G}$. It computes the seed correction for each $j\in[\blocksize]$ and send it to both servers.. If $g_{d+1}(\alpha^\ell)=0$
    \begin{align*}
        c_{\hashes_{d+1}(\alpha^\ell)[g_{d+1}(\alpha^\ell)]}(j)=(u_{d,\ell}(\alpha^\ell) - v_{d,\ell}(\alpha^\ell))^{-1}(\beta^\ell_j - (s'_{d+1}(\alpha^\ell||j)-t'_{d+1}(\alpha^\ell||j)))
    \end{align*}
    Otherwise:    
    \begin{align*}
        c_{\hashes_{d+1}(\alpha^\ell)[g_{d+1}(\alpha^\ell)]}(j)=(q_{d}(\alpha^\ell) - r_{d}(\alpha^\ell))^{-1}(\beta^\ell_j - (s'_{d+1}(\alpha^\ell||j)-t'_{d+1}(\alpha^\ell||j)))
    \end{align*}
    
     Since $(u_{d}(\alpha^\ell) - v_{d}(\alpha^\ell)),(q_{d}(\alpha^\ell) - r_{d}(\alpha^\ell)) \in \{-1,0,1\}$, the inverse over $\mathbb{G}$ is well defined.
    
    Now for $x = \alpha^\ell$, we can write
    \begin{align*}
         {s}&_{d+1}(\alpha^\ell||j)  -{t}_{d+1}(\alpha^\ell||j) \\ 
        & = s'_{d+1}(\alpha^\ell||j) - t'_{d+1}(\alpha^\ell||j) + (u_{d}(\alpha^\ell) -v_{d}(\alpha^\ell))  c_{\hashes_{d+1}(\alpha^\ell)[0]}(j) \\
        &~~~+ (q_{d}(\alpha^\ell) -r_{d}(\alpha^\ell))  c_{\hashes_{d+1}(\alpha^\ell)[1]}(j)\\ %
        &=  s'_{d+1}(\alpha^\ell||j) - t'_{d+1}(\alpha^\ell||j) +   (\beta^\ell_j - s'_{d+1}(\alpha^\ell||j) + t'_{d+1}(\alpha^\ell||j))\\
        &= \beta^\ell_j.
    \end{align*}

\paragraph{Security.} We argue that each server's DPF key is pseudorandom. The security proof is analogous to the proof of Theorem 3.3 in~\cite{CCS:BoyGilIsh16}. We begin by describing the high-level argument. Each server begins with a random seed, which is unknown to the other server. In each tree layer, up to $k$ random seeds are expanded using a PRG, generating 2 new seeds and 4 bits, all of which appear similarly random due to the security of the PRG and the fact that the original seed appeared random. The application of a correction word will cancel the randomness of 5 of these 6 resulting seed and bit components. Specifically, one seed component and all 4 bit components are canceled; however, given only the correction word and the tree node values held by a single server, the resulting secret shares still appear random. 

It is possible to define a series of hybrids $Hyb_{w,\ell}$, where the correction words in all levels $i<w$ are replaced by random bit strings for $i\in[d+1]$, replacing Step~\ref{step:cw_lower} in Figure~\ref{fig:GenNext}, the first $\ell$ correction word components in layer $w$ are replaced by a random bit string, and if $w=d+1$ the first $\ell$ components of the final level correction word are replaced with random group elements, replacing Step~\ref{step:cw_last} in Figure~\ref{fig:genConst}. We also consider the view of the first of the two servers, without loss of generality. Specifically:
\begin{enumerate}
    \item Choose $s_0(0),t_0(0),u_0(0),v_0(0),q_0(0),r_0(0)$ honestly.
    \item Choose $CW^0,\dots,CW^{w-1}\in\{0,1\}^{\cuckoo k(\lambda+4)}$ uniformly at random.
    \item Choose $CW^w$ such that the first $\ell$ components are uniform samples and the remaining ones are computed honestly.
    \item Update $s_i(\alpha_i^\ell)$, $u_i(\alpha_i^\ell)$, $q_i(\alpha_i^\ell)$ honestly for all $i< w$ and $\ell\in[\Tilde{k}]$.
    \item For $i=w$, set $t_i(\alpha_i^0)$,$v_i(\alpha_i^0)$,$r_i(\alpha_i^0),\dots, t_i(\alpha_i^\ell)$,$v_i(\alpha_i^\ell)$,$r_i(\alpha_i^\ell)$ to random samples. Additionally set $t_{i-1}(\alpha_{i-1}^{\ell+1})$,$v_{i-1}(\alpha_{i-1}^{\ell+1})$,$r_{i-1}(\alpha_{i-1}^{\ell+1}),\dots, t_{i-1}(\alpha_{i-1}^{\Tilde{k}-1})$,$v_{i-1}(\alpha_{i-1}^{\Tilde{k}-1})$,$r_{i-1}(\alpha_{i-1}^{\Tilde{k}-1})$, to random samples. Compute $t_i(\alpha_i^{\ell+1})$,$v_i(\alpha_i^{\ell+1})$,$r_i(\alpha_i^{\ell+1}),\dots, t_i(\alpha_i^{\Tilde{k}-1})$,$v_i(\alpha_i^{\Tilde{k}-1} )$,$r_i(\alpha_i^{\Tilde{k}-1})$ honestly.
    \item For $i>w$, compute all $CW^i$ and update all $s_i(\alpha_i^\ell)$, $t_i(\alpha_i^\ell)$, $u_i(\alpha_i^\ell)$, $v_i(\alpha_i^\ell)$, $q_i(\alpha_i^\ell)$, $r_i(\alpha_i^\ell)$ honestly for all $\ell\in[k]$.
    \item The output is $s_0(0),u_0(0),q_0(0)||CW^1||\dots||CW^{d+1}$.
\end{enumerate}

Note that when $w=\ell=0$, this experiment corresponds to the honest key distribution, whereas when $w=d+1,\ell=k-1$ this yields a completely random key. We claim that each pair of adjacent hybrids will be indistinguishable based on the security of the pseudorandom generator.

We first consider $w\leq d$ and a $Hyb$-distinguishing adversary $\mathcal{A}$ who distinguishes $Hyb_{w,\ell}$ from either $Hyb_{w,\ell+1}$ if $\ell<\Tilde{k}-1$ or $Hyb_{w+1,0}$ otherwise. Given an adversary $\mathcal{A}$ with advantage $\adv$, we can construct a corresponding PRG adversary $\mathcal{B}$. This PRG adversary is given a value $r\in\{0,1\}^{2(\lambda+2)}$ and distinguishes between the cases where $r$ is truly random and $r=G(s)$, where $s\in\{0,1\}^\lambda$ is a random seed. Given $\alpha, \beta,w,\ell$, the adversary $\mathcal{B}$ constructs a DPF key according to $Hyb_{w,\ell}$; however, instead of sampling $t_w(\alpha^\ell_w),v_w(\alpha^\ell_w),r_w(\alpha^\ell_w)$ randomly, we set:
\begin{align*}
    t_{w}&(\alpha^\ell_{w-1}||0)||v_{w}(\alpha^\ell_{w-1}||0)||r_w(\alpha^\ell_{w-1}||0)||t_{w}(\alpha^\ell_{w-1}||1)||v_{w}(\alpha^\ell_{w-1}||1)||r_w(\alpha^\ell_{w-1}||1) 
\end{align*}
to $r$. If $r$ is computed pseudorandomly, then it is clear that the resulting DPF key is generated as in $Hyb_{w,\ell}(1^\lambda,\alpha,\beta)$. We must also argue that if $r$ is random, the resulting key is distributed as in either $Hyb_{w,\ell+1}$ if $\ell<\Tilde{k}-1$ or $Hyb_{w+1,0}$ otherwise. If $t_w(\alpha^\ell_w),v_w(\alpha^\ell_w),r_w(\alpha^\ell_w)$ is random, then the corresponding correction word is also uniformly random, since it is computed as the xor of a fixed bit-string with these randomly selected bit-strings, forming a perfect one-time pad. After applying this correction word, the resulting seed and bit components are also uniformly distributed, given only the previous seed and bit components for that server, as well as the correction words. 

Combining these pieces, an adversary $\mathcal{A}$ that distinguishes between the hybrids with advantage $\adv$ yields a corresponding adversary $\mathcal{B}$ for the PRG experiment with the same advantage and only polynomial additional runtime.

Finally, we consider $w=d+1$. We can make an argument similar to the previous one that an adversary that distinguishes between the distributions $Hyb_{d+1,\ell}$ and $Hyb_{d+1,\ell+1}$ with advantage $\adv$ directly yields a corresponding adversary $\mathcal{B}$ for the pseudo-randomness of the PRG output, interpreted as a group element, with the same advantage and only polynomial additional runtime. $\mathcal{B}$ can embed the challenge by setting the corresponding correction word to $(-1)^{r_{i-1}(\alpha_{i-1}^\ell)}(\beta_j^\ell - s_{i}'(\prefix||j)+r)$. If $r$ is generated pseudo-randomly, this is exactly the distribution of $Hyb_{d+1,\ell}$. If $r$ is truly random, then it similarly acts as a one-time pad on the remaining terms and the corresponding correction word is uniformly distributed, as in $Hyb_{d+1,\ell+1}$.
\qed

\begin{Boxfig}{Gen generates DPF keys for a $k$-block-sparse vector of dimension $d+\log B$ with blocks of size $B$ and security parameter $\lambda$,  where $G'$ is a PRG that takes an input of size $\lambda$ bits and outputs a bit string of length $\blocksize\log |\mathbb{G}|$. The values $\beta$ of the non-zero entries in the vector correspond to elements of group $\mathbb{G}$.}{genConst}{Gen}
\begin{description}
    \item $Gen(1^\lambda,\alpha,\beta,\mathbb{G},\blocksize, k)$:%
\begin{enumerate}
    \item Let %
    $\alpha=\{\alpha^\ell\}_{\ell\in[k]}\in\{\{0,1\}^{\depth }\}^k$
    \item Sample random $s_0(0)\leftarrow\{0,1\}^\lambda$ and $t_0(0)\leftarrow\{0,1\}^\lambda$
    \item For $i$ from $1$ to $\depth +1$, use cuckoo hashing to define mapping functions $\hashes_i:[2^{i-1}]\rightarrow [ck]^2$ and $g_i:\{(i-1)\text{-bit prefixes in }\alpha\}\rightarrow\{0,1\}$
    \item If $g_1(0)=0$, let $u_0(0)=0$, $v_0(0) = 1$, $q_0(0)=0$, and $r_0(0)=0$. Else let $u_0(0)=0$, $v_0(0) = 0$, $q_0(0)=0$, and $r_0(0)=1$.
    \item For $i$ from $1$ to $\depth $:
    \begin{itemize}
        \item Compute $GenNext(\alpha,i,s_{i-1},t_{i-1},u_{i-1},v_{i-1},q_{i-1},r_{i-1},\hashes_i,g_i,g_{i+1})$ and parse the output as $CW^i,s_i,t_i,u_i,v_i,q_i,r_i$ 
    \end{itemize}
    \item group $\alpha$ by entries with the same $\depth$-bit prefix 
    \item For each distinct $(\depth$-bit prefixes in $\alpha$, denoted $\finalprefix$:%
    \begin{itemize}
        \item $s_{\depth+1}'(\finalprefix||0) ||...|| s_{\depth+1}'(\finalprefix||\blocksize-1) \leftarrow G'(s_{\depth}(\finalprefix))$ %
        \item $t_{\depth+1}'(\finalprefix||0) ||...|| t_{\depth+1}'(\finalprefix||\blocksize-1)\leftarrow G'(t_{\depth}(\finalprefix))$%
        \item For $j\in[\blocksize]$, convert $s_{\depth+1}'(\finalprefix||j) := convert(s_{\depth+1}'(\finalprefix||j) )$ and $t_{\depth+1}'(\finalprefix||j) := convert(t_{\depth+1}'(\finalprefix||j) )$ 
    \end{itemize}
    \item Parse $\beta = (\beta^0,\dots,\beta^{k-1})$
    \item For $\ell\in[k]$:\label{step:cw_last}
    \begin{itemize}
        \item Denote $\finalprefix$ the $\depth $-bit prefix associated with $\ell$. Also denote $\cwind = \hashes_{\depth+1}(\finalprefix)$.
        \item Parse $\beta^\ell=(\beta^\ell_0,\dots,\beta^\ell_{\blocksize-1})$
        \item Denote $\gamma_j^\ell = \beta_j^\ell - s_{\depth+1}'(\finalprefix||j)+t_{\depth+1}'(\finalprefix||j)$ for $j\in[B]$. %
        \item Denote $c_{\rho[g_{\depth+1}(\finalprefix)]}(j) =  (-1)^{v_{\depth}(\finalprefix)}\cdot\gamma_j^\ell$
        \item If $g_i(\finalprefix)=0$, set \\
        $CW^{\depth +1}_{\cwind [0]} \leftarrow c_{\rho[g_i(\finalprefix)]}(0) || ...|| c_{\rho[g_i(\finalprefix)]}(B-1)$.%
        \item Else, set  \\ $CW^{\depth +1}_{\cwind [1]} \leftarrow (-1)^{r_{\depth}(\finalprefix)}\cdot\gamma^\ell_0 || ...|| (-1)^{r_{\depth}(\finalprefix)}\cdot\gamma_{B-1}^\ell $.%
    \end{itemize}
    \item For remaining $\ell$, set $CW^{\depth +1}_{\ell}$ randomly
    \item Set $CW^{\depth +1}=CW^{\depth +1}_1||...||CW^{\depth +1}_k$ and $CW = CW^1 || \dots || CW^{\depth +1} || \hashes_1 || \dots || \hashes_{\depth +1}$, as well as $k_0=s_0(0)||CW,k_1=t_0(0)||CW$.
    \item return $(k_0,k_1)$, $g_1(0)$
\end{enumerate}\end{description}
\end{Boxfig}

\begin{Boxfig}{GenNext computes the seed and bit components of nodes at the next tree layer, where $G$ is a PRG that each take an input of size $\lambda$ bits and outputs a bit string of length $2(\lambda+2)$.}{GenNext}{GenNext}
\begin{description}
    \item $GenNext(\alpha,i,s_{i-1},t_{i-1},u_{i-1},q_{i-1},v_{i-1},r_{i-1},\hashes_i,g_i,g_{i+1})$:
\begin{enumerate}
        \item Group $\alpha$ by entries with the same $(i-1)$-bit prefix. For each group:
        \begin{itemize}
            \item Denote $\prefix$ the $(i-1)$-bit prefix associated with that group
            \item Expand and parse $s_{i}'(\prefix||0) ||u'_{i}(\prefix||0)||q'_{i}(\prefix||0) || s_{i}'(\prefix||1) ||$\\ $u'_{i}(\prefix||1)||q'_{i}(\prefix||1)\leftarrow G(s_{i-1}(\prefix))$
            \item Expand and parse $t_{i}'(\prefix||0) ||v_{i}'(\prefix||0)||r_{i}'(\prefix||0) || $\\ $t_{i}'(\prefix||1) ||v_{i}'(\prefix||1)||r_{i}'(\prefix||1)\leftarrow G(t_{i-1}(\prefix)) $
        \end{itemize}
        \item For each group of $(i-1)$-bit prefixes:  \label{step:cw_lower}
            \begin{itemize}
                \item Denote $\prefix$ the $(i-1)$-bit prefix associated with that group. Also denote $\cwind = \hashes_i(\prefix)$.
                \item If $\alpha$ contains values with prefixes $\prefix||0$ and $\prefix||1$, set $c_{\cwind [g_i(\prefix)]}$ random
                \item Else if $\alpha$ contains values with prefixes $\prefix||0$, set $c_{\cwind [g_i(\prefix)]}=s_{i}'(\prefix||1) + t_{i}'(\prefix||1) $
                \item Else if $\alpha$ contains values with prefixes $\prefix||1$, set $c_{\cwind [g_i(\prefix)]}=s_{i}'(\prefix||0) + t_{i}'(\prefix||0) $
                    \item For $j\in[2]$:
                    \begin{itemize}
                        \item If $\alpha$ contains a value with prefix $\prefix||j$, set $m_{\cwind [g_i(\prefix)]}(j) = u_{i}'(\prefix||j) +v_{i}'(\prefix||j) + 1 + g_{i+1}(\prefix||j)$ and $p_{\cwind [g_i(\prefix)]}(j) = q_{i}'(\prefix||j) +r_{i}'(\prefix||j) + g_{i+1}(\prefix||j)$
                        \item Else, set $m_{\cwind [g_i(\prefix)]}(j) = u_{i}'(\prefix||j) +v_{i}'(\prefix||j) $ and $p_{\cwind [g_i(\prefix)]}(j) = q_{i}'(\prefix||j) +r_{i}'(\prefix||j)$
                    \end{itemize}
            \end{itemize}
        \item For all indeces $\ell$ that have not been set yet, set $c_{\ell}$ to a new random sample, and set $m_{\ell}(j)$ and $p_{\ell}(j)$ to random bits for all $j\in[2]$.
        \item Parse $CW^i = c_{0} || m_{0}(0) ||  p_{0}(0) ||  m_{0}(1) ||  p_{0}(1) ||  ... $\\
        $||  c_{\const k-1} || m_{\const k-1}(0) || p_{\const k-1}(0) || m_{\const k-1}(1) ||  p_{\const k-1}(1)$
        \item Group $\alpha$ by entries with the same $(i-1)$-bit prefix $\prefix$. For each group: For $j\in[2]$: If $\alpha$ contains a value with prefix $\prefix||j$: %
        \begin{itemize}
            \item Denote $\cwind = \hashes_i(\prefix)$.
            \item Set $s_{i}(\prefix||j)\leftarrow s'_{i}(\prefix||j) + u_{i-1 }(\prefix) \cdot c_{\cwind [0 ]} + q_{i-1 }(\prefix) \cdot c_{\cwind [1]}$ and $t_{i}(\prefix||j)\leftarrow t'_{i}(\prefix||j) + v_{i-1 }(\prefix) \cdot c_{\cwind [0]} + r_{i-1}(\prefix) \cdot c_{\cwind [1]}$  %
                \item Set  \\ $u_{i }(\prefix||j)\leftarrow u_{i }'(\prefix||j) +  u_{i-1}(\prefix)   m_{\cwind [0] }(j) + q_{i-1}(\prefix)  m_{\cwind [1] }(j)$ \\ $v_{i }(\prefix||j)\leftarrow v_{i }'(\prefix||j) +  v_{i-1}(\prefix)   m_{\cwind [0] }(j) + r_{i-1}(\prefix)  m_{\cwind [1] }(j)$ %
                \item Set \\ $q_{i }(\prefix||j)\leftarrow q_{i }'(\prefix||j) +  u_{i-1}(\prefix) \cdot p_{\cwind [0] }(j) +   q_{i-1}(\prefix) \cdot p_{\cwind [1] }(j)$,  \\ $r_{i }(\prefix||j)\leftarrow r_{i }'(\prefix||j) +  v_{i-1}(\prefix) \cdot p_{\cwind [0] }(j) + r_{i-1}(\prefix) \cdot p_{\cwind [1] }(j)$
        \end{itemize}
    \item Return $CW^i,s_i,t_i,u_i,v_i,q_i,r_i$
\end{enumerate}
\end{description}
\end{Boxfig}

\begin{Boxfig}{Eval evaluates one path $x$ given a DPF key $k_b$ corresponding to server $b$ for $k$-block-sparse vectors with block size $B$, where $G$ and $G'$ are PRGs that each take an input of size $\lambda$ bits and output a bit string of length $2(\lambda+2)$ and $\blocksize\log|\mathbb{G}|$, respectively. $g$ defines which correction word will be applied in the first layer. Also, let $convert$ be a function that takes as input a bit string of length $\log|\mathbb{G}|$ and outputs a group element in $\mathbb{G}$.}{evalConst}{Eval, $k$-sparse DPF}
\begin{description}
    \item[$Eval(b,g,k_b,x,\blocksize,k)$]: 
    \begin{enumerate}
    \item Parse $k_0=s_{0}(0)||CW^1||CW^2||...||CW^{\depth +1}|| \hashes_1 || \dots || \hashes_{\depth +1}$.
    Let $u_{0}(0)=b\cdot(g==0)$ and $q_{0}(0)=b\cdot(g==1)$. %
    \item for $i$ from 1 to $\depth $:
    \begin{itemize}
        \item Denote $\evalprefix$ the $(i-1)$-bit prefix of $x$. Also denote $\cwind = \hashes_i(\evalprefix)$.
        \item Parse $CW^i = c_{0} || m_{0}(0) || p_{0}(0) ||  m_{0}(1) || $\\
        $p_{0}(1) ||  ... ||  c_{\const k-1} || m_{\const k-1}(0) ||  p_{\const k-1}(0) ||  m_{\const k-1}(1) ||  p_{\const k-1}(1)$
        \item Expand $\tau^i \leftarrow G(s_{i-1}(\evalprefix))$
            \item Set $CW^i_{\cwind [0]} = c_{\cwind [0]}||m_{{\cwind [0]}}(0)||p_{\cwind [0]}(0)||c_{\cwind [0]}||m_{\cwind [0]}(1)||p_{\cwind [0]}(1)$
            \item Set $CW^i_{\cwind [1]} = c_{\cwind [1]}||m_{{\cwind [1]}}(0)||p_{\cwind [1]}(0)||c_{\cwind [1]}||m_{\cwind [1]}(1)||p_{\cwind [1]}(1)$
            \item Compute $\tau^i = \tau^i \oplus u_{i-1}(\evalprefix)\cdot CW^i_{\cwind [0]} \oplus q_{i-1}(\evalprefix)\cdot CW^i_{\cwind [1]}$
        \item Parse $\tau^i = s_{i}(\evalprefix||0)||u_{i}(\evalprefix||0)||q_{i}(\evalprefix||0)||$\\
        $s_{i}(\evalprefix||1)||u_{i}(\evalprefix||1)||q_{i}(\evalprefix||1)\in\{0,1\}^{2(\lambda+k)}$
    \end{itemize}
    \item Denote $i=\depth +1$ and $\evalprefix$ the $(i-1)$-bit prefix of $x$. Also denote $\cwind = \hashes_i(\evalprefix)$.
    \item Parse $CW^{\depth +1} = c_{0}(0) ||  ... ||c_{0}(\blocksize-1) || ... ||  c_{\const k-1}(0) || ... ||  c_{\const k-1}({B-1}) $
    \item Expand and parse $s_{i}'(\evalprefix||0) ||...|| s_{i}'(\evalprefix||\blocksize-1) \leftarrow G'(s_{\depth }(\evalprefix))$%
    \item Convert $s_{i}'(\evalprefix||j) := convert(s_{i}'(\evalprefix||j) )$ for $j\in[\blocksize]$
        \item For $j\in[\blocksize]$, compute $s_i(\evalprefix||j) = s_i'(\evalprefix||j) + u_{i-1}(\evalprefix)\cdot c_{\cwind [0]}(j)+ q_{i-1}(\evalprefix)\cdot c_{\cwind [1]}(j)$
    \item Return $(-1)^b\cdot s_i(x)$
\end{enumerate}\end{description}
\end{Boxfig}

\section{Proofs of Validity}
\label{sec:proofs}

We construct an efficient proof-system that allows a client to prove that it shared a valid block-sparse DPF. The proof is divided into two components:
\begin{enumerate}
\item $k$ correction-bit sparse. The client proves that at most $k$ of the (secret shared) correction bits are non-zero.

\item $k$ block-sparse. Given that at most $k$ of the correction bits are non-zero, the client proves that there are at most $k$ non-zero blocks in the output.
\end{enumerate}

We detail these components below. The proof system is sound against a malicious client, but we assume semi-honest behavior by the servers.

\begin{theorem} \label{thm:verifiable-DPF}

The scheme of Theorem \ref{thrm:construction} can be augmented to be a verifiable DPF for
the same function family ($k$-block-sparse functions). The construction incurs an additional round of interaction between the client and the servers (this can be eliminated using the Fiat-Shamir heuristic). The soundness error is $(\poly(k)/2^{\lambda})$.

The additional cost for the proof (on top of the construction above) is $O(k \cdot d \cdot \lambda)$ communication, $(k \cdot \poly(d,\lambda))$ client work, $(k \cdot 2^d \cdot \poly(d,\lambda))$ server work for each server.
\end{theorem}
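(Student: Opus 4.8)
The plan is to build a zero-knowledge proof system on top of the $k$-block-sparse DPF of Theorem~\ref{thrm:construction}, proving the two claims in sequence: first that at most $k$ of the secret-shared correction bits are non-zero, and second that, conditioned on the correction-bit sparsity, the resulting output vector is $k$-block-sparse. For the first component I would have the servers locally compute additive shares of each correction bit (the $m_{\ell}(j)$ and $p_{\ell}(j)$ values occur in the key and, combined with the locally-computable expanded PRG bits, yield shares of the "effective" control bits at each node). Then I invoke an off-the-shelf sparsity proof from prior work (\cite{BonehBCGI21}, and the norm/degree-check machinery of \cite{RothblumOCT24}) on the secret-shared vector of correction bits: proving at most $k$ non-zeros costs $O(k\cdot d\cdot\lambda)$ communication via standard sketching, $k\cdot\poly(d,\lambda)$ client work, and $k\cdot 2^d\cdot\poly(d,\lambda)$ server work, matching the claimed bounds (the $2^d$ factor coming from the servers having to evaluate all tree nodes to form the shares).

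For the second component, the key observation I would exploit is the structural invariant proved in Appendix~\ref{app:construction_proof}: along any root-to-leaf path, if none of the correction bits applied on that path is "active", then the two servers' seeds $s_i(\cdot)$ and $t_i(\cdot)$ coincide, so after the final $G'$ expansion and the group-correction step the two output shares are equal and the block is zero. Thus $k$-block-sparsity of the output reduces to proving that the set of leaves reachable via active correction bits has size at most $k$ — which is exactly controlled by the correction-bit sparsity established in step one, together with a proof that the active correction bits form a consistent "path structure" in the cuckoo-hashed tree (each active bit at level $i$ points, via the public hash functions $\hashes_i$, to a node whose parent correction bit is also active). I would phrase this as a small interactive protocol: the client commits (via secret shares already in the key) to which correction word is active at each level, the servers use a random linear combination / polynomial-identity test to check that the claimed active set is downward-closed and has size $\le k$, and then check that all non-active branches genuinely cancel by testing a random linear combination of the $s_i - t_i$ differences is zero on the non-active subtree. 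The Fiat-Shamir transform collapses the extra round.

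The soundness analysis is where the $\poly(k)/2^{\lambda}$ error enters: each polynomial-identity / random-linear-combination test over $\mathbb{F}_q$ (or over $\{0,1\}^{\lambda}$ for the seed-difference checks) has soundness error $O(\text{degree}/2^{\lambda})$ or $O(1/|\mathbb{G}|)$, and we take a union bound over the $O(k)$ correction words and $O(d)$ levels; since $q,|\mathbb{G}| \ge 2^{\lambda}$ this yields $\poly(k,d)/2^{\lambda} = \poly(k)/2^{\lambda}$ after absorbing the $d = \log(\fulldim/\blocksize)$ factor (which is $\poly(k)$-dominated or can simply be folded into the stated polynomial). Zero-knowledge follows because all messages are either uniformly random field elements (shares, sketch randomness) or simulable from the Fiat-Shamir challenge, so a simulator for each server can generate a transcript from the leakage function of the underlying DPF together with fresh randomness.

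\textbf{The main obstacle} I anticipate is the second component: cleanly formalizing "the active correction bits induce a subtree of $\le k$ leaves" as a statement the servers can verify \emph{without learning the sparsity pattern}, given that the cuckoo hash functions introduce a layer of indirection between tree nodes and correction words (so the natural "parent of an active node is active" predicate must be translated through $\hashes_i$, and one must rule out a malicious client exploiting hash collisions to make a non-sparse vector look sparse). Getting this right requires carefully combining the cuckoo-hashing correctness guarantee (failure probability $\tilde O(1/k)$, already absorbed into the DPF's correctness error) with a soundness argument that a cheating client cannot do better than the honest assignment — I expect this to be the technically delicate part, and to lean heavily on the structural lemmas established for Theorem~\ref{thrm:construction} in the appendix.
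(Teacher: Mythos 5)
Your first component (proving $k$-sparsity of the correction bits via additional secret-shared unit vectors and an off-the-shelf sparsity proof) is essentially the paper's approach, though the paper is more concrete: the client sends $k$ extra 1-sparse DPFs over $\bitset^{2^{d+1}}$ whose sum is verified to equal the vector of \emph{final-layer} correction bits, using the proof system of \cite{CCS:BoyGilIsh16,BonehBCGI21}.

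Your second component diverges substantially and contains a genuine gap. You propose to verify a multi-level structural invariant (that the active correction bits form a downward-closed subtree through the cuckoo hash functions) and then test that ``$s_i - t_i$ cancels on the non-active subtree.'' This runs into a zero-knowledge problem you don't resolve: the servers cannot restrict a check to the non-active branches without learning which branches are active, i.e.\ the sparsity pattern. The paper sidesteps this entirely by introducing $\lambda$ auxiliary \emph{check-bits} appended to the last-layer PRG output and to each correction word; the honest client chooses correction-word check-bits so that the check-bit subtraction is zero at \emph{every} leaf (active or not), and the servers can therefore run a single random linear combination over all $2^d$ leaves with no conditioning on the pattern. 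Because the check-bits are pseudorandom and independent of the payload $\beta$, this leaks nothing. Moreover, the paper's key insight is that only the \emph{final layer} needs to be verified: if $s_d(x) = t_d(x)$ then the block at $x$ is zero, regardless of anything happening at internal nodes, so there is no need for the downward-closedness predicate you are wrestling with (which is in fact ill-defined under a malicious client who can set up arbitrary seeds at the root).

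You also miss the specific soundness issue the extra round of interaction is designed to fix. A malicious client could choose two distinct last-layer seeds whose PRG outputs agree on the $\lambda$ check-bit suffix but differ on the payload, making a non-zero block look zero. The paper's extra round has the servers send a pairwise-independent hash $h$ \emph{after} the client commits to the final-layer seeds; the client then computes last-layer correction words with respect to $h \circ G'$, so a suffix collision now occurs only with probability $2^{-\lambda}$ over the choice of $h$, and a union bound over seed-pairs gives the $\poly(k)/2^\lambda$ soundness error. Your protocol's interactive round is instead for a random-linear-combination challenge (which could be made non-interactive by public coins anyway), so it would not close this collision-based soundness hole. Your anticipated obstacle --- reasoning about cuckoo-hash indirection and path consistency --- is not where the actual technical difficulty lies.
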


\paragraph{Proving $k$-sparsity of the correction bits.} The secret shares for the correction bits are in $\bitset$ (the correction bit is ``on'' if these bit values are not identical). The client proves that the vector of $2 \cdot 2^d$ correction bits ($2^d$ pairs) is $k$-sparse, i.e. at most $k$ of the bits are non-zero. We use the efficient construction of 1-sparse DPFs from \cite{CCS:BoyGilIsh16}, which comes with an efficient proof system (the construction in \cite{BonehBCGI21} also handles malicious servers, but we do not treat this case here). The client sends $k$ 1-sparse DPFs (unit vectors over $\bitset^{2^{d+1}}$) whose sum equals the vector of correction bits: if these extra DPFs are indeed one-sparse and sum up to the vector of correction bits, then that vector must be $k$-sparse. We remark that we are agnostic to the field used for secret-sharing these additional DPFs (the secret shares of the correction bits are treated as the 0 and the 1 element in the field being used). Soundness and zero-knowledge follow from the properties of the DPF of \cite{CCS:BoyGilIsh16}. The communication is $O(k \cdot d \cdot \lambda)$, the client runtime is $(k \cdot \poly(d,\lambda))$, and the server runtime is $(k \cdot 2^d \cdot \poly(d,\lambda))$. The soundness error is $O(k/2^{\lambda})$.

\paragraph{Proving $k$-sparsity of the output blocks.} Given that at most $k$ of the correction bits are non-zero, the client needs to prove that there are at most $k$ non-zero blocks in the output. Consider the final layer of the DPF tree: in a zero block, the two PRG seeds held by the servers are identical, whereas in a non-zero block, they are different. Rather than expanding the seeds to $B$ group elements (as in the vanilla construction above), we add another $\lambda$ bits to the output, and we also add $\lambda$ corresponding bits to each correction word. We refer to these as the check-bits of the PRG outputs / correction words, and we refer to the original outputs (the $B$ group elements) as the payload bits. In the zero blocks the check-bits of the outputs should be identical: subtracting them should results in a zero vector. In each non-zero blocks, the check-bits of the (appropriate) correction word are chosen so that subtracting them from the (subtraction of the) check-bits of the PRG outputs also results in a zero vector. Thus, in our proof system, the servers verify that, in each block, the appropriate subtraction of the check-bits in the two PRG outputs together with the check-bits of the appropriate correction word (if any) are zero. Building on the notation of Section \ref{sec:ksparsedf}, taking $x\in\{0,1\}^{d}$ to be a node in the final layer of the DPF tree, and taking $s^{\chk}(x)$ and $t^{\chk}(x)$ to be the check-bits of the PRG output on the node $x$ for the two servers (respectively) and taking $c^{\chk}_m$ to be the check-bits of the $m$-th correction word, the servers check that for each node $x$ in the final layer:
\begin{align*} %
    0 = s_d^{\chk}(x) - t_d^{\chk}(x) & - \left( (u_d(x) + v_d(x)) \cdot c^{\chk}_{f_{d+1}(x)[0]} \right) \\
    &- \left( (q_d(x) + r_d(x)) \cdot c^{\chk}_{f_{d+1}(x)[1]} \right).
\end{align*}
To verify that equality to zero holds for all $x$ simultaneously, the servers can take a random linear combination of their individual summands and check only that the linear combination equals zero (this boils down to computing a linear function over their secret shared values, the random linear combination can be derandomized by taking the powers of a random field element). This only requires exchanging a constant number of field elements. 

This part of the proof maintains zero knowledge. The new information revealed to the servers are the check-bits in the correction words. The concern could be that these expose something about the locations or the values of non-zero blocks. However, the check bits of the correction words are pseudorandom even given all seeds held by a single server, and given all the payload values of all correction words, and thus each server's view can be simulated and zero-knowledge is maintained.

The above construction is appealing, but it is not quite sound: intuitively, given that there are only $k$ active correction bits (within the $k$ correction bit pairs), the correction words are only applied to at most $k$ of the blocks. If the check passes, this means that the  check bits of all but at most $k$ of the blocks had to have been 0 (except for a small error probability in the choice of the linear combination). However, it might be the case that the check-bits are 0, but the payload is not: i.e. we have two PRG seeds whose outputs are identical in their $\lambda$ bit suffix, but not in the prefix. One way to resolve this issue would be by assuming that the PRG is injective (in its suffix), or collision intractable. We prefer not to make such assumptions, and instead use an additional round of interaction to ensure that soundness holds (the interaction can be eliminated using the Fiat-Shamir heuristic). The interactive construction is as follows:
\begin{enumerate}
    \item The client sends all information for the DPF except the correction words (payload and check bits) for the last layer ($B$ group elements and $\lambda$ bits per node).

    \item The servers choose a pairwise-independent function $h$ mapping the range of the last layer's PRG to the same space and reveal it to the client.

    \item The client computes the correction words for the last layer, where the PRG used for that layer is the composition $(h \circ G')$ (the pairwise independent hash function applied to the PRG's output).
\end{enumerate}

Zero-knowledge is maintained because $(h \circ G')$ is still a PRG. Soundness now holds because the seeds for the final layer are determined before $h$ is chosen. The probability that two non-identical seeds collide in their last $\lambda$ bits, taken over the choice of $h$, is $2^{-\lambda}$. We take a union bound over all the seed-pairs and soundness follows.
 
\section{Additional Details on Experiments}
\label{app:experiment_details}
We provide more details about our experimental results in Section~\ref{sec:exp_setup} and provide additional experiments and plots in Section~\ref{sec:add-exp}.

\subsection{Experimental setup for private model training}
\label{sec:exp_setup}
Here we provide the full details for our private training experiments that were used to produce the plots in Figure~\ref{fig:plot-privacy-tradeoff}.

\paragraph{Noise comparison for DP-SGD (\Cref{fig:sigma_dpsgd}).} For our experiment that compares the standard deviation of the noise of the Gaussian mechanism and our approach, we consider a private training setup where we have a model of size $\fulldim = 2^{20} \approx  10^6$ and number of data points $n = 6 \cdot 10^5$. We user a standard setting of the parameters of DP-SGD where we have clipping norm $1.0$, $\eps = 1.0$, $\delta = 10^{-6}$, and number of epochs is $10$. We calculate the standard deviation of the noise required for the Gaussian mechanism using the dp-accounting library in python. For our approach, we calculate the standard deviation based on our description in the paper using the accounting techniques of~\cite{FeldmanS25}. For our method, we fix the communication cost $k \cdot B = 32768$ and 
then plot the ratio of the noise needed for our method over the noise of the Gaussian mechanism as a function of the block size $B$. We repeat this for different values of batch size in $\{512, 1028, 4096, 6 \cdot 10^5 \}$ and report the results in~\Cref{fig:sigma_dpsgd}.

\paragraph{MNIST experiment (\Cref{fig:mnist}).} For MNIST, 
we follow the experimental setup of~\cite{AsiFNNT23} and train a neural network with $69050$ parameters (see full description in~\Cref{tab:conv}). We run DP-SGD with fixed learning rate $0.1$, momentum $0.9$, and batch size $600$ for $10$ epochs. To privatize the gradients at each batch, we clip each individual gradient to have $\ell_2$ norm at most $1$ and use the standard Gaussian mechanism or our partitioned subsampling approach to release private gradients. We calculate the standard deviation of the noise required for the Gaussian mechanism using the dp\_accounting library in python. We set our privacy parameters to be $\eps = 2.0$ and $\delta = 10^{-6}$. For our partitioned subsampling approach, we use a block size $B = 920$ and number of blocks $k = 20$, and clip the $\ell_2$ norm of each block to be at most $L = 1.02 \sqrt{B/\fulldim}$ where $\fulldim = 69050 $ is the number of parameters in the model. We repeat this process $10$ times, each time recoding the accuracy per epoch for each method, and plot the median accuracy with 90\% confidence intervals in~\Cref{fig:mnist}.

\renewcommand{\arraystretch}{1}
\begin{table}[h]
\begin{center}
\begin{tabular}{ll}
            \hline
            \textbf{Layer} & \textbf{Parameters} \\
            \hline
            Convolution $+\tanh$ & 16 filters of $8 \times 8$, stride 2, padding 2 \\
            Average pooling & $2 \times 2$, stride 1 \\
            Convolution $+\tanh$ & 32 filters of $4 \times 4$, stride 2, padding 0 \\
            Average pooling & $2 \times 2$, stride 1 \\
            Fully connected $+\tanh$ & 32 units \\
            Fully connected $+\tanh$ & 10 units \\
            \hline
\end{tabular}
\end{center}
\caption{Architecture for convolutional network model.}
\label{tab:conv}
\end{table}
\renewcommand{\arraystretch}{3}

\paragraph{CIFAR10 experiment (\Cref{fig:cifar}).} For CIFAR, 
we produce CLIP embedding (using the version ViT-B/32) for the CIFAR10 images and train a simple two-layer neural network with $66954$ parameters: our network is a sequence of two fully connected layers: the first has dimensions $512 \times 128$ and the second $128 \times 10$. Then, we run DP-SGD with initial learning rate $4.0$, momentum $0.9$, weight decay $5 \cdot 10^{-4}$ and full batch for $10$ epochs. 
We use a stepLR scheduler for the learning rate which reduces the learning rate by a factor of $0.9$ every $5$ epochs.
Similarly to MNIST, we use the Gaussian mechanism and our partitioned subsampling approach to release private gradients, where we have clipping norm $1$ for the gradients, $\eps = 2.0$ and $\delta = 10^{-6}$. For our partitioned subsampling approach, we use a block size $B = 920$ and number of blocks $k = 25$, and clip the $\ell_2$ norm of each block to be at most $L = 1.02 \sqrt{B/\fulldim}$ where $\fulldim = 66954 $ is the number of parameters in the model. We repeat this process $10$ times, each time recoding the accuracy per epoch for each method, and plot the median accuracy with 90\% confidence intervals in~\Cref{fig:cifar}.

All of our experiments were run locally on a Macbook Pro equipped with Apple M1 Pro chip (with 10 cores), and 32GB RAM. The time for each epoch depends on the mechanism, dataset choice and batch size. For the Gaussian mechanism, each epoch takes from a few seconds up to 30 seconds, while for the partitioned subsampling approach each epochs takes about $1$-$2$ minutes.

\subsection{Additional experiments}
\label{sec:add-exp}

In this section, we present additional experimental results in different regimes than the ones presented in the main paper.
We begin in~\cref{fig:apdx-plot-communication-comp} where we compare our approach to the Gaussian mechanism and plot the error for estimating the sum of unit vectors in different dimensions. We can see that even for small communication complexity, sometime a factor of $32$ smaller than the dimension, our approach becomes competitive with the Gaussian mechanism. \cref{fig:apdx-plot-communication-comp-sample-comp} presents a similar plot where we show that the same behavior holds for different number of samples $n$.

\begin{figure}
    \centering
     \begin{subfigure}{0.32\textwidth}
        \centering
        \includegraphics[width=\textwidth]{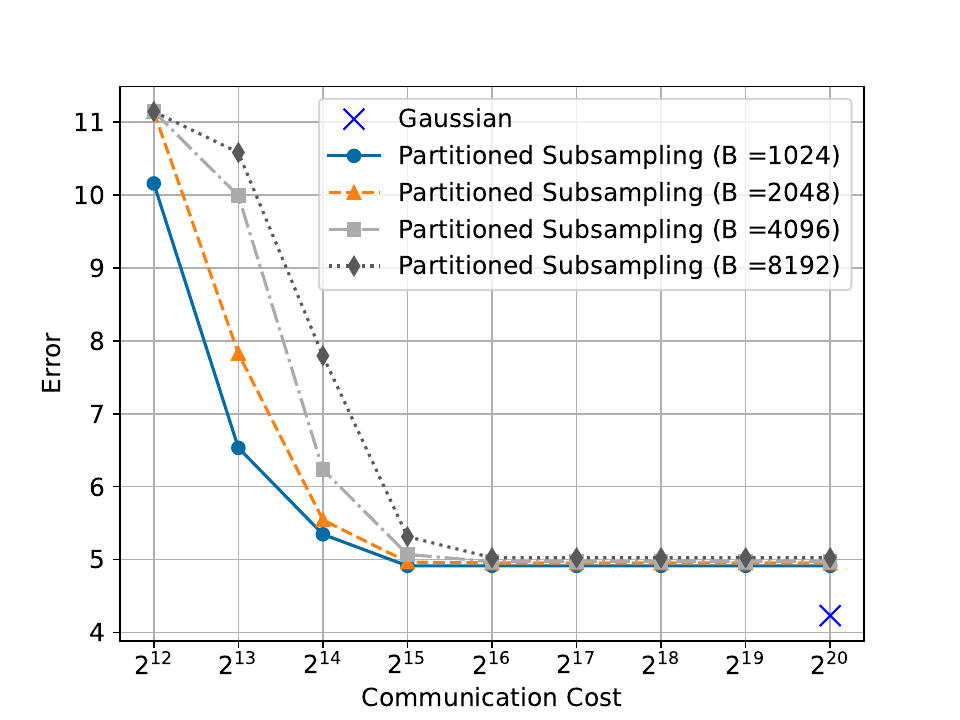}
        \caption{}
    \end{subfigure}
    \begin{subfigure}{0.32\textwidth}
        \centering
        \includegraphics[width=\textwidth]{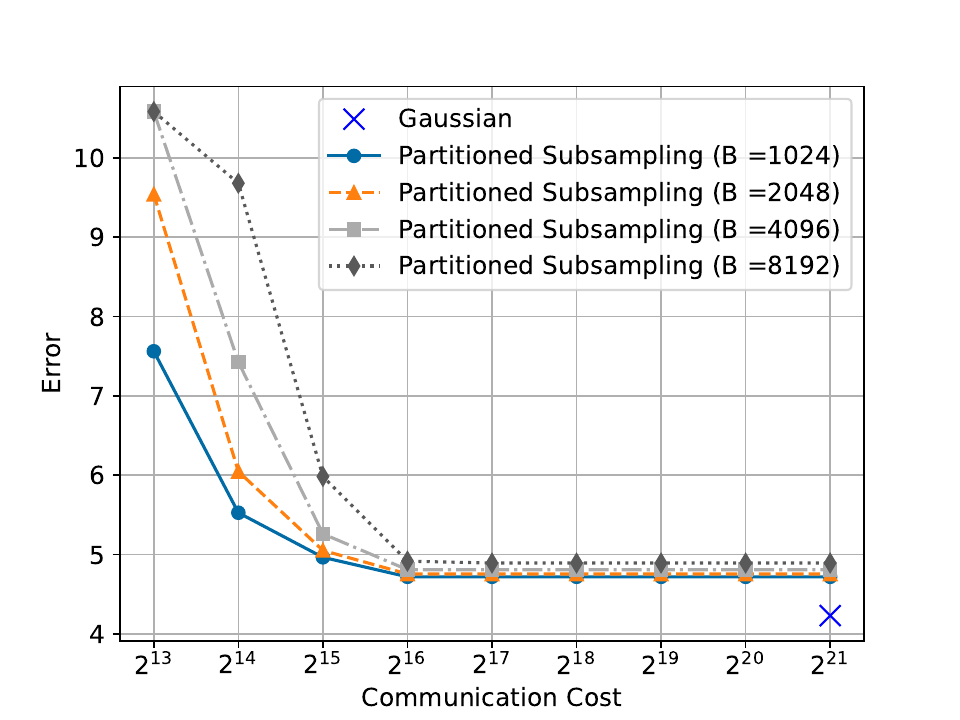}
        \caption{}
    \end{subfigure}
    \begin{subfigure}{0.32\textwidth}
        \centering
        \includegraphics[width=\textwidth]{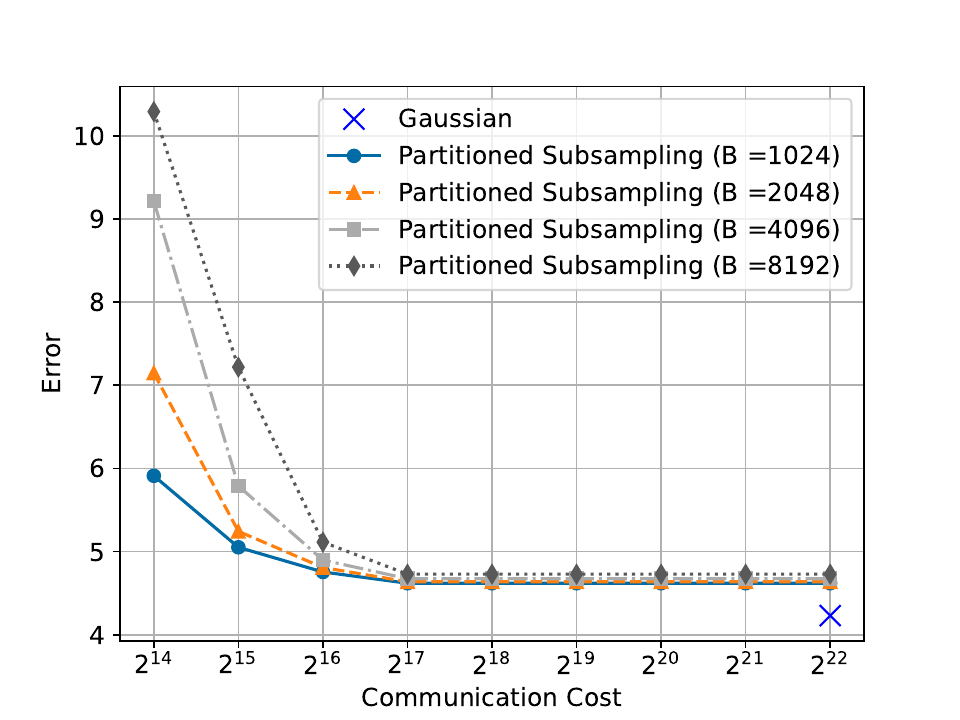}
        \caption{}
    \end{subfigure}
    \caption{\small The trade-off between the standard deviation of the error (per coordinate) and per-client communication $C = k B$, when computing the sum of $n=10^5$ vectors with dimension (a) $\fulldim = 2^{20}$, (b) $\fulldim = 2^{21}$, and (c) $\fulldim = 2^{22}$, with $(1.0, 10^{-6})$-DP. The blue 'x' shows the baseline approach of sending the whole vector.}%
    \label{fig:apdx-plot-communication-comp}
\end{figure}

\begin{figure}
    \centering
     \begin{subfigure}{0.32\textwidth}
        \centering
        \includegraphics[width=\textwidth]{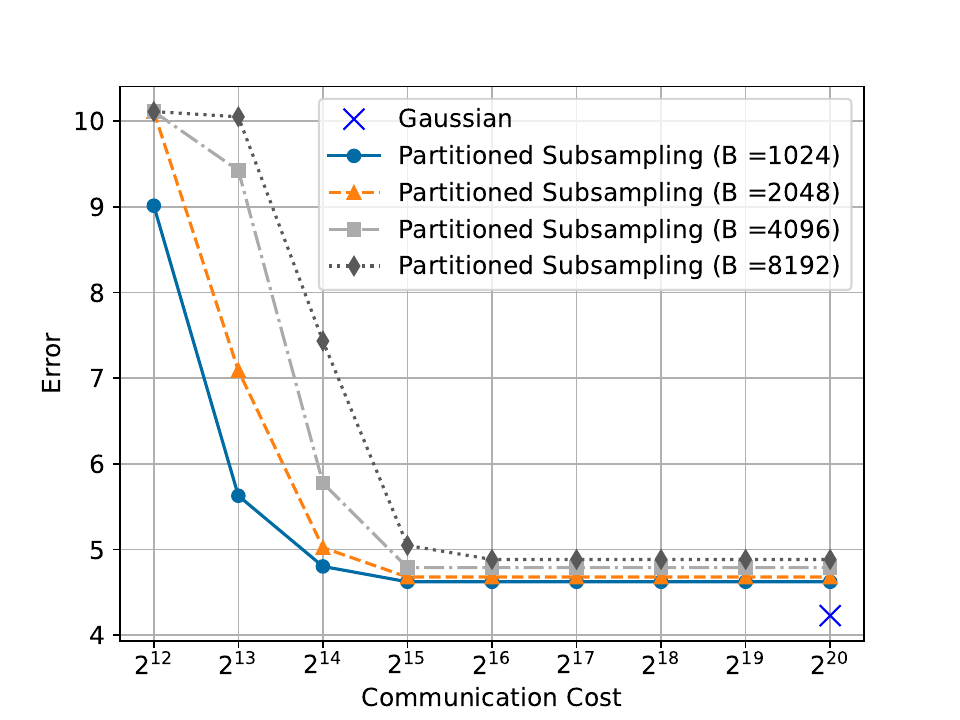}
        \caption{}
    \end{subfigure}
    \begin{subfigure}{0.32\textwidth}
        \centering
        \includegraphics[width=\textwidth]{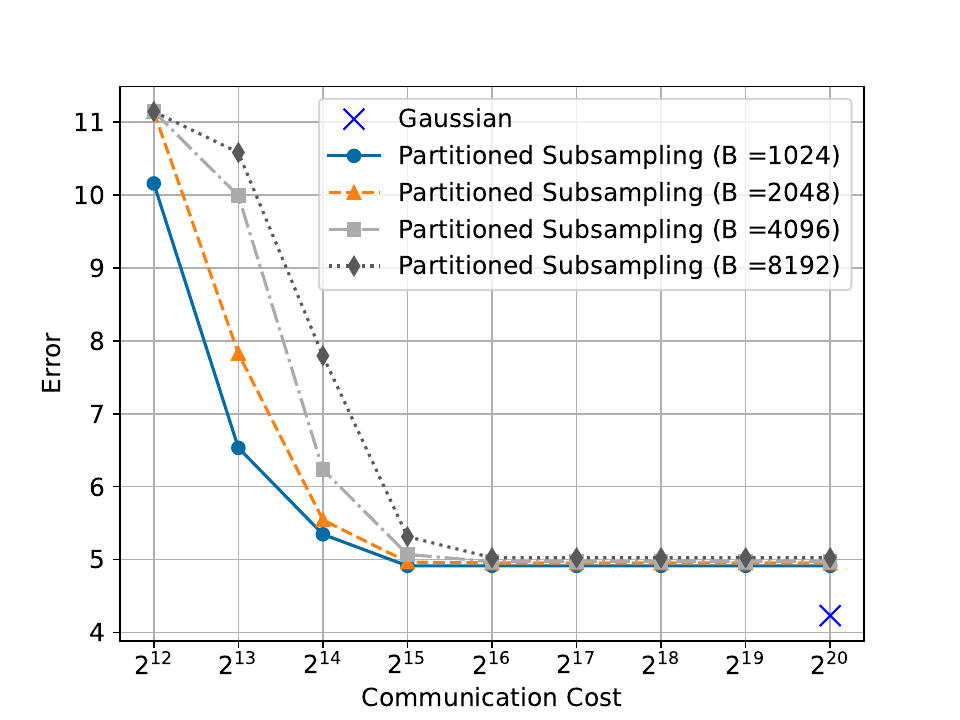}
        \caption{}
    \end{subfigure}
    \begin{subfigure}{0.32\textwidth}
        \centering
        \includegraphics[width=\textwidth]{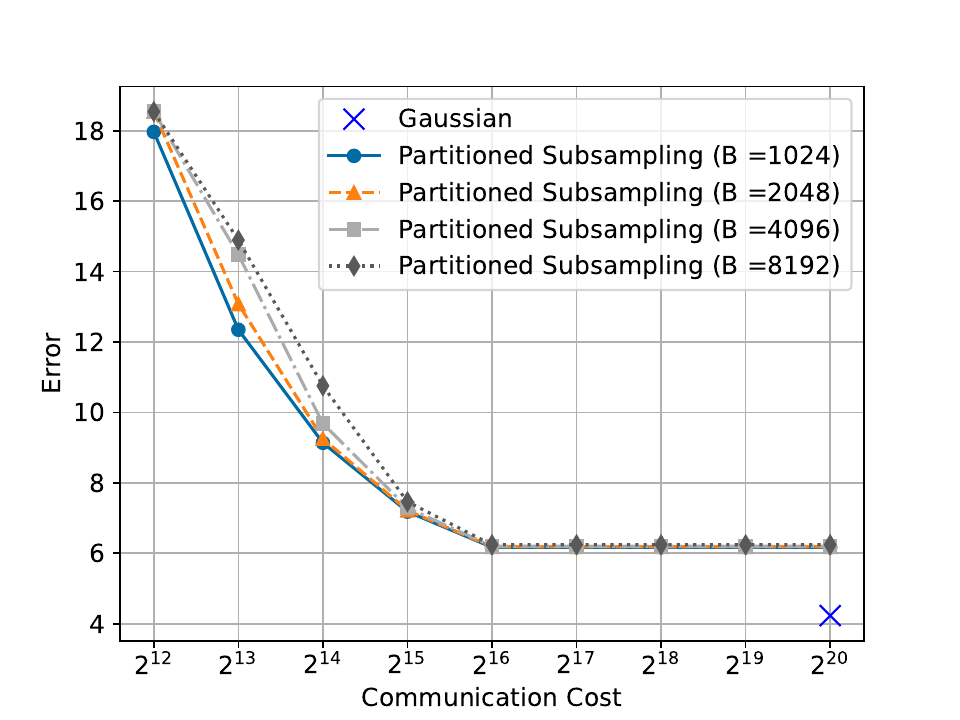}
        \caption{}
    \end{subfigure}
    \caption{\small The trade-off between the standard deviation of the error (per coordinate) and per-client communication $C = k B$, when computing the sum of $\fulldim = 2^{20}$-dimensional vectors with sample size (a) $n = 10^4$, (b) $n = 10^5$, and (c) $n = 10^6$, with $(1.0, 10^{-6})$-DP. The blue 'x' shows the baseline approach of sending the whole vector.}%
    \label{fig:apdx-plot-communication-comp-sample-comp}
\end{figure}

In \cref{fig:apdx-plot-trunc-poiss} we compare the performance of the partitioned subsampling scheme and the truncated Poisson, where the plots show that each method is favorable in different regimes: the truncated Poisson obtains better error if more communication is allowed, getting closer to the error of the Gaussian mechanism.  This is partly due to the analysis of partitioned subsampling building on RDP analysis, which even for the Gaussian mechanism yields standard deviation bounds slightly larger than the analytic Gaussian mechanism. The truncated Poisson analysis uses the tighter PRV accounting.

\begin{figure}
    \centering
    \begin{subfigure}{0.46\textwidth}
        \centering
        \includegraphics[width=\textwidth]{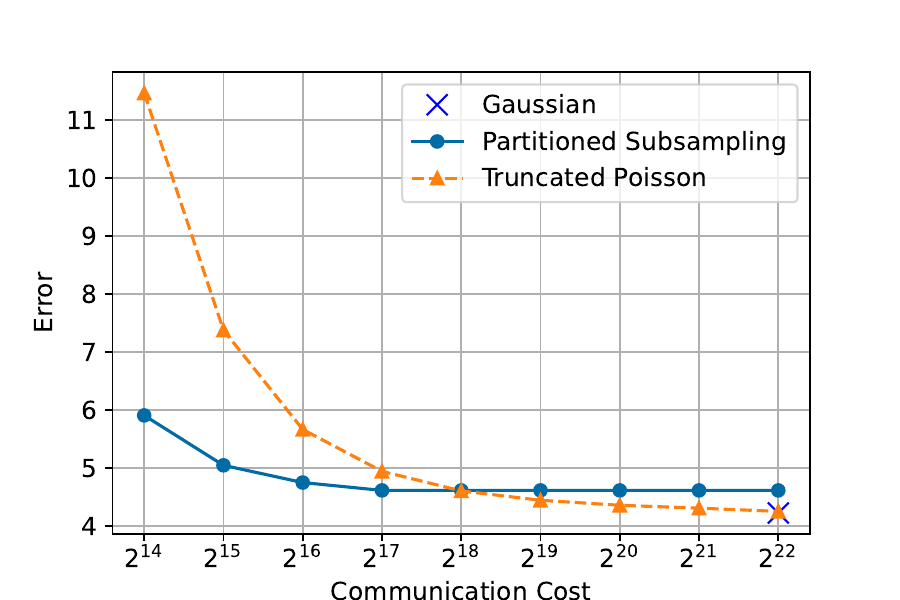}
        \caption{}
    \end{subfigure}
    \begin{subfigure}{0.46\textwidth}
        \centering
        \includegraphics[width=\textwidth]{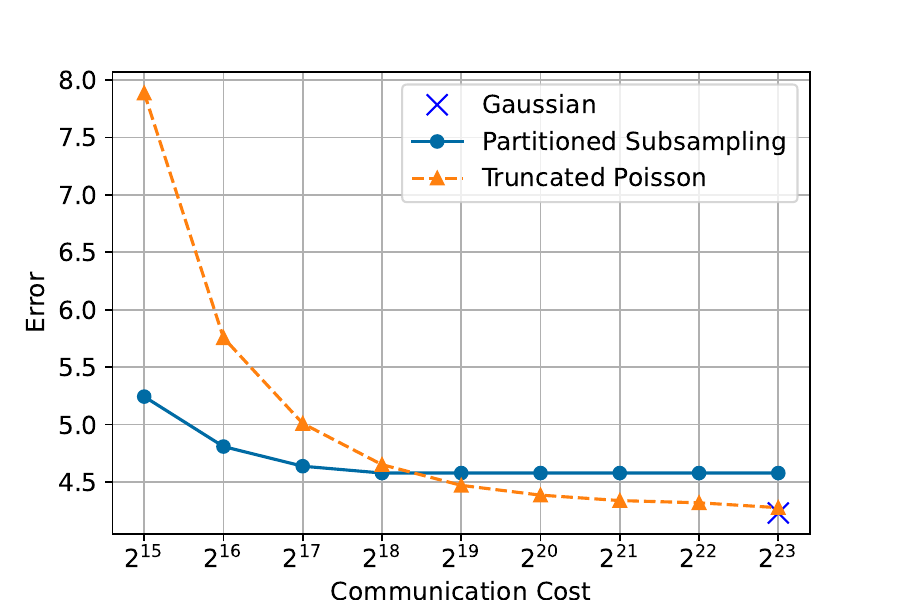}
        \caption{}
    \end{subfigure}
    \caption{\small
    The trade-off between the standard deviation of the error (per coordinate) and per-client communication for the Partitioned Subsampling scheme and the Truncated Poisson. These plots are for aggregating $n=10^5$ vectors with dimension (a) $\fulldim = 2^{22}$, (b) $\fulldim = 2^{23}$, block size $B = 2^{10}$ and $(1.0, 10^{-6})$-DP.   }
    \label{fig:apdx-plot-trunc-poiss}
\end{figure}

Furthermore, we evaluate our method for estimating the mean of real data. Specifically, we compare our method to the Gaussian mechanism for estimating the average gradient in a particular epoch during the training of a model over the CIFAR10 dataset with CLIP embeddings. We save $1024$ gradients, each of dimension $\fulldim = 66954$, and employ our alternative method to estimate the average gradient under $(2.0,10^{-6})$-DP.
We present the results in~\cref{fig:apdx-plot-trunc-err-grad}. These results corroborate our findings in the main paper for synthetic data (see \cref{fig:error_v_comm}), demonstrating that the same behavior is observed for realistic data.

\begin{figure}
    \centering
     \begin{subfigure}{0.46\textwidth}
        \centering \includegraphics[width=\textwidth]{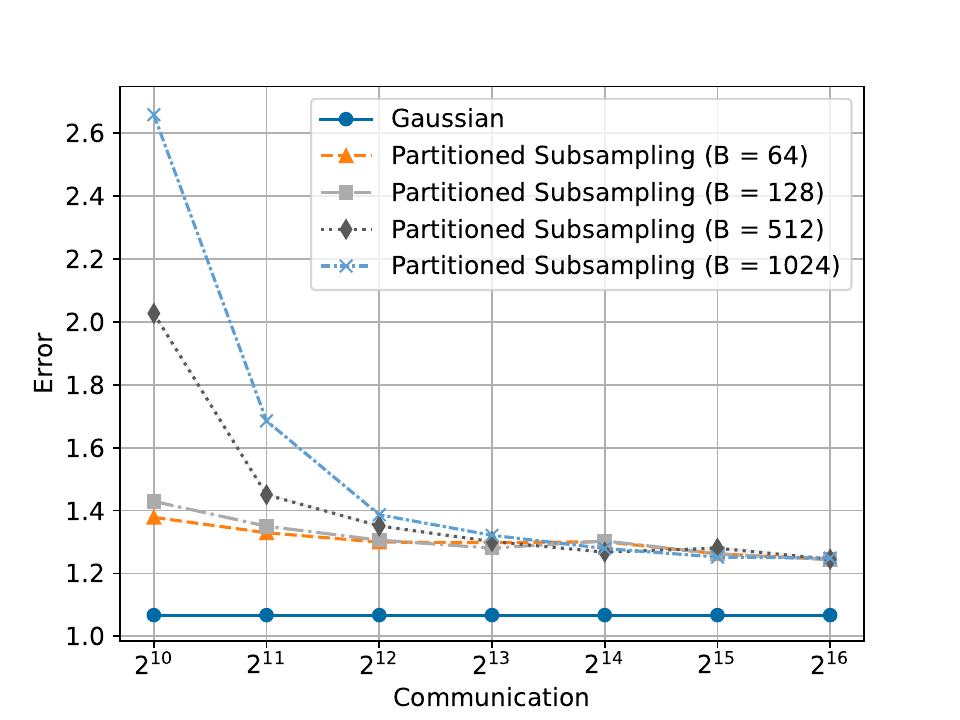}
        \caption{}
    \end{subfigure}
    \begin{subfigure}{0.46\textwidth}
        \includegraphics[width=\textwidth]{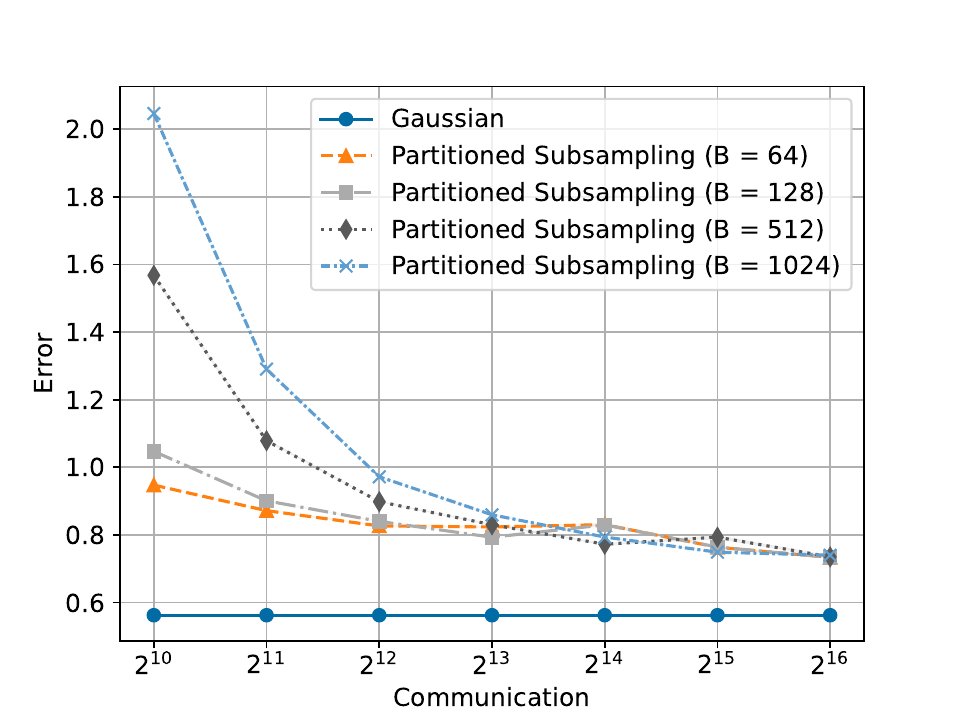}
        \centering
        \caption{}
    \end{subfigure}

    \caption{\small Error for estimating the average gradient during private model training of the CIFAR10 experiment where the model has $\fulldim = 66954$ parameters, comparing the Gaussian mechanism and partitioned subsampling for different communication costs, for (a) $\eps = 1$ and (b) $\eps = 2$. } %
    \label{fig:apdx-plot-trunc-err-grad}
\end{figure}

Finally, in~\cref{fig:apdx-plot-clip-rn50}, we present additional experiments for private model training on CIFAR10 using CLIP embeddings, employing the ResNet50 architecture. This experiment adheres to the same setup and parameters as~\cref{fig:cifar}. Furthermore, we experiment with a batch size of 1024 and a learning rate of 0.5 (while maintaining all other parameters at the same values).
We run each method 5 times and report the median accuracy as a function of epoch.
Our plots demonstrate that our method performs similarly to the Gaussian mechanism for a small batch size and full batch, with a significant reduction in communication.

\begin{figure}
    \centering
     \begin{subfigure}{0.46\textwidth}
        \centering \includegraphics[width=\textwidth]{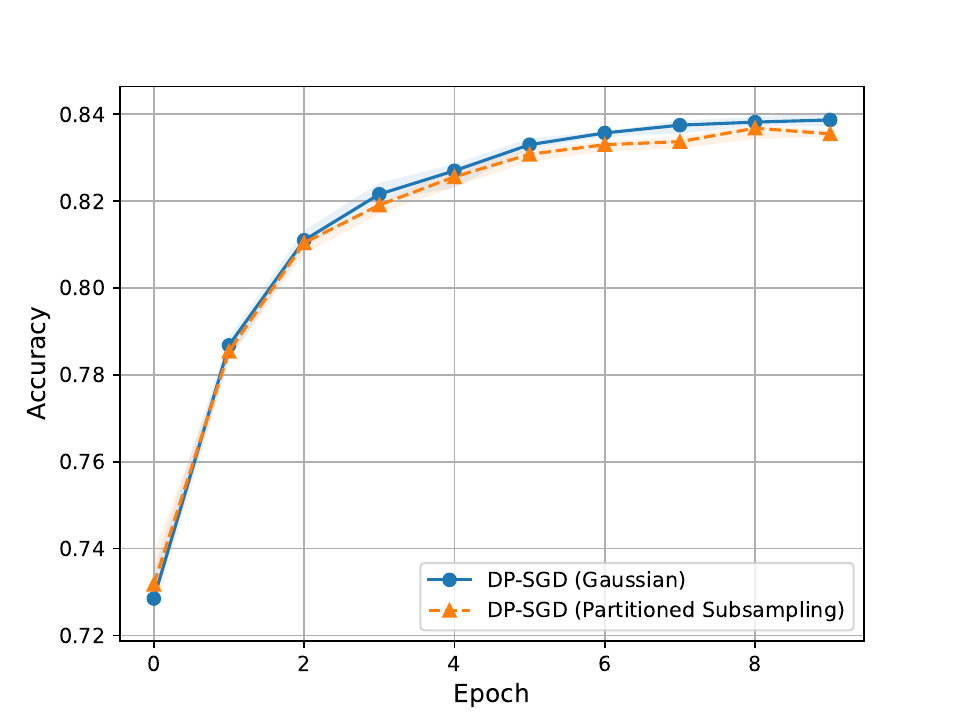}
        \caption{}
    \end{subfigure}
    \begin{subfigure}{0.46\textwidth}
        \includegraphics[width=\textwidth]{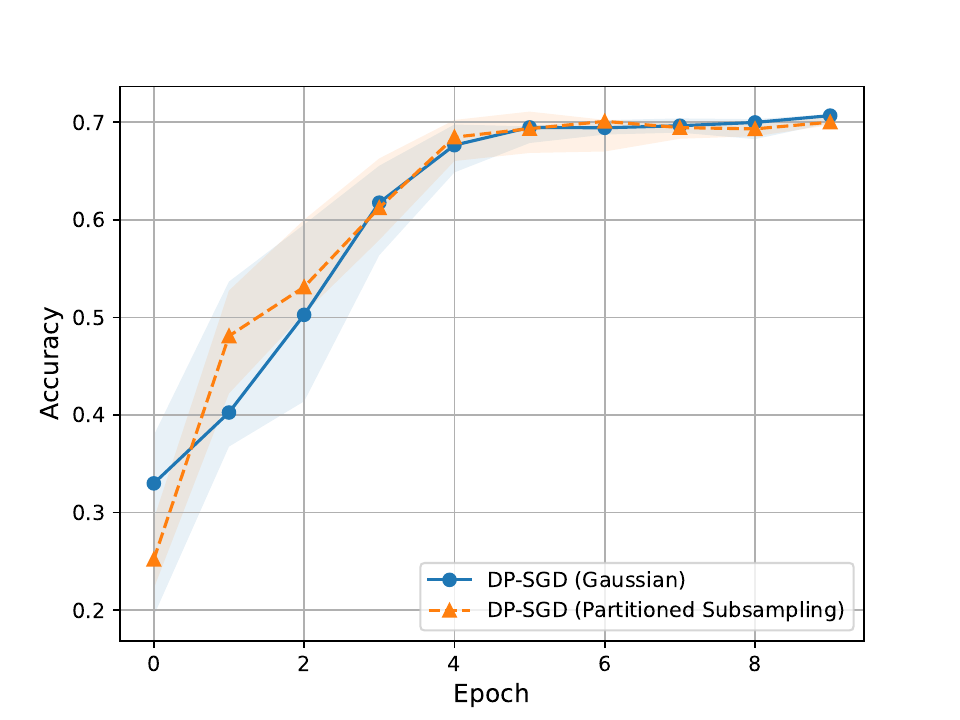}
        \centering
        \caption{}
    \end{subfigure}

    \caption{\small Comparison between PREAMBLE and the Gaussian mechanism for private model training over the CIFAR10 dataset using CLIP embeddings (version RN50 or Resnet50). We plot 90\% confidence intervals for (a) batch size of $1024$ and (b) full batch. }
    \label{fig:apdx-plot-clip-rn50}
\end{figure}

\end{document}